\documentclass[11pt,a4paper]{amsart}
\usepackage{amssymb,amsmath,mathtools}
\usepackage{cancel}
\usepackage{palatino}
\usepackage{longtable}
\usepackage{cancel}
\usepackage[all]{xy}
\usepackage{tikz-cd}
\usepackage{color}
\usepackage{accents}
\numberwithin{equation}{section}
\newtheorem{theorem}{Theorem}[section]     
\newtheorem{definition}[theorem]{Definition}

\newtheorem{proposition}[theorem]{Proposition}
\newtheorem{lemma}[theorem]{Lemma}
\newtheorem{rmk}[theorem]{Remark}
\newtheorem{corollary}[theorem]{Corollary}

\newtheorem{remark}[theorem]{Remark}

\def\dna{d_{\nabla}}

\def\na{\nabla}

\def\d{\partial}

\def\f{\frac}
\def\dna{d_{\nabla}}

\def\na{\nabla}

\newcommand{\eqa}{\begin{eqnarray}}
\newcommand{\eeqa}{\end{eqnarray}}
\newcommand{\beq}{\begin{equation}}
\newcommand{\eeq}{\end{equation}}

\newcommand{\derx}{\mathrm{Der}^{\qp}(\hm A)}

\newcommand{\diff}[2]{\frac{\partial #1}{\partial #2}}

\newcommand{\qa}{\alpha}
\newcommand{\qb}{\beta}
\newcommand{\qd}{\delta}
\newcommand{\qg}{\gamma}
\newcommand{\qs}{\sigma}
\newcommand{\qt}{\tau}
\newcommand{\qth}{\theta}
\newcommand{\qe}{\varepsilon}
\newcommand{\qz}{\zeta}
\newcommand{\qp}{\partial}

\newcommand{\qo}{\omega}

\newcommand{\Qg}{\Gamma}
\newcommand{\ql}{\lambda}

\newcommand{\Qd}{\Delta}

\newcommand{\vard}[2]{\frac{\delta #1}{\delta #2}}

\newcommand{\hm}[1]{\hat{\mathcal #1}}

\newcommand{\kk}[1]{\left(#1\right)}

\newcommand{\fk}[2]{\left[#1, #2\right]}
\newcommand{\xb}[1]{\underaccent{\bar}{#1}}

\newcommand{\jja}{\mathcal{\hat A}}
\newcommand{\jjf}{\mathcal{\hat F}}

\numberwithin{equation}{section}

\begin{document}
\title[Generalised Hamiltonian Structures]{Generalised (bi-)Hamiltonian structures of hydrodynamic type and (bi-)flat F-manifolds}

\author{Paolo Lorenzoni}
\address{P.~Lorenzoni:\newline Dipartimento di Matematica e Applicazioni, Universit\`a degli Studi di Milano-Bicocca, 
Via Roberto Cozzi 53, I-20125 Milano, Italy and INFN sezione di Milano-Bicocca}
\email{paolo.lorenzoni@unimib.it}

\author{Zhe Wang}
\address{Z.~Wang:\newline RIKEN Center for Interdisciplinary Theoretical and Mathematical Sciences (iTHEMS), RIKEN, Wako 351-0198, Japan}
\email{zhe.wang.aa@riken.jp}

\date{}

\maketitle

\begin{abstract}
 We introduce the notions of generalised (bi-)Hamiltonian structures which generalise naturally the (bi-)Hamiltonian structures of evolutionary partial differential equations. In the hydrodynamic case, these structures are characterised in terms of geometric data. 
 Furthermore, we show that a generalised (bi)-Hamiltonian structure of hydrodynamic type can be associated with any (bi-)flat F-manifold, and it is compatible with the corresponding principal hierarchy.
\end{abstract}
\tableofcontents

\section{Introduction}
The notion of (local) Hamiltonian operator
\begin{equation}\label{PBHT}
P^{\qa\qb}=g^{\qa\qb}(u(x))\partial_x+\Gamma^{\qa\qb}_{\qg}(u(x))u^\qg_x
\end{equation}
of hydrodynamic type has been introduced by Dubrovin and Novikov in \cite{DN84} (see also \cite{DN89}), here the indices run over $1,\dots, n$ for some fixed $n$ and throughout the whole paper we assume the summation convention over repeated Greek indices. The variable $u(x)$ here is an abbreviation of field variables $u^1(x),\dots,u^n(x)$.

They first observed that the form of such operators is invariant with respect to  change of the field variables. It turns out that the functions $g^{\qa\qb}(u)$ transform as the components of a tensor field of type $(2,0)$ and, assuming $g^{\qa\qb}$  non-degenerate (i.e. assuming $\text{det}(g)\ne 0$) and denoting by $g_{\qa\qb}$ the entries of the inverse, the quantities $\Gamma^{\qg}_{\qa\qb}=-g_{\qa\qs}\Gamma^{\qs\qg}_{\qb}$ transform as the Christoffel symbols of an affine connection. Therefore, properties of such an operator can be studied through differential geometry of these quantities. Imposing the condition
 that the bracket between functionals $F=\int f(u,u_x,...)\,dx$ and $G=\int g(u,u_x,...)\,dx$ defined by
 \begin{equation}
 \{F,G\}=\int\frac{\delta F}{\delta u^\qa}P^{\qa\qb}\frac{\delta G}{\delta u^\qb}\,dx
 \end{equation}
 is a Poisson bracket, they proved that in the non-degenerate case  the tensor field with components $g^{\qa\qb}$ is symmetric, and thus its inverse defines a metric $g$. Furthermore, the affine connection with Christoffel symbols  $\Gamma^{\qg}_{\qa\qb}$ is flat and coincides with the Levi-Civita connection of $g$.

 Bi-Hamiltonian structures of hydrodynamic type have been introduced by Dubrovin in \cite{du93} (see also \cite{du97}). They are defined by pair $(P_0,P_1)$ of compatible Hamiltonian operators of  hydrodynamic type where compatibility is defined in the usual Magri's sense  meaning that the pencil $P_{z}=P_1-z P_0$ is required to be Poisson for any value of the
 parameter $z$. This leads immediately to the geometric counterpart of these structures: flat pencil of metrics. Indeed, due to Dubrovin-Novikov conditions, the pencil $g_{z}=g_1-z g_0$ of contravariant metrics defining $P_0$ and $P_1$ should be flat for any value of the parameter and the pencil of ``contravariant'' Christoffel symbols $\Gamma^{\qa\qb}_{(1)\qg}-z\Gamma^{\qa\qb}_{(0)\qg}$ should coincide with the contravariant Christoffel symbols of the pencil $g_{z}$.

The main examples of bi-Hamiltonian structures come from the theory of Dubrovin-Frobenius manifolds where one of the contravariant metrics is the inverse of the invariant covariant metric of the manifold and the other metric is the so-called intersection form. This includes as special cases Dubrovin-Saito construction on the orbit space of a Coxeter group \cite{saito,SYS,du93,du96} and Dubrovin-Zhang construction \cite{DZ98} on the orbit space of extended affine Weyl groups.
 
Geometry of Hamiltonian and bi-Hamiltonian structures has played a fundamental role in the study of integrable hierarchies, especially in (1) investigating the dispersive deformation of hydrodynamic systems, and (2) understanding their deep relationships to 2d topological field theories. These two aspects are closely related, and it were Dubrovin and Zhang who pioneered the study of (2) in terms of (1) in an axiomatic way \cite{DZ01}. They introduced the notions of integrable hierarchies of topological type to be dispersive evolutionary PDEs that satisfy the following four axioms:
\begin{itemize}
  \item Quasi-triviality: there exists a quasi-Miura type transformation that reduces the whole hierarchy to its dispersionless leading terms.
  \item Tau-symmetric: there exists a formal function of time variables (called the tau-function or the partition function) such that the second order logarithm derivatives solve the PDEs. 
  \item Bi-Hamiltonian: the hierarchies are bi-Hamiltonian whose dispersionless limit is the bi-Hamiltonian structure corresponding to a Dubrovin-Frobenius manifold.
  \item Virasoro symmetries: there exist additional  symmetries of the hierarchy, and these symmetries obey the Virasoro commutation relations.
\end{itemize}
They conjectured that associated to any semisimple 2d topological field theory, there canonically exists a unique integrable hierarchy of topological type such that the partition function of the underlying 2d topological field theory serves as a tau-function of the hierarchy. The main goal of Dubrovin and Zhang's framework is to construct an integrable hierarchy of topological type starting from a (semisimple) Dubrovin-Frobenius manifold. In \cite{DZ01}, they proved the uniqueness of such an integrable hierarchy, and the existence remained open at that time.

The existence of integrable hierarchies of topological type is proved after a deep development of the geometry of bi-Hamiltonian structure over last two decades. The fundamental idea is as follows. The axioms of integrable hierarchies of topological type imply that such a hierarchy must be a deformation of the principal hierarchy of a given Dubrovin-Frobenius manifold. Therefore, one should study integrable bi-Hamiltonian deformations of the principal hierarchy (such that the bi-Hamiltonian axiom is satisfied),  and try to find a particular deformation satisfying all other three axioms. Along this line, it is proved in \cite{DLZ06} that any bi-Hamiltonian deformations satisfies the quasi-triviality axiom by investigating the bi-Hamiltonian cohomology groups $BH^2_2(P_0,P_1)$ and $BH^2_3(P_0,P_1)$. These spaces are modulis of infinitesimal deformations of semisimple bi-Hamiltonian structures, and the computation of these groups relies heavily on the differential geometry of bi-Hamiltonian structures.

A great advancement was made in a series of papers \cite{DLZ13,CPS,CKS}, where almost all cohomology groups $BH^p_d(P_0,P_1)$ are computed for any semisimple bi-Hamiltonian structure $(P_0,P_1)$. It is proved that all infinitesimal deformations are unobstructed and deformations are parametrized by $n$ (= dimension of the underlying manifold) functions of single variable, and these functions are called central invariants.
Using these results, it is proved in \cite{FL12,DLZ18} that the tau-symmetric axiom follows from the condition that all the central invariants being constant. Finally, in \cite{LWZ1,LWZ2,LWZ3,LWZ5}, it is proved that the Virasoro symmetry axiom follows from the condition that all the central invariants are constant and equal, by developing a theory called the variational bi-Hamiltonian cohomology. 

The algebro-geometric counterpart
  of the construction of dispersive deformations of integrable hierarchies of topological type is Givental-Teleman's reconstruction of a cohomological field theory (CohFT) from the associated Dubrovin-Frobenius manifold (see \cite{Gi01,Te12}). In this approach the starting point of constructing topological integrable hierarchies is again a Dubrovin-Frobenius manifold coming from the genus zero part of a CohFT, and the higher genus deformation is reconstructed by the Givental's group action (see \cite{BPS}). As an alternative approach, starting from a CohFT, Buryak introduced in \cite{B15} the double ramification hierarchies, which are Hamiltonian integrable hierarchies whose Hamiltonian densities are defined directly in terms of certain intersection numbers over the double ramification cycle of the moduli space of stable curves. It has been proved in \cite{BLS} that the double ramification hierarchy is Miura-equivalent to the Dubrovin-Zhang topological integrable hierarchy if they are constructed from a same semisimple CohFT. In particular, double ramification hierarchies are bi-Hamiltonian, and an explicit formula for their bi-Hamiltonian structures in terms of the CohFT data is presented in \cite{BRS,BR}.

The main motivation of the present paper is to investigate generalisations of all the constructions above in a more general setting. Indeed, it is observed that many nice properties of Frobenius manifolds (or CohFTs) remain to be valid after weakening some of the requirements of Frobenius manifolds (or CohFTs). This leads to the definition of F-manifold \cite{HM} and F-cohomological field theory (F-CohFT) \cite{ABLR1,ABLR2}. A generalisation of Givental-Teleman's reconstruction in the setting of flat F-manifolds is presented in \cite{ABLR1}, and a construction of double ramification hierarchies for F-CohFT is given in \cite{ABLR2}. However, there is no Dubrovin-Zhang theory for F-manifold and for a good reason: the theory of Dubrovin-Zhang is essentially a theory of geometry of bi-Hamiltonian structure, while there is no natural construction of Hamiltonian structure for flat F-manifolds due to the lack of metric. 

This paper serves as the first step towards establishing a Dubrovin-Zhang axiomatic framework for (bi-)flat F-manifolds. In this paper we introduce a generalisation of (bi-)Hamiltonian structures and study in detail its differential geometry in the hydrodynamic case. Similar to the Hamiltonian operator \eqref{PBHT} of hydrodynamic type, a generalised Hamiltonian structure is defined by an affine flat torsionless connection and by an invertible tensor field $g^{\qa\qb}$ of type $(2,0)$, but the key differences are that $g$ is not required to be symmetric, and the connection is not required to be related to $g$. Moreover, the choice of $g$ is inessential in the sense that different choices of $g$ lead to equivalent generalised Hamiltonian structures. The standard choice of $g$ (i.e. choosing $g$ compatible  with $\nabla$) is related to Gel'fand-Dorfman, Magri-Morosi Hamiltonian formalism on 1-forms \cite{AL12}. This allows defining a generalised Hamiltonian structure for any  system of hydrodynamic type that  can be written  as a system of  conservation laws. According to the results of \cite{Sevennec}, these systems, in the diagonalizable case,  coincide with Tsarev's semi-Hamiltonian integrable systems of hydrodynamic type (\cite{ts91}). 

In this general setting, Magri's compatibility between two generalised Hamiltonian structures turns out to be equivalent to the flatness of a Gauss-Manin type connection associated with the geometric data defining such structures. Using the freedom in the definition of generalised Hamiltonian structures (which is crucial here), it turns out that these data reduce to a pair of flat torsionless connections $\nabla$ and $\nabla^*$ and to a tensor field of type $(1,1)$ with vanishing Nijenhuis torsion satisfying suitable compatibility conditions which are equivalent to the flatness of the associated Gauss-Manin connections. Remarkably, these compatibility conditions are automatically satisfied in the case of bi-flat F-manifolds (see \cite{AL_GM}). Bi-flat F-manifolds are a generalisation of Dubrovin-Frobenius manifolds (any Dubrovin-Frobenius manifold cn be thought as a bi-flat F-manifold).

Many constructions in the theory of Dubrovin-Frobenius manifolds
 and related integrable hierarchies have been extended to bi-flat F-manifolds. For instance,
 \begin{itemize}
 \item The relation with Painlevè transcendents has been investigated in \cite{ALjgp,Limrn,KMS,ALmulti,KM},
 \item The relation with reflection groups has been investigated in \cite{KMS15,ALcomplex,KMS}.
 \item The relation with intesection theory on Deligne-Mumford moduli space of stable curves and related topological hierarchies of double ramification type have been investigated in \cite{ABLR1,ABLR2}. 
 \end{itemize}
However, so far, all the constructions related to the existence of a bi-Hamiltonian structure of hydrodynamic type were out of reach. We think that the results of this paper will open new perspectives in this direction. 

\section{Flat and bi-flat F manifolds}
\subsection{F-manifolds}
F-manifolds have been introduced by Hertling and Manin in \cite{HM}.
\begin{definition}\label{defFmani}
An \emph{F-manifold} is a manifold $M$ equipped with
\begin{itemize}
\item[(i)] a commutative associative bilinear product  $\circ$  on the tangent bundle, satisfying the following identity:
\begin{align}
&[X\circ Y,W\circ Z]-[X\circ Y, Z]\circ W-[X\circ Y, W]\circ Z\label{HMeq1free}\\
&-X\circ [Y, Z \circ W]+X\circ [Y, Z]\circ W +X\circ [Y, W]\circ Z\notag\\
&-Y\circ [X,Z\circ W]+Y\circ [X,Z]\circ W+Y\circ [X, W]\circ Z=0,\notag
\end{align}
for all local vector fields $X,Y,W, Z$, where $[X,Y]$ is the Lie bracket,
\newline
\item[(ii)] a distinguished vector field $e$ on $M$ (called the unit vector field)  such that 
\[e\circ X=X\] 
for all local vector fields $X$. 
\end{itemize}
\end{definition}
Condition \eqref{HMeq1free} is known as the \emph{Hertling-Manin condition}.

\subsection{Flat F-manifolds}
In this paper we will consider F-manifolds equipped with some additional structures.
\begin{definition}\cite{manin}
A flat F-manifold $(M, \circ, \nabla, e)$ is an F-manifold $M$ equipped with a connection
 $\nabla$ satisfying the following conditions:
\begin{itemize}
\item The one parameter family of connections
$$\nabla-\lambda\circ$$
is flat and torsionless for any $\lambda$.
\item
The vector field $e$ is covariantly constant: $\nabla e=0$.
\end{itemize}
\end{definition}

\begin{remark}
It was observed in \cite{ALimrn} that given a flat F-manifold $(M, \circ, \nabla, e)$ (the  existence of the unit is not crucial) the exterior covariant derivatives $d_{\nabla}$ and $d_{\nabla^*}$ associated  with $\nabla$ and $\nabla^*=\nabla+z\circ$
 (where $z$ is a non-vanishing constant) define a differential bicomplex on the space of vector valued differential forms.
 \end{remark}

\subsection{The principal hierarchy}
Given an $n$-dimensional flat F-manifold one can define an integrable hierarchy starting from solutions of the equation  
\begin{equation}\label{symmetries}
d_{\nabla}(X\circ)=0,
\end{equation}
where $d_{\nabla}$ is the exterior covariant derivative defined as
\begin{align}
  \label{AU}
  (\dna &\omega)(X_0, \dots, X_k)=\sum_{i=0}^k (-1)^i \na_{X_i}(\omega(X_0, \dots, \hat{X}_i, \dots, X_k))+\\
  \notag
  & +\sum_{0\leq i<j\leq k}(-1)^{i+j}\omega([X_i, X_j], X_0, \dots, \hat{X}_i, \dots, \hat{X}_j, \dots X_k).
\end{align}
A family of solutions $X_{(\qa,p)}$ $(\qa=1,\dots,n; p=-1,0,1,\dots)$ of the equation \eqref{symmetries} can be constructed as follows: the vector fields $X_{(\qa,-1)}$  are chosen to be covariantly constant with respect to $\nabla$, while other vector fields $X_{(\qa,p)}$ are obtained through the recurrence relation:
\beq\label{recurrenceeq1}
\nabla X_{(\qa,p+1)}=X_{(\qa,p)}\circ. 
\eeq 
If we choose local coordinates $(v^1,\dots,v^n)$ of $M$ and denote by $c^\qa_{\qb\qg}$ the structure constants of the product $\circ$ in these coordinates, then the principal hierarchy associated to $M$ is defined to be the following system of evolutionary PDEs (see \cite{LPR} for details):
\begin{equation}
  \label{AX}
\diff{v^\qa}{t^{\qb,p}}=c^\qa_{\ql\qg}X^\ql_{(\qb,p-1)}v^\qg_x,\quad \qa=1,...,n,\quad p\geq 0.
\end{equation}
These flows commute mutually with each other and define an integrable hierarchy.

\subsection{Bi-flat F-manifolds and Dubrovin-Frobenius manifolds}
Bi-flat F-manifolds are F-manifolds equipped with two ``compatible'' flat structures. They are defined as follows. 
\begin{definition}[\cite{ALjgp,ALmulti}]
A \emph{bi-flat}  F-manifold is a flat $F$-manifold $(M, \circ, \nabla, e)$ equipped with a 
 second distinguished  vector field $E$ (called the \emph{Euler vector field}) satisfying $\mathcal{L}_E \circ=\circ$ and with an additional flat torsionless connection $\nabla^*$ satisfying $\nabla^*E=0$ and the condition
\begin{equation}\label{comp_conn}
(d_{\nabla}-d_{\nabla^{*}})(X\,\circ)=0,
\end{equation}
for any local vector field $X$. 
\end{definition}

\subsection{Dual structure}
Any bi-flat F-manifold $M$ is equipped with a second flat F-manifold structure (usually included in the definition) defined by the data $(M, *, \nabla^*, E)$  where $*$ is the commutative associative product on the tangent bundle defined as 
\begin{equation}\label{dual}
X*Y := (E\circ)^{-1}X\circ Y,
\end{equation}
where $X$ and $Y$ are arbitrary local vector fields and at a generic point the operator $E\circ$ is assumed to be invertible. 

\subsection{A family of Gauss-Manin connections}
 The compatibility of the connections $\nabla$ and $\nabla^*$ defining a bi-flat F-manifold can be formulated in terms of a family of Gauss-Manin connections associated to any bi-flat F-manifold: 
\begin{definition}
Let $(M,L,\nabla,\nabla^*)$ be a manifold equipped with a pair of flat connections
 $\nabla$ and $\nabla^*$ and with a $(1,1)$-tensor field $L$ with vanishing Nijenhuis torsion.  We call Gauss-Manin connections the one-parameter family of connections defined by
\begin{equation}\label{GMnoindex}
\nabla^{GM}_XY=\nabla^*_XY+z(\nabla^*_{L_{z}^{-1}X}Y-\nabla_{L_{z}^{-1}X}Y),
\end{equation}
where $L_z=L-z I$. 
\end{definition}
Denoting by $A^{\qg}_{\qa\qb}$ and $B^{\qg}_{\qa\qb}$ the Christoffel symbols of the connections $\nabla$ and $\nabla^*$ respectively in some local coordinates, the Christoffel symbols of the connection $\nabla^{GM}$ can be written as
\begin{equation}\label{GM}
\Gamma^{\qg}_{\qa\qb}:=B^{\qg}_{\qa\qb}+z\left(L_{z}^{-1}\right)^\rho_\qa\left(B^\qg_{\rho\qb}-A^\qg_{\rho\qb}\right),
\end{equation}
In the framework of integrable systems of hydrodynamic type (this includes semi-Hamiltonian systems) it is natural to assume that
\begin{equation}\label{AHE}
d_{\nabla}(L)=d_{\nabla^*}(L).
\end{equation} 
Indeed, this follows from \emph{almost hydrodynamic equivalence} of the connections
 $\nabla$ and $\nabla^*$ (see \cite{ALimrn}). We have the following theorem.
\begin{theorem}
  \label{BZ}
 Assuming \eqref{AHE}, the vanishing of the curvature $R$ of the Gauss-Manin connection is equivalent to 
\begin{equation}\label{flatnessGM}
\nabla_\qa\left(L^\ql_\qb\Delta^\qg_{\ql\mu}\right)=\nabla_\qb\left(L^\ql_\qa\Delta^\qg_{\ql\mu}\right)
\end{equation}
where $\Delta^\qg_{\qa\qb}=B^\qg_{\qa\qb}-A^\qg_{\qa\qb}$.
\end{theorem}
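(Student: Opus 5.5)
Work in flat coordinates for $\nabla^*$ or, more conveniently, write $\nabla^{GM}=\nabla^*+z\,L_z^{-1}\cdot\Delta$ with the difference tensor $\Delta^\qg_{\qa\qb}=B^\qg_{\qa\qb}-A^\qg_{\qa\qb}$, so $\nabla^{GM}_XY=\nabla^*_XY+z\,\Delta(L_z^{-1}X,Y)$. Since $\nabla^*$ is flat, the curvature of $\nabla^*+K$ with $K_\qa{}^\qg{}_\mu:=z(L_z^{-1})^\ql_\qa\Delta^\qg_{\ql\mu}$ is
\[
R^\qg_{\mu\qa\qb}=\nabla^*_\qa K_\qb{}^\qg{}_\mu-\nabla^*_\qb K_\qa{}^\qg{}_\mu+K_\qa{}^\qg{}_\ql K_\qb{}^\ql{}_\mu-K_\qb{}^\qg{}_\ql K_\qa{}^\ql{}_\mu .
\]
Replacing $\nabla^*$ by $\nabla=\nabla^*-\Delta$ in the derivative terms adds terms quadratic in $\Delta$; I would also use that $\nabla$ is flat, so $\nabla^*=\nabla+\Delta$ flat gives the identity $\nabla_\qa\Delta^\qg_{\qb\mu}-\nabla_\qb\Delta^\qg_{\qa\mu}+\Delta^\qg_{\qa\ql}\Delta^\ql_{\qb\mu}-\Delta^\qg_{\qb\ql}\Delta^\ql_{\qa\mu}=0$ (torsion-freeness makes $\Delta$ symmetric). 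The plan is to express everything through $\nabla$ and show $R$ equals $z\,(L_z^{-1})$-conjugated versions of $\nabla_\qa(L^\ql_\qb\Delta^\qg_{\ql\mu})-\nabla_\qb(L^\ql_\qa\Delta^\qg_{\ql\mu})$.

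Key steps. First, handle the derivative of $L_z^{-1}$: $\nabla_\qa (L_z^{-1})=-L_z^{-1}(\nabla_\qa L)L_z^{-1}$, and use \eqref{AHE} (which in coordinates says $\Delta^\qg_{\qa\ql}L^\ql_\qb-\Delta^\ql_{\qa\qb}L^\qg_\ql$ is symmetric in $\qa,\qb$, i.e.\ $d_\nabla L-d_{\nabla^*}L=0$) together with vanishing Nijenhuis torsion (equivalently $d_\nabla L$ pairs well with $L$: $L\,d_\nabla L(X,Y)=d_\nabla L(LX,Y)$-type identity for torsion-free $\nabla$). Second, multiply $R$ on the left-hand index slots by $L_z$ (i.e.\ consider $R(L_zX,L_zY)$ contracted appropriately), which is invertible for generic $z$, to clear denominators; the result should be a polynomial in $z$. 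Third, expand in $z$: I expect the $z$-dependence to collapse so that $R(L_z\partial_\qa,L_z\partial_\qb)$ is $z$ times the antisymmetrisation $\nabla_\qa(L\Delta)_\qb-\nabla_\qb(L\Delta)_\qa$ plus terms that vanish by the flatness identity for $\Delta$ and by \eqref{AHE}. Concretely, write $\Delta_X:=\Delta(X,\cdot)$; \eqref{AHE} means $[\Delta_X, L]$-type commutation $\Delta_{X}L Y-L\Delta_XY$ is symmetric, which lets one move $L_z^{-1}$ past $\Delta$ in the quadratic terms $K\wedge K$.

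The converse direction is immediate once the identity $R\propto$ (LHS$-$RHS of \eqref{flatnessGM}) up to invertible factors is established, since it must hold for all $z$ (or a generic one), giving equivalence. The main obstacle is the bookkeeping in step three: showing the quadratic terms $K_\qa K_\qb-K_\qb K_\qa$ combine with the $\nabla L$ terms to cancel exactly. I would first try checking this in the $\nabla$-flat coordinates where $A=0$, so $\Delta=B$ and $\nabla$ is partial differentiation, and verify the cancellation using $\partial_\qa B_\qb-\partial_\qb B_\qa+[B_\qa,B_\qb]=0$ and \eqref{AHE}, with the Nijenhuis condition used to symmetrise the $\partial L$ terms.
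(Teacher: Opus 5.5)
Your plan is correct and is essentially the paper's argument. The paper simply quotes from \cite{AL_GM} (Remark 3.3) the identity you propose to derive: it expresses $R(L_z\cdot,L_z\cdot)$ through the curvatures of $\nabla$ and $\nabla^*$, plus $z$ times the antisymmetrised $\nabla(L\Delta)$, plus a term $z\,\Delta(L_z^{-1}N_L(\cdot,\cdot),\cdot)$, and then concludes using $R_\nabla=R_{\nabla^*}=0$, $N_L=0$ and invertibility of $L_z$.

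For the bookkeeping, it is quickest to write $\nabla^{GM}_X=\nabla_X+\Delta(LL_z^{-1}X,\cdot)$ and substitute $X=L_zU$, $Y=L_zV$. The $z^0$ part of $R^{GM}(L_zU,L_zV)$ is then exactly $(R_{\nabla^*}-R_\nabla)(LU,LV)$. So the quadratic terms are absorbed by flatness of $\nabla$ and $\nabla^*$, and \eqref{AHE} is not actually needed for the curvature identity, only for torsion-freeness of $\nabla^{GM}$. The $\nabla L$ terms combine through $(\nabla_{L_zU}L)V-(\nabla_{L_zV}L)U=N_L(U,V)+L_z\,(d_\nabla L)(U,V)$. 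This is the precise form of the Nijenhuis identity you only sketch; the identity $L\,d_\nabla L(X,Y)=d_\nabla L(LX,Y)$ you wrote is not equivalent to $N_L=0$.
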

\begin{proof}. Using \eqref{AHE} one obtains the following identity (see Remark 3.3 in \cite{AL_GM}):
\begin{align*}
R^\qg_{\ql\qa\mu}(L_z)^\ql_\rho(L_z)^\qa_\qb=&\,(R_{\nabla^*})^\qg_{\qa\ql\mu}(L_z)^\ql_\rho(L_z)^\qa_\qb+z^2\left((R_{\nabla^*})^\qg_{\rho\qb\mu}-(R_{\nabla})^\qg_{\rho\qb\mu}\right)\\
&+zL^\ql_\rho\left((R_{\nabla^*})^\qg_{\qb\ql\mu}-(R_{\nabla})^\qg_{\qb\ql\mu}\right)\\
&+zL^\ql_\qb\left((R_{\nabla^*})^\qg_{\rho\ql\mu}-(R_{\nabla})^\qg_{\rho\ql\mu}\right)\\
&+z\left(\nabla_\qb\left(L^\ql_\rho\Delta^\qg_{\ql\mu}\right)-\nabla_\rho\left(L^\ql_\qb\Delta^\qg_{\ql\mu}\right)\right)\\
&-z\left(L_z^{-1}\right)^\qa_\nu(N_L)^\nu_{\qb \rho}\Delta^\qg_{\qa\mu},
\end{align*}
here $N_L$ is the Nijenhuis torsion of $L$ which vanishes by the assumption, and $R_\nabla = R_{\nabla^*} = 0$ by flatness of $\nabla$ and $\nabla^*$. Therefore, the theorem follows.
\end{proof}

Using the relation between generalised bi-Hamiltonian  structures of hydrodynamic type and Gauss-Manin connections associated with the geometric data $(L,\nabla,\nabla^*)$ we  will see in Section 4 that vanishing of Nijenhuis torsion of $L$ and condition \eqref{AHE} are not additional assumptions.

Particularly important is the case  where  the geometric data $(L,\nabla,\nabla^*)$ comes from a bi-flat F-manifold.
\begin{theorem}\label{GMth}\cite{AL_GM}
Let $(M,\circ,\nabla,e,*,\nabla^*,E)$ be a bi-flat F-manifold. 
The  family of Gauss-Manin connections \eqref{GMnoindex} defined by $\nabla$, $\nabla^*$ and  $L=E\circ $ is flat and torsionless for any fixed $z$ on the open set where $L_z$ is invertible. The family above can be further extended by introducing another parameter $\nu$ and
replacing the dual connection with $\nabla^* + \nu *$. The family obtained in this way is still  torsionless and flat.
\end{theorem}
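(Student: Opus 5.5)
The plan is to reduce everything to Theorem \ref{BZ}. For this I need to verify, for $L=E\circ$, three things: the Nijenhuis torsion of $L$ vanishes, condition \eqref{AHE} $d_\nabla L=d_{\nabla^*}L$ holds, and the flatness condition \eqref{flatnessGM} holds. Torsionlessness of $\nabla^{GM}$ is immediate from \eqref{GM}. Indeed, $B$ and $A$ are symmetric in the lower indices since $\nabla,\nabla^*$ are torsionless. Moreover $(L_z^{-1})^\rho_\qa\Delta^\qg_{\rho\qb}$ is symmetric in $\qa,\qb$ because \eqref{comp_conn} gives $\Delta^\qg_{\rho\qb}\,c^\rho_{\qa\mu}$ symmetric in $\qb,\mu$ (expand $(d_\nabla-d_{\nabla^*})(X\circ)$). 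Hence $\Delta(X\circ Y,Z)=\Delta(Y,X\circ Z)$, i.e. $\Delta(L^{-1}_zX,Y)=\Delta(X,L_z^{-1}Y)$, since $L_z^{-1}$ is a polynomial in $X\mapsto E\circ X$ (a function of $L$ commuting with $\circ$).

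Next come the structural identities. Vanishing of $N_L$ for $L=E\circ$ is standard for F-manifolds with an Euler field: by the Hertling–Manin condition and $\mathcal L_E\circ=\circ$ one has $N_{X\circ}=0$ whenever $X$ is an Euler-type field; alternatively, use $d_\nabla(X\circ)=0$ with $E$ satisfying \eqref{symmetries} up to identity. For \eqref{AHE}, I note $L=E\circ$ and apply \eqref{comp_conn} with $X=E$, which gives $d_\nabla(E\circ)=d_{\nabla^*}(E\circ)$ directly.

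The main step is \eqref{flatnessGM}: $\nabla_\qa(L^\ql_\qb\Delta^\qg_{\ql\mu})$ must be symmetric in $\qa,\qb$. I would write $L^\ql_\qb\Delta^\qg_{\ql\mu}=E^\sigma c^\ql_{\sigma\qb}\Delta^\qg_{\ql\mu}$ and use the symmetry above to move the product. Then I would express $\Delta$ via $\nabla^*E=0$, namely $\Delta^\qg_{\ql\mu}E^\ql=\nabla_\mu E^\qg$ up to sign, together with $\nabla\nabla E=\nabla(\nabla E)$ and $\nabla_\mu E=$ identity $+$ (something built from $\circ$). Using $\mathcal L_E\circ=\circ$ and flatness of $\nabla-\lambda\circ$, this makes $\nabla c$ totally symmetric. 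The symmetry of $\nabla_\qa(\cdot)_\qb$ should then follow from flatness of $\nabla^*$ minus flatness of $\nabla$: the $\nabla$-derivative of $\Delta$ antisymmetrised equals a quadratic expression in $\Delta$ (the curvature difference identity), which must cancel against $\nabla L$ terms. This cancellation is where I expect the real work. If it becomes messy, I would instead work in flat coordinates of $\nabla$ (so $A=0$, $\Delta=B$) and use $e$ constant there.

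For the $\nu$-extension, replacing $\nabla^*$ by $\nabla^*+\nu*$ keeps it flat and torsionless, since $(M,*,\nabla^*,E)$ is flat. Condition \eqref{comp_conn} still holds because $\nu(X*Y)\circ$ terms are symmetric: $d_{\nabla^*+\nu *}(X\circ)-d_{\nabla^*}(X\circ)$ involves $Y*(X\circ Z)-Z*(X\circ Y)=0$ by commutativity and associativity. The $\nabla^* E=0$ requirement does get modified, since $(\nabla^*+\nu*)E=\nu\,\mathrm{Id}$. So I would re-run the flatness argument with that extra constant term, checking that it contributes only symmetric pieces to \eqref{flatnessGM}.
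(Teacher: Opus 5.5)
You reduce the statement to Theorem~\ref{BZ}. That is a sensible route: the paper gives no argument of its own and simply quotes \cite{AL_GM}. Your preliminary steps are correct.
The symmetry $\Delta(X\circ Y,Z)=\Delta(Y,X\circ Z)$ extracted from \eqref{comp_conn} does give torsion-freeness. Condition \eqref{AHE} is just \eqref{comp_conn} with $X=E$. And $N_{E\circ}=0$ does follow from the Hertling--Manin identity together with $\mathcal L_E\circ=\circ$. Your ``alternative'' route to $N_{E\circ}=0$ is wrong, however, since $d_\nabla(E\circ)(Y,Z)=\nabla_YE\circ Z-\nabla_ZE\circ Y\neq 0$ in general.

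The genuine gap is \eqref{flatnessGM} itself. It carries the whole content of the flatness claim, and you only say it ``should then follow'' after a cancellation you never carry out. Some of the ingredients you list are also off:
\begin{itemize}
\item $\nabla E$ is not the identity plus a multiplication operator. If $\nabla E=\mathrm{Id}+V\circ$, then $\nabla_eE=[e,E]=e$ forces $V=0$, hence $\nabla E=\mathrm{Id}$, which is false in general.
\item ``$\nabla\nabla E=\nabla(\nabla E)$'' is vacuous.
\end{itemize}
What is actually needed is $\nabla\nabla E=0$. This is not an axiom of a bi-flat F-manifold, so it must be derived from the flatness of $\nabla^*$.

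Here is the missing argument.
\begin{enumerate}
\item Your symmetry together with $\nabla^*E=0$ gives $\Delta(LY,Z)=\Delta(E,Y\circ Z)=-\nabla_{Y\circ Z}E$, that is, $L^\lambda_\beta\Delta^\gamma_{\lambda\mu}=-c^\sigma_{\beta\mu}\nabla_\sigma E^\gamma$.
\item Put $K=\nabla E$. Since $\nabla c$ is totally symmetric, \eqref{flatnessGM} becomes $(\nabla_XK)(Y\circ Z)=(\nabla_YK)(X\circ Z)$.
\item Because $\nabla^2E$ is symmetric, $\nabla e=0$ and $\nabla_eE=e$, one has $\nabla_eK=0$. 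Taking $Y=e$ then shows that \eqref{flatnessGM} is equivalent to $\nabla K=0$.
\item To prove $\nabla K=0$, write $\Delta(X,Y)=-K(X*Y)$, again from your symmetry. Compute $\nabla*$ using $\nabla_XL=(\nabla_E\circ)(X,\cdot)+K(X)\circ$. Use also the rewriting of $\mathcal L_E\circ=\circ$ as $(\nabla_E\circ)(X,Y)=X\circ Y+K(X\circ Y)-K(X)\circ Y-X\circ K(Y)$.
\item This yields $(\nabla_X*)(Y,Z)-(\nabla_Y*)(X,Z)=X*K(Y*Z)-Y*K(X*Z)$. The quadratic terms $\Delta(X,\Delta(Y,Z))-\Delta(Y,\Delta(X,Z))$ therefore cancel exactly, and
\[
R_{\nabla^*}(X,Y)Z=(\nabla_YK)(X*Z)-(\nabla_XK)(Y*Z).
\]
\item Set $Y=e$, and use $\nabla_eK=0$ and the invertibility of $Z\mapsto e*Z=L^{-1}Z$. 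Flatness of $\nabla^*$ then gives $\nabla K=0$, and your outline closes.
\end{enumerate}

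The $\nu$-extension is then immediate. Replacing $\Delta$ by $\Delta+\nu*$ adds $\nu c^\gamma_{\beta\mu}$ to $L^\lambda_\beta\Delta^\gamma_{\lambda\mu}$, and its $\nabla$-derivative is symmetric in the required indices by total symmetry of $\nabla c$. Meanwhile \eqref{AHE}, $N_L=0$ and torsion-freeness are unaffected, as you essentially noted.
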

\subsection{A differential bicomplex associated with $(L,\nabla,\nabla^*)$} 
Given the geometric data $(L,\nabla,\nabla^*)$ where $\nabla$ and $\nabla^*$ are flat connections and $L$ is a $(1,1)$-tensor field with vanishing Nijenhuis torsion,  we consider the exterior covariant derivative $d_{\nabla}$ associated with $\nabla$ and the {\em $L$-exterior covariant derivative} $d_{L\nabla^*}$ associated with $\nabla^*$, which is defined as
\[(d_{L\nabla^*}\omega)(X_0, \dots, X_k)=\sum_{i=0}^k (-1)^i \nabla^*_{LX_i}(\omega(X_0, \dots, \hat{X}_i, \dots, X_k))+$$
$$+\sum_{0\leq i<j\leq k}(-1)^{i+j}\omega([X_i, X_j]_L, X_0, \dots, \hat{X}_i, \dots, \hat{X}_j, \dots X_k),\]
here 
\[
[X_i, X_j]_L = [LX_i,X_j]+[X_i,LX_j]-L[X_i,X_j].
\]
From the results of \cite{AL_GM} (see proof of Theorem 4.2) it follows that the pair of differentials $(d_{\nabla},d_{L\nabla^*})$ determines a differential bicomplex structure on the space of vector-valued differential forms if and only if the curvature of the Gauss-Manin connection associated with $(\nabla,\nabla^*,L)$ vanishes. As  a consequence one has the following  theorem.
\begin{theorem}\label{BidifferentialTh}\cite{AL_GM}
On any bi-flat $F$-manifold $(M, \circ, \nabla, e, *, \nabla^*, E)$, the pair of differentials $(d_{\nabla},d_{L\nabla^*})$ determines a differential bicomplex structure on the space of vector-valued differential forms.
\end{theorem}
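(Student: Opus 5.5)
The plan is to reduce Theorem \ref{BidifferentialTh} to two facts already available. The first is the criterion quoted from \cite{AL_GM} just before the statement: the pair $(d_{\nabla},d_{L\nabla^*})$ is a bicomplex exactly when the Gauss-Manin connection built from $(\nabla,\nabla^*,L)$ is flat. The second is Theorem \ref{GMth}, which says this flatness holds on bi-flat F-manifolds. So the work is to check that the data $(\nabla,\nabla^*,L=E\circ)$ meet the standing hypotheses of that criterion, and then to confirm directly what the criterion involves.

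First I check the hypotheses. $\nabla$ and $\nabla^*$ are flat by definition of a bi-flat F-manifold. The Nijenhuis torsion of $L=E\circ$ vanishes: this is a standard consequence of the Hertling-Manin condition \eqref{HMeq1free} together with $\mathcal{L}_E\circ=\circ$. In flat coordinates of $\nabla$ one can also see it from the flatness of $\nabla-\lambda\circ$ and the symmetry of $\nabla c$. Next I verify \eqref{AHE}. Apply \eqref{comp_conn} with $X=E$, which gives $d_{\nabla}(E\circ)=d_{\nabla^*}(E\circ)$, that is, $d_\nabla L=d_{\nabla^*}L$.

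Then I unpack the bicomplex conditions: $d_\nabla^2=0$, $d_{L\nabla^*}^2=0$, and $d_\nabla d_{L\nabla^*}+d_{L\nabla^*}d_\nabla=0$. The condition $d_\nabla^2=0$ is equivalent to $R_\nabla=0$. The condition $d_{L\nabla^*}^2=0$ reduces to the flatness of the connection $\nabla^*_{LX}$ with respect to the bracket $[\cdot,\cdot]_L$. This holds because $N_L=0$, which makes $[\cdot,\cdot]_L$ a Lie algebroid bracket with anchor $L$, and because $R_{\nabla^*}=0$ combined with \eqref{AHE} (equivalently, the torsion-free symmetry of $\nabla^* L$) gives curvature $R_{\nabla^*}(LX,LY)$ plus terms that vanish.

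The anticommutator is the main obstacle. It is a derivation-type operator of degree $2$, so it is determined by its action on functions and on vector fields (0-forms). On functions it vanishes because both connections are torsion-free and $d_\nabla L=d_{\nabla^*}L$. On vector fields it yields a tensor $\nabla_\alpha(L^\lambda_\beta\Delta^\gamma_{\lambda\mu})-\nabla_\beta(L^\lambda_\alpha\Delta^\gamma_{\lambda\mu})$ plus curvature terms. By Theorem \ref{BZ}, this tensor is exactly the $z$-linear coefficient of the Gauss-Manin curvature. Theorem \ref{GMth} gives flatness for all $z$, so \eqref{flatnessGM} holds and the anticommutator vanishes. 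The delicate part is the sign and coefficient bookkeeping in expanding the anticommutator on vector fields; I would carry this out in flat coordinates of $\nabla$, where $A=0$ and $\Delta=B$.
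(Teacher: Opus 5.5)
Your proposal is correct and follows the paper's route. The paper obtains the theorem by combining the criterion from \cite{AL_GM} (the pair $(d_{\nabla},d_{L\nabla^*})$ is a bicomplex if and only if the Gauss-Manin connection of $(\nabla,\nabla^*,L)$ is flat) with Theorem \ref{GMth}. Your checks of $N_L=0$, of \eqref{AHE} via $X=E$ in \eqref{comp_conn}, and of the anticommutator on vector fields just make that criterion explicit; the anticommutator does equal the $R_\nabla$-terms plus $\nabla_\alpha(L^\lambda_\beta\Delta^\gamma_{\lambda\mu})-\nabla_\beta(L^\lambda_\alpha\Delta^\gamma_{\lambda\mu})$.

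Some side remarks are slightly off, but they do not affect the argument:
\begin{itemize}
\item $d_{L\nabla^*}^2=0$ needs only $R_{\nabla^*}=0$ and $N_L=0$, because the curvature of the $L$-connection is exactly $R_{\nabla^*}(LX,LY)$.
\item \eqref{AHE} is equivalent to the symmetry $\Delta(LX,Y)=\Delta(LY,X)$, not to a symmetry of $\nabla^*L$.
\item The scalar part $dd_L+d_Ld$ of the anticommutator vanishes identically, with no hypotheses needed.
\end{itemize}
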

In other words we have
\[d_{\nabla}^2=0,\qquad  d_{L\nabla^*}^2=0,\qquad d_{\nabla}\cdot d_{L\nabla^*}+d_{L\nabla^*}\cdot d_{\nabla}=0.\]
The first condition follows from the vanishing of the curvature of $\nabla$ and the second condition  follows from the vanishing of  curvature of $\nabla^*$ and  from the vanishing  of the torsion of $L$ (see \cite{ALimrn}). The last condition is equivalent to the vanishing of the curvature of the Gauss-Manin condition.
\begin{remark}
The choice $L=I$ is also interesting. An example is the differential bicomplex associated with a flat F-manifold. 
\end{remark}

\subsection{Dubrovin-Frobenius manifolds, almost dual structure and flat pencil of metrics}
Dubrovin-Frobenius manifolds have been introduced (before bi-flat F-manifolds) by  Dubrovin as a coordinate free formulation of WDVV equations in 2d topological field theories (see \cite{du93}).
Taking into account the definition of bi-flat F-manifolds, Dubrovin-Frobenius manifolds can be defined in the following way.

\begin{definition}
A Dubrovin-Frobenius manifold is a bi-flat F-manifold equipped with 
 a metric $\eta$ compatible with the product $\circ$ and the connection $\nabla$:
\[\langle X\circ Y,Z\rangle=\langle X,Y\circ Z\rangle,\quad \nabla\eta=0,\]
where $\langle\cdot,\cdot\rangle$ is the bilinear form associated with $\eta$ and $X,Y,Z$ are arbitrary local vector fields.
\end{definition}
\noindent
At the points where the operator $E\circ$ is invertible, any Dubrovin-Frobenius manifold is equipped with a second (contravariant) flat metric defined as 
\begin{equation}\label{int_form}
g=E\circ\eta^{-1}.
\end{equation}
It turns out that the data $(g,*,E)$ (where $*$ is defined by \eqref{dual}) satisfy all the
 conditions fulfilled by $(\eta,\circ,e)$  except the fact that the  Euler vector field, which is
  the unit of the dual product, in general is not flat anymore. However, it is possible to deform the Levi-Civita connection of $g$, denoted by $\nabla^{(g)}$, with the dual product
 \[\nabla^*=\nabla^{(g)}+\nu*,\]
 in such a way that $\nabla^*E=0$ for a suitable choice of the constant $\nu$. The connection $\nabla^*$ remains flat and define the dual connection of the associated flat dual structure.

\section{Hamiltonian structures on infinite jet spaces}
Throughout this section, we fix an $n$-dimensional smooth manifold $M$.
\subsection{Infinite jet space of super manifold}
We denote by $\hat M$ the super manifold of dimension $(n|n)$ whose underlying space is $T^*M$ with reversed fibre parity. In local coordinates $(v^1,\dots, v^n)$ of some open set $U\subset M$, we can choose odd coordinates $(\qth_1,\dots,\qth_n)$ for the fibre that are canonically dual to $(v^\qa)$. Note that odd coordinates anti-commute with each other, and for this specific choice they transform covariantly with respect to a coordinate transform on $U$. We denote by $J^\infty(\hat M)$ the infinite jet space associated with $\hat M$ and denote by $\mathcal{\hat A}$ the space of differential polynomials on $J^\infty(\hat M)$. $\mathcal{\hat A}$ is a unital differential algebra with the derivation denoted by $\qp_x\in\Qg(TJ^\infty(\hat M))$. We denote by $\jjf$ the quotient space $\jja/\qp_x\jja$ called the space of local functionals, and by 
\[
\int\colon\jja\to\jjf
\]
the natural projection map. In local coordinates, we have 
\begin{align*}
&\jja = C^\infty(U)\left[v^{\qa,s},\,\qth^s_{\qa}\colon \qa = 1,...,n;\, s\geq 1\right],\\ 
&\qp_x = \sum_{s\geq 0}v^{\qa,s+1}\diff{}{v^{\qa,s}}+\qth_\qa^{s+1}\diff{}{\qth_\qa^s},\quad v^{\qa,0}:=v^\qa,\quad \qth_\qa^0:=\qth_\qa.
\end{align*}
This implies that we can view $v^\qa$ as  a field $v^\qa(x)$ and variables $v^{\qa,s}$ as $\qp_x^s v^\qa(x)$. Therefore, we will also use the notations $v^\qa_x,v^\qa_{xx},\dots$ for $v^{\qa,1},v^{\qa,2},\dots$.
Note that $\jja$ comes with two natural gradations called the super gradation given by 
\[
\deg_\qth v^{\qa,s} = 0,\quad \deg_\qth \qth_\qa^s = 1
\]
and the differential gradation by 
\[
\deg_x v^{\qa,s} = s,\quad \deg_x \qth_\qa^s = s.
\]
The derivation $\qp_x$ is homogeneous with respect to both gradations, and hence $\jjf$ admits induced gradations. We will use $\jja^p$ to denote the subset of homogeneous elements of super degree $p$, and $\jja_d$ the subset of homogeneous elements of differential degree $d$. Denote by $\jja^p_d = \jja^p\cap\jja_d$. Similarly, we define the subspaces $\jjf^p$, $\jjf_d$ and  $\jjf^p_d$.

\subsection{Schouten-Nijenhuis bracket}
The Schouten-Nijenhuis bracket is the bilinear map 
\[
[-,-]\colon \jjf\times\jjf\to\jjf
\]
defined by 
\[
[F,G] = \int \vard{F}{\qth_\qa}\vard{G}{v^\qa}+(-1)^p\vard{F}{v^\qa}\vard{G}{\qth_\qa},\quad F\in\jjf^p,\quad G\in\jjf,
\]
here the variational derivative is defined as usual by 
\[
\vard{F}{v^\qa} = \sum_{s\geq 0} (-\qp_x)^s\diff{f}{v^{\qa,s}},\quad \vard{F}{\qth_\qa} = \sum_{s\geq 0} (-\qp_x)^s\diff{f}{\qth^{s}_\qa},\quad F = \int f.
\]
This bracket defines a graded Lie algebra structure on $\jjf$ (whose gradation is different from the standard notion, see, e.g., \cite{LZJac}) which makes it convenient to study variational multi-vectors.  It induces an important map $D\colon \jjf^p\to \derx^{p-1}$  given by 
\begin{align}
  \label{AB}
D(F) = \sum_{s\geq 0}\qp_x^s\kk{\vard{F}{\qth_\qa}}\diff{}{v^{\qa,s}}+(-1)^p\qp_x^s\kk{\vard{F}{v^\qa}}\diff{}{\qth_\qa^s},\quad F\in\hm F^p,
\end{align}
where we denote by $\derx^{p}$ the space of graded derivations on $\hm A$  of super degree $p$ that commute with $\qp_x$, namely it consists of linear maps $\qd$ satisfying $\qd\circ\qp_x = \qp_x\circ\qd$ and 
\[
\qd(\hm A^q)\subset\hm A^{q+p},\quad \qd(fg) = \qd(f)g+(-1)^{pq}f\qd(g),\quad \forall\,f\in\hm A^q,\quad g\in\hm A.
\]  
The space $\derx^{p}$ also admits the differential gradation induced from that of $\hm A$, and we denote by $\derx^p_d$ the subset of homogeneous elements of differential degree $d$. Denote by $\derx = \bigoplus_p \derx^p$, and hence we extend the definition of $D$ by linearity to obtain $D\colon \hm F\to \derx$.   
This is a graded Lie algebra homomorphism in the sense that 
\begin{equation}
  \label{AC}
  D\kk{[F,G]} = (-1)^{p-1}[D(F),D(G)],\quad F\in\jjf^p,\quad G\in\jjf,
\end{equation}
here the bracket on the right-hand side is the usual graded commutator of derivations. The map $D$ is injective, and the elements in the subspace $D(\hm F)\subset\derx$ are called derivations of $D$-type in \cite{LWZ4}.

The morphism $D$ provides a comparison between the space of local functionals and the space of derivations, and is fundamental in the construction of variational (bi)-Hamiltonian cohomologies, leading to the final proof of the existence of Dubrovin-Zhang's integrable hierarchies of topological type. In particular, the study of derivations which are not of $D$-type becomes essential, and a generalisation of techniques that previously apply to $\hm F$ to the larger space $\derx$ would enable one to investigate wilder range of integrable hierarchies. The results presented in this paper arise from this idea.

\subsection{$\qth$-formalism of Hamiltonian structure}
A Hamiltonian structure is an element $P\in\jjf^2$ such that $[P,P] = 0$. A bi-Hamiltonian structure is a pair $(P_0,P_1)$ such that $P_z = P_1-z P_0$ is a Hamiltonian structure for any $z$.

This notion of Hamiltonian structure is equivalent to the usual formulation using Hamiltonian operators. Indeed, for any local functional $P\in\jjf^2$, we set 
\[
\vard{P}{\qth_\qa} = \sum_{s\geq 0} P^{\qa\qb}_s\qth_\qb^s,\quad P^{\qa\qb}_s\in \jja^0,
\]
then for functionals 
\[
F = \int f\left(v^\qa,v^{\qa}_x,v^\qa_{xx},\dots\right),\quad G =  \int g\left(v^\qa,v^{\qa}_x,v^\qa_{xx},\dots\right),
\]
a straightforward computation shows that 
\[
\{F,G\}_P := \int\sum_{s\geq 0}\vard{F}{v^\qa}P^{\qa\qb}_s\qp_x^s\vard{G}{v^\qb} = [[F,P],G].
\]
Hence, it follows that the bracket $\{-,-\}_P$ is anti-commutative, and it satisfies the Jacobi identity if and only if $[P,P] = 0$. Conversely, given a Poisson bracket $\{-,-\}_P$ of the above form, we can form a Hamiltonian structure by 
\begin{equation}
  \label{BK}
P = \frac 12\int  \sum_{s\geq 0} P^{\qa\qb}_s\qth_\qa\qth_\qb^s.
\end{equation}

Using the notions above, we can define (bi-)Hamiltonian systems as follows.
\begin{definition}
  \label{BJ}
  Let $X\in \hm F^1$. $X$ is called a Hamiltonian system if there exists a Hamiltonian structure $P$ and a local functional $H\in\hm F^0$ such that $X = -[P,H]$. $X$ is called a bi-Hamiltonian system if $X = -[P_0,H_0] = -[P_1,H_1]$ for some bi-Hamiltonian structure $(P_0,P_1)$ and local functionals $X_0,X_1\in\hm F^0$.
\end{definition}
This definition coincides with the usual notion of Hamiltonian systems. A Hamiltonian system is usually defined to be evolutionary PDEs of the form 
\[
\diff{v^\qa}{t} = \sum_{s\geq 0}P^{\qa\qb}_s\qp_x^s\kk{\vard{H}{v^\qb}}
\]
for some Hamiltonian operator $\sum_{s\geq 0}P^{\qa\qb}_s\qp_x^s$ and some local Hamiltonian functional $H\in\hm F^0$, then it is easy to see that 
\[
D_{-[P,H]}(v^\qa) = \diff{v^\qa}{t},
\]
where $P\in\hm F^2$ is the Hamiltonian structure given by \eqref{BK}. Therefore, the Hamiltonian system above is equivalent to the local functional $X = -[P,H]$.

\subsection{Hamiltonian structure of hydrodynamic type}
Let $P\in\jjf^2_1$ be a local functional with
\[
\vard{P}{\qth_\qa} = g^{\qa\qb}\qth_\qb^1+\Qg^{\qa\qb}_\qg v^\qg_x\qth_\qb,\quad g^{\qa\qb},\,\Qg^{\qa\qb}_\qg \in\jja^0_0.
\]
Such a local functional is called a (non-degenerate) Hamiltonian structure of hydrodynamic type if $P$ is a Hamiltonian structure and $\det(g^{\qa\qb})\neq 0$. We denote by $g_{\qa\qb}$ the inverse of $g^{\qa\qb}$.
\begin{theorem}\cite{DN84}
  \label{AR}
  $P$ is a Hamiltonian structure of hydrodynamic type if and only if $g = (g_{\qa\qb})$ defines a flat metric on $M$ and the quantities $\Gamma^{\qg}_{\qa\qb}=-g_{\qa\qs}\Gamma^{\qs\qg}_{\qb}$ give the Christoffel symbols of the Levi-Civita connection of $g$.
\end{theorem}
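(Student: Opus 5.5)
The plan is to compute $[P,P]$ directly in the $\theta$-formalism and read off the vanishing conditions coefficient by coefficient. The alternative is to check the Jacobi identity for $\{-,-\}_P$ on test functionals, which is the route taken in \cite{DN84}.

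\emph{Step 1: normal form of $P$ and antisymmetry.} By \eqref{BK} we may take
\[
P=\frac12\int \bigl(g^{\qa\qb}\qth_\qa\qth^1_\qb+\Qg^{\qa\qb}_\qg v^\qg_x\qth_\qa\qth_\qb\bigr).
\]
Because $P$ is only defined modulo $\qp_x\jja$, the operator $g^{\qa\qb}\qp_x+\Qg^{\qa\qb}_\qg v^\qg_x$ read off from $\vard{P}{\qth_\qa}$ is automatically skew-adjoint. Writing skew-adjointness out gives
\[
g^{\qa\qb}=g^{\qb\qa},\qquad \Qg^{\qa\qb}_\qg+\Qg^{\qb\qa}_\qg=\qp_\qg g^{\qa\qb}.
\]
With $\Qg^\qg_{\qa\qb}=-g_{\qa\qs}\Qg^{\qs\qg}_\qb$, the second relation says exactly that the connection $\na$ with these Christoffel symbols is compatible with $g$, i.e. $\na g=0$. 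So symmetry of $g$ and metric compatibility come for free from the formalism. What remains is to show that the torsion and the curvature of $\na$ vanish.

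\emph{Step 2: expand $[P,P]$.} We have
\[
[P,P]=2\int \vard{P}{\qth_\qa}\vard{P}{v^\qa}\in\jjf^3_2.
\]
Computing $\vard{P}{v^\qa}$ and multiplying out, every term has the shape (function of $v$)$\times$(three $\qth$'s), with total $x$-degree $2$. By integrating by parts, each such class can be brought to a unique normal form, namely a combination of the monomials
\[
\qth_\qa\qth_\qb\qth^2_\qg,\qquad v^\ql_x\,\qth_\qa\qth_\qb\qth^1_\qg,\qquad v^\ql_x v^\mu_x\,\qth_\qa\qth_\qb\qth_\qg,\qquad v^\ql_{xx}\,\qth_\qa\qth_\qb\qth_\qg,
\]
with suitable (anti)symmetries in the indices. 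Uniqueness can be checked with the variational derivative, which vanishes exactly on $\qp_x\jja$ in positive degree.

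\emph{Step 3: read off the conditions.}
\begin{itemize}
\item[(a)] The highest-order coefficients, after lowering indices with $g_{\qa\qb}$ (which is invertible since $\det g\neq 0$), should give
\[
g^{\qa\qs}\Qg^{\qb\qg}_\qs=g^{\qb\qs}\Qg^{\qa\qg}_\qs,
\]
which is the vanishing of the torsion $\Qg^\qg_{\qa\qb}-\Qg^\qg_{\qb\qa}$. Together with Step 1, this identifies $\na$ with the Levi-Civita connection of $g$.
\item[(b)] The $v^\ql_{xx}$ coefficient should then be implied by the previous conditions.
\item[(c)] The coefficient quadratic in $v_x$ should reduce, after using (a), to the contravariant curvature identity
\[
g^{\qa\qs}R^{\qb\qg}_{\qs\ql}=0,
\]
i.e. $R=0$.
\end{itemize}
For the converse, we substitute Levi-Civita symbols of a flat metric and observe that each coefficient vanishes. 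The cleanest way to do this is to pass to flat coordinates, where $g^{\qa\qb}$ is constant and $\Qg=0$. Then $P=\frac12\int \eta^{\qa\qb}\qth_\qa\qth^1_\qb$, which visibly satisfies $[P,P]=0$ because $P$ has no $v$-dependence. Since the form of $P$ and the conditions on $(g,\na)$ are coordinate invariant, this proves the converse.

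The main obstacle is Step 3(c): organising the degree-two normal form so that the quadratic-in-$v_x$ coefficient visibly assembles into the Riemann tensor rather than a tangle of derivatives of $\Qg$. To control this, I would first impose (a), then work at a point in normal coordinates where $\Qg=0$, so that only terms $\qp\Qg$ survive and match $R$.
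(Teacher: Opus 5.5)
\textbf{The gap is in Step 3(c): the curvature is in a different coefficient.} Your Step 1, Step 3(a) and the converse are sound. Step 3(c), however, attributes the curvature condition to the wrong coefficient of $[P,P]$, and the coefficient that actually carries it is never analysed.

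Using Step 1 one finds
\[
\vard{P}{v^\ql}=\Qg^{\qb\qa}_\ql\,\qth_\qa\qth^1_\qb+\tfrac12\bigl(\qp_\ql\Qg^{\qa\qb}_\qg-\qp_\qg\Qg^{\qa\qb}_\ql\bigr)v^\qg_x\,\qth_\qa\qth_\qb .
\]
So in $2\int\vard{P}{\qth_\ql}\vard{P}{v^\ql}$ the only terms quadratic in $v_x$ are
\[
\Qg^{\ql\mu}_\qd\bigl(\qp_\ql\Qg^{\qa\qb}_\qg-\qp_\qg\Qg^{\qa\qb}_\ql\bigr)v^\qd_x v^\qg_x\,\qth_\mu\qth_\qa\qth_\qb .
\]
Once (a) holds, nothing else feeds into this coefficient of the normal form (see below). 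Every term contains an undifferentiated factor $\Qg^{\ql\mu}_\qd$. Hence at the centre of normal coordinates this coefficient vanishes whatever $R$ is there, and the trick you propose for 3(c) returns $0=0$.

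For $n=2$ the failure is stark. There $\qth_\qa\qth_\qb\qth_\qg\equiv 0$, so the $v_xv_x$ and $v_{xx}$ monomials of your normal form do not exist. Yet a non-flat metric, such as the round metric on $S^2$, gives $[P,P]\neq 0$. The quadratic coefficient is therefore a consequence of flatness, not equivalent to it.

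\textbf{Where the curvature actually appears.} It sits in the coefficient of $v^\qg_x\qth_\qa\qth_\qb\qth^1_\mu$. This coefficient receives two contributions: the $\qth^1$-part of $\vard{P}{\qth_\ql}$ times the $v_x$-part of $\vard{P}{v^\ql}$, and the $v_x$-part of $\vard{P}{\qth_\ql}$ times the $\qth\qth^1$-part of $\vard{P}{v^\ql}$. Antisymmetrised in $\qa\qb$, and up to an overall factor, it reads
\[
g^{\mu\ql}\bigl(\qp_\ql\Qg^{\qa\qb}_\qg-\qp_\qg\Qg^{\qa\qb}_\ql\bigr)+\Qg^{\ql\qa}_\qg\Qg^{\mu\qb}_\ql-\Qg^{\ql\qb}_\qg\Qg^{\mu\qa}_\ql .
\]
After Step 1 and (a), the terms quadratic in $\Qg$ cancel, and this equals $-g^{\mu\ql}g^{\qa\qs}R^\qb_{\qs\ql\qg}$. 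Here $R^\qb_{\qs\ql\qg}=\qp_\ql\Qg^\qb_{\qs\qg}-\qp_\qg\Qg^\qb_{\qs\ql}+\Qg^\qb_{\tau\ql}\Qg^\tau_{\qs\qg}-\Qg^\qb_{\tau\qg}\Qg^\tau_{\qs\ql}$. This is the coefficient where your normal-coordinate argument genuinely works: only $\qp\Qg$ terms survive and they assemble into $R$.

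Its totally antisymmetric part in $(\qa,\qb,\mu)$ vanishes by the first Bianchi identity. That is exactly the piece your normal form trades for $v_{xx}$ terms, and it is the real reason (b) holds. It is also why no extra $v_xv_x$ terms are produced by the normalisation. The remaining part survives unchanged in the normal form, and since $g$ is invertible its vanishing is equivalent to $R=0$.

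\textbf{How to repair the plan.} Extract $R=0$ from the $v_x\qth\qth\qth^1$ coefficient. The $v_{xx}$ and $v_xv_x$ coefficients then follow as consequences, and your plan becomes a complete proof.
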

By choosing local flat coordinate systems of $g$, a Hamiltonian structure of hydrodynamic type can always be written as the form 
\[
P = \frac12\int\eta^{\qa\qb}\qth_\qa\qth_\qb^1,
\]
for some symmetric non-degenerate constant matrix $\eta^{\qa\qb}$.

One of the most important examples of bi-Hamiltonian structures of hydrodynamic type are those associated with Dubrovin-Frobenius manifolds. Indeed, the invariant metric $\eta$ and the intersection form $g$ defines a flat pencil, which corresponds to a bi-Hamiltonian structure of hydrodynamic type.

\section{Main results}
Throughout this section, we fix an $n$-dimensional smooth manifold $M$ and local coordinates $(v^1,\dots,v^n;\qth_1,\dots,\qth_n)$ of $\hat M$.
\subsection{Generalised Hamiltonian structure}
\begin{definition}
  A generalised Hamiltonian structure is an odd derivation 
  \[
  \diff{}{\qt}\in\derx^1
  \]
  such that
  \[
  \fk{\diff{}{\qt}}{\diff{}{\qt}} = 0.
  \]
  A generalised bi-Hamiltonian structure is a pair of generalised Hamiltonian structures 
  \[
  \diff{}{\qt_0},\,\diff{}{\qt_1}\in\derx^1
  \]
  such that
  \[
  \diff{}{\qt_1}-z \diff{}{\qt_0}
  \]
  remains to be a generalised Hamiltonian structure for any $z$.
\end{definition}

A Hamiltonian structure $P$ gives rise to a generalised Hamiltonian structure $D(P)$ by \eqref{AB} due to the fact \eqref{AC}, and similarly a bi-Hamiltonian structure gives a generalised bi-Hamiltonian structure. However, not all generalised Hamiltonian structures arise in this way. For example, we fix a constant matrix $(\eta^{\qa\qb})$, then it is easy to check that 
\[
\diff{v^\qa}{\qt} = \eta^{\qa\qb}\qth_\qb^1,\quad \diff{\qth_\qa}{\qt} = 0
\] 
is a generalised Hamiltonian structure. It corresponds to a genuine Hamiltonian structure if and only if $\eta^{\qa\qb} = \eta^{\qb\qa}$.

\subsection{Poisson bracket on variational 1-forms}
Let us consider the space of variational differential 1-forms $\mathcal E$ (we use the notations in \cite{LWZ2}). Recall that, in local coordinates $(v^1,\dots,v^n)$, $\mathcal E$ consists of elements of the form
\begin{equation}
    \label{BG}
\qo = \int f_\qa\qd v^\qa,\quad f_\qa\in\hm A^0.
\end{equation}
Note that we have an isomorphism of vector spaces
\[
\mathcal E\cong\derx^{-1},
\]
where an element of the form \eqref{BG} gives rise to a derivation of super degree $-1$ by 
\[
\diff{v^\qa}{t_\qo} = 0,\quad \diff{\qth_\qa}{t_\qo} = -f_\qa.
\]
Therefore, we also call an odd derivation in $\derx^{-1}$ a variational 1-form.

Assume that we are given a generalised Hamiltonian structure $\diff{}{\qt}\in\derx^1$. Let us define a bilinear map 
\begin{equation}
    \label{BH}
    \{-,-\}_\qt\colon \derx^{-1}\times \derx^{-1}\to \derx^{-1},\quad \{X,Y\} = [X_\qt,Y],
\end{equation}
here and henceforth in this subsection, we denote by 
\[
X_\qt = \fk{X}{\diff{}{\qt}} =\fk{\diff{}{\qt}}{X},\quad X\in\derx^{-1}. 
\]

\begin{theorem}
The bracket \eqref{BH} endows the space $\derx^{-1}$ a Lie algebra structure for any generalised Hamiltonian structure $\diff{}{\qt}$.
\end{theorem}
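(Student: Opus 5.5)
The plan is to derive the Lie algebra axioms purely from the graded Jacobi identity for graded commutators in $\derx$, together with $[\diff{}{\qt},\diff{}{\qt}]=0$ and one extra vanishing statement for brackets of two odd derivations of degree $-1$. Write $Q=\diff{}{\qt}$, which has super degree $1$. For $X\in\derx^{-1}$, the derivation $X_\qt=[X,Q]$ has super degree $0$, so it is even. Then $\{X,Y\}=[X_\qt,Y]$ again has degree $-1$, and the bracket indeed maps $\derx^{-1}\times\derx^{-1}\to\derx^{-1}$. Bilinearity is immediate.

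The key preliminary observation is that $[X,Y]=0$ for all $X,Y\in\derx^{-1}$. Such derivations satisfy $X(v^{\qa,s})=0$ and $X(\qth^s_\qa)=-\qp_x^s f_\qa\in\hm A^0$. Hence $Y(X(\qth_\qa^s))=0$, because $Y$ kills every element of $\hm A^0$ (it kills all $v^{\qa,s}$ and $C^\infty$ functions). The graded commutator $XY+YX$ therefore vanishes on generators, and so it vanishes identically. With this fact in hand, the Jacobi identity for derivations gives, for odd $X,Y$,
\[
[Q,[X,Y]]=[[Q,X],Y]-[X,[Q,Y]].
\]
This yields $0=[X_\qt,Y]-[X,Y_\qt]$, i.e. $[X_\qt,Y]=[X,Y_\qt]$. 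Since $X_\qt$ is even and $Y$ is odd, $[X,Y_\qt]=-[Y_\qt,X]$. We conclude $\{X,Y\}=-\{Y,X\}$, which is skew-symmetry.

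For the Jacobi identity, I first compute $\{X,Y\}_\qt=[[X_\qt,Y],Q]$. Applying the graded Jacobi identity, and using that $[X_\qt,Q]=[[X,Q],Q]=\frac12[X,[Q,Q]]=0$, I expect to obtain $\{X,Y\}_\qt=[X_\qt,Y_\qt]$, with the signs checked carefully. The map $X\mapsto X_\qt$ is thus a homomorphism from $(\derx^{-1},\{\,,\})$ into the even derivations with the ordinary commutator. Then
\[
\{\{X,Y\},Z\}=[[X_\qt,Y_\qt],Z]=[X_\qt,[Y_\qt,Z]]-[Y_\qt,[X_\qt,Z]],
\]
which is $\{X,\{Y,Z\}\}-\{Y,\{X,Z\}\}$, the Jacobi identity in Leibniz form. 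Combined with skew-symmetry, this gives the Lie algebra structure.

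The main obstacle is only sign bookkeeping in the graded Jacobi identities, together with justifying $[X,Y]=0$ for degree $-1$ derivations. The latter is where the specific structure of $\derx^{-1}$ (the isomorphism with $\mathcal E$) enters. I would verify each sign convention against the paper's convention that $X_\qt=[X,Q]=[Q,X]$, which holds because both $X$ and $Q$ are odd.
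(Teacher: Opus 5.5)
Your proposal is correct and follows essentially the paper's route. Both arguments get antisymmetry from the graded Jacobi identity together with the vanishing of $[X,Y]$ for $X,Y\in\derx^{-1}$; you verify that vanishing directly instead of citing it. Both also rest on the same identity $\{X,Y\}_\tau=[X_\tau,Y_\tau]$, which the paper derives as $[Q,\{Y,Z\}]=[Y_\tau,Z_\tau]$ with $Q=\diff{}{\qt}$. The only difference is the finish: you obtain the Jacobi identity in Leibniz form from the ordinary commutator Jacobi identity for the even derivations $X_\tau,Y_\tau$. The paper instead uses a cyclic manipulation to show the Jacobiator equals $-2$ times itself; your version is slightly cleaner.
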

\begin{proof}
    By the graded Jacobi identity satisfied by the graded commutators defined on $\derx$, we have 
    \[
    \fk{\fk{\diff{}{\qt}}{X}}{Y}+\fk{[X,Y]}{\diff{}{\qt}}+\fk{\fk{Y}{\diff{}{\qt}}}{X} = 0,\quad X,Y\in\derx^{-1}.
    \] 
    Note that $[X,Y]\in\derx^{-2}$, but such a derivation must be zero (see Lemma 8 of \cite{LWZ3}), hence we arrive at 
    \[
    \fk{X_\qt}{Y}+\fk{Y_\qt}{X} = 0,
    \]
    in other words the bracket \eqref{BH} is antisymmetric. We also note that the above identity is equivalent to 
    \begin{equation}
        \label{BI}
        \fk{X_\qt}{Y} = \fk{X}{Y_\qt}.
    \end{equation}

    It only remains to verify the Jacobi identity for the bracket \eqref{BH}. Let us fix arbitrary $X,Y,Z\in\derx^{-1}$. It follows that 
    \begin{align*}
        \{X,\{Y,Z\}\} &=\fk{X_\qt}{\{Y,Z\}}\\
        &=-\fk{\fk{X}{\{Y,Z\}}}{\diff{}{\qt}}-\fk{\fk{\{Y,Z\}}{\diff{}{\qt}}}{X}\\
        &=-\fk{\fk{\diff{}{\qt}}{\{Y,Z\}}}{X},
    \end{align*}
    here we use again the graded Jacobi identity on $\derx$ and the fact that $\fk{X}{\{Y,Z\}} = 0$ due to  the degree reason. We also use a slightly simplified notation $\{-,-\}$ instead of $\{-,-\}_\qt$.
    Furthermore, we have
    \begin{align*}
        \fk{\diff{}{\qt}}{\{Y,Z\}} &= \fk{\diff{}{\qt}}{[Y_\qt,Z]}\\
        &=\fk{Y_\qt}{\fk{Z}{\diff{}{\qt}}}+\fk{Z}{\fk{\diff{}{\qt}}{Y_\qt}}\\
        &=\fk{Y_\qt}{Z_\qt},
    \end{align*} 
    here we use the condition that $\diff{}{\qt}$ is a generalised Hamiltonian structure.
    We combine the results above, and arrive at
    \begin{align*}
        &\{X,\{Y,Z\}\}+\{Y,\{Z,X\}\}+\{Z,\{X,Y\}\}\\
        =&\,-\fk{\fk{Y_\qt}{Z_\qt}}{X}-\fk{\fk{Z_\qt}{X_\qt}}{Y}-\fk{\fk{X_\qt}{Y_\qt}}{Z}\\
        =&\,\fk{\fk{Z_\qt}{X}}{Y_\qt}+\fk{\fk{X}{Y_\qt}}{Z_\qt}+\fk{\fk{X_\qt}{Y}}{Z_\qt}+\fk{\fk{Y}{Z_\qt}}{X_\qt}\\
        &+\fk{\fk{Y_\qt}{Z}}{X_\qt}+\fk{\fk{Z}{X_\qt}}{Y_\qt}\\
        =&\,-2\fk{Y_\qt}{\fk{Z_\qt}{X}}-2\fk{Z_\qt}{\fk{X_\qt}{Y}}-2\fk{X_\qt}{\fk{Y_\qt}{Z}}\\
        =&\,-2\{Y,\{Z,X\}\}-2\{Z,\{X,Y\}\}-2\{X,\{Y,Z\}\},
    \end{align*}
    which implies 
    \[
    \{X,\{Y,Z\}\}+\{Y,\{Z,X\}\}+\{Z,\{X,Y\}\} = 0,
    \]
    note that we use no more than the graded Jacobi identity on $\derx$ and the identity \eqref{BI}. The theorem is proved.

\end{proof}

The Lie bracket \eqref{BH} on $\derx^{-1}$ induces a Lie algebra structure on $\mathcal E$. If the generalised Hamiltonian structure is given by a Hamiltonian structure of hydrodynamic type, this bracket coincides with the one defined in \cite{AL12}. Indeed, let us fix a constant (symmetric) matrix $\eta^{\qa\qb}$ and introduce a Hamiltonian structure 
\[
P = \frac12\int\eta^{\qa\qb}\qth_\qa\qth_\qb^1,
\] 
which corresponds to the
generalised Hamiltonian structure
\[
\diff{v^\qa}{\qt} = \eta^{\qa\qb}\qth_\qb^1,\quad \diff{\qth_\qa}{\qt} = 0.
\]
Pick $X,Y\in\derx^{-1}$ given by 
\[
X(v^\qa) = 0,\quad X(\qth_\qa) = f_\qa;\quad Y(v^\qa) = 0, \quad Y(\qth_\qa) = g_\qa,
\]
note that they correspond to variational 1-forms 
\[
\qo_X = -\int f_\qa\qd v^\qa,\quad \qo_Y = -\int g_\qa\qd v^\qa. 
\]
Let us compute the bracket $\{X,Y\}$. Firstly, we have 
\begin{align*}
    X_\qt(v^\qa) = X\kk{\eta^{\qa\qb}\qth_\qb^1} = \eta^{\qa\qb}\qp_x f_\qb,\quad X_\qt(\qth_\qa) = \diff{f_\qa}{\qt} = \sum_{k\geq 0}\diff{f_\qa}{v^{\qb,k}}\eta^{\qb\mu}\qth_\mu^{k+1}.
\end{align*}
Therefore, it follows that 
\begin{align*}
    \{X,Y\}(\qth_\qa) &= X_\qt\kk{Y(\qth_\qa)}-Y\kk{X_\qt\kk{\qth_\qa}}\\
    &=X_\qt\kk{g_\qa}-\sum_{k\geq 0}\diff{f_\qa}{v^{\qb,k}}\eta^{\qb\mu}\qp_x^{k+1}g_\mu\\
    &=\sum_{k\geq 0}\diff{g_\qa}{v^{\qb,k}}\eta^{\qb\mu}\qp_x^{k+1} f_\mu-\sum_{k\geq 0}\diff{f_\qa}{v^{\qb,k}}\eta^{\qb\mu}\qp_x^{k+1}g_\mu,
\end{align*}
and hence the variational 1-from corresponding to $\{X,Y\}$ is given by  
\[
\int \kk{\eta^{\qb\mu}\sum_{k\geq 0}\kk{\qp_x^{k+1}g_\mu}\diff{f_\qa}{v^{\qb,k}}-\kk{\qp_x^{k+1}f_\mu}\diff{g_\qa}{v^{\qb,k}}}\qd v^\qa,
\]
which coincides with the expression presented in Proposition 3.4 of \cite{AL12}.

The original motivation of \cite{AL12} to introduce the notion of Poisson bracket on variational 1-forms is to generalise the usual definitions \eqref{BJ} of Hamiltonian systems and study the systems of the form 
\begin{equation}
  \label{BW}
\diff{v^\qa}{t} = \sum_{s\geq 0}P^{\qa\qb}_s\qp_x^s\qo^\qb,
\end{equation}
where $\sum_{s\geq 0}P^{\qa\qb}_s\qp_x^s$ is still a Hamiltonian operator, but $\qo^\qb$ may not necessarily be given by 
\[
\qo^\qb = \vard{H}{v^\qb}
\]
for some Hamiltonian functional $H$. A further generalisation leads to the following definitions. 
\begin{definition}
  \label{BL}
  A derivation $X\in\derx^0$ is called a generalised Hamiltonian system if there exists a generalised Hamiltonian structure $\diff{}{\qt}\in\derx^{1}$ and a variational 1-form $\qo\in\derx^{-1}$ such that 
  \begin{equation}
    \label{BX}
  X = \fk{\diff{}{\qt}}{\qo}.
  \end{equation}
  It is called a generalised bi-Hamiltonian system if 
  \[
  X = \fk{\diff{}{\qt_0}}{\qo}=\fk{\diff{}{\qt_1}}{\qz}
  \]
  for some generalised bi-Hamiltonian structure $(\diff{}{\qt_0},\diff{}{\qt_1})$ and variational 1-forms $\qo$ and $\qz$.
\end{definition} 

\begin{proposition}
  Fix a generalised Hamiltonian structure $\diff{}{\qt}$ and two variational 1-forms $\qo, \qz$. Denote by 
  \[
  X = \fk{\diff{}{\qt}}{\qo},\quad Y = \fk{\diff{}{\qt}}{\qz}.
  \]
  If $\{\qo,\qz\}_\qt = 0$, then $[X,Y] = 0$.
\end{proposition}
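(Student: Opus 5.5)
The plan is to express $[X,Y]$ entirely through graded commutators in $\derx$ and reduce it to the bracket $\{\qo,\qz\}_\qt$. We use only three facts already established in the proof of the theorem on the Lie algebra structure of $\derx^{-1}$: the graded Jacobi identity in $\derx$, the condition $\fk{\diff{}{\qt}}{\diff{}{\qt}}=0$, and the vanishing of $\derx^{-2}$.

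Write $\qo_\qt=X=\fk{\diff{}{\qt}}{\qo}$ and $\qz_\qt=Y=\fk{\diff{}{\qt}}{\qz}$, so $X,Y\in\derx^0$ are even. By hypothesis $\{\qo,\qz\}_\qt=\fk{X}{\qz}=0$. We first apply the graded Jacobi identity to $\diff{}{\qt}$, which is odd, acting on the commutator $\fk{X}{\qz}$:
\[
\fk{\diff{}{\qt}}{\fk{X}{\qz}}=\fk{\fk{\diff{}{\qt}}{X}}{\qz}+\fk{X}{\fk{\diff{}{\qt}}{\qz}}.
\]
The left-hand side vanishes by hypothesis, and the last term equals $\fk{X}{Y}$. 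So it remains to show that $\fk{\diff{}{\qt}}{X}=0$.

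For this, again by graded Jacobi,
\[
\fk{\diff{}{\qt}}{\fk{\diff{}{\qt}}{\qo}}=\fk{\fk{\diff{}{\qt}}{\diff{}{\qt}}}{\qo}-\fk{\diff{}{\qt}}{\fk{\diff{}{\qt}}{\qo}}.
\]
Since $\fk{\diff{}{\qt}}{\diff{}{\qt}}=0$, this gives $2\fk{\diff{}{\qt}}{X}=0$, hence $\fk{\diff{}{\qt}}{X}=0$. Substituting back, we obtain $[X,Y]=\fk{\diff{}{\qt}}{\{\qo,\qz\}_\qt}=0$.

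The only step that needs care is getting the signs in the graded Jacobi identity right, because of the parities involved ($\diff{}{\qt}$ odd, $\qo,\qz$ odd, $X,Y$ even). Once these are tracked correctly, the argument is purely formal. The vanishing of $\derx^{-2}$ is not even needed here, though it was used earlier to establish the antisymmetry $\fk{X}{\qz}=-\fk{Y}{\qo}$, which makes the hypothesis symmetric in $\qo$ and $\qz$.
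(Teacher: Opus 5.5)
Your proof is correct and follows the paper's route: the paper just says the result follows from the graded Jacobi identity on $\derx$. You supply the details, namely $[\diff{}{\qt},X]=\tfrac12[[\diff{}{\qt},\diff{}{\qt}],\qo]=0$ and hence $[X,Y]=[\diff{}{\qt},\{\qo,\qz\}_\qt]=0$, with correct signs, and you rightly note that the vanishing of $\derx^{-2}$ is not needed.
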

\begin{proof}
  This follows directly from the graded Jacobi identity on $\derx$.
\end{proof}

This is a generalisation of the well-known fact that commuting Hamiltonian functionals lead to commuting Hamiltonian systems.

\subsection{Differential geometry of generalised Hamiltonian structure of hydrodynamic type}
Let us consider an odd derivation $\diff{}{\qt}\in\derx^1_1$ given by 
\begin{equation}
  \label{AD} 
  \diff{v^\qa}{\qt} = g^{\qa\qb}\qth_\qb^1+\Qg^{\qa\qb}_\qg v^\qg_x\qth_\qb,\quad \diff{\qth_\qa}{\qt} = V_\qa^{\mu\ql}\qth_\ql\qth_\mu^1+Q^{\mu\ql}_{\qa\qg}v^\qg_x\qth_\ql\qth_\mu,
\end{equation}
where the quantities $g^{\qa \qb}$, $\Qg^{\qa\qb}_\qg$, $V^{\qa\qb}_\qg$ and $Q^{\ql\mu}_{\qa\qb}$ are some functions on $M$, and we assume that 
\[
Q^{\ql\mu}_{\qa\qb}+Q^{\mu\ql}_{\qa\qb} = 0.
\]

\begin{definition}
  An odd derivation of hydrodynamic type is that of the form \eqref{AD} such that 
  $\det(g^{\qa\qb})\neq 0$. If it defines a generalised Hamiltonian structure, we call it a (non-degenerate) generalised Hamiltonian structure of hydrodynamic type.
\end{definition}

\begin{lemma}
  An odd derivation of hydrodynamic type of the form \eqref{AD} defines a bundle isomorphism
  \[
  g^\sharp\colon T^*M\to TM,\quad g^\sharp(dv^\qa) = g^{\qb\qa}\qp_\qb
  \]
  and a connection $\prescript{\vee}{}\nabla$ on the cotangent bundle given by 
  \[
  \prescript{\vee}{}\nabla_{\qp_\ql}(dv^\qa) = \prescript{\vee}{}A^\qa_{\ql\mu} dv^\mu,\quad \prescript{\vee}{}A^\qa_{\ql\mu}g^{\qb\mu} = \Qg^{\qb\qa}_\ql.
  \]
\end{lemma}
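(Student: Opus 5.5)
The plan is to check that the two objects are well defined, i.e. independent of the chosen coordinates. The point is that the formula \eqref{AD} must be preserved under a change of coordinates $v^\qa=v^\qa(\tilde v)$ on $M$, with the induced change $\qth_\qa=\diff{\tilde v^\qb}{v^\qa}\tilde\qth_\qb$ of the odd fibre coordinates. Once we know how the coefficients transform, $g^\sharp$ is a tensor and ${}^{\vee}\nabla$ is a connection. Since $\diff{}{\qt}$ commutes with $\qp_x$, it is completely determined by its values on $v^\qa$ and $\qth_\qa$. So it suffices to compute $\diff{\tilde v^\qa}{\qt}$ by the chain rule and compare it with the same ansatz written in the tilde coordinates.

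\emph{Transformation of $g$.} Write $J^\qa_\qb=\diff{\tilde v^\qa}{v^\qb}$, so that $\tilde\qth_\qb=(J^{-1})^{\qa}_{\qb}\qth_\qa$, up to the usual index convention. The chain rule gives
\[
\diff{\tilde v^\qa}{\qt}=J^\qa_\qb\diff{v^\qb}{\qt}.
\]
Now substitute $\qth_\qb^1=\qp_x(J^\mu_\qb\tilde\qth_\mu)=J^\mu_\qb\tilde\qth^1_\mu+\qp_\qg J^\mu_\qb\, v^\qg_x\tilde\qth_\mu$. Collecting the coefficient of $\tilde\qth^1_\mu$ yields
\[
\tilde g^{\qa\mu}=J^\qa_\qb J^\mu_\ql g^{\qb\ql},
\]
which is the transformation law of a $(2,0)$-tensor. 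Hence $g^\sharp(dv^\qa)=g^{\qb\qa}\qp_\qb$ is a globally defined bundle map, and it is an isomorphism because $\det g\neq0$.

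\emph{Transformation of $\Gamma$ and the connection.} Collecting the coefficient of $v^\qg_x\tilde\qth_\mu$, we get
\[
\tilde\Gamma^{\qa\mu}_{\qg}\,\tilde v^\qg_x=J^\qa_\qb\Gamma^{\qb\ql}_{\qg}J^\mu_\ql\, v^\qg_x+J^\qa_\qb g^{\qb\ql}\,\qp_\qg J^\mu_\ql\, v^\qg_x .
\]
So $\Gamma$ transforms as a tensor plus the inhomogeneous term $g\,\qp J$. Next define ${}^{\vee}A^\qa_{\ql\mu}$ by ${}^{\vee}A^\qa_{\ql\mu}g^{\qb\mu}=\Gamma^{\qb\qa}_\ql$; this is possible since $g$ is invertible, and it amounts to contracting with $g^{-1}$. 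Then check that ${}^{\vee}A$ transforms with exactly the inhomogeneous term required of the Christoffel symbols of a connection on $T^*M$. Concretely, the law should be ${}^\vee\nabla_X(dv^\qa)$, with the term $-\qp_\ql\qp_\mu \tilde v^{\qa}$ contracted with Jacobians appearing from $d(d\tilde v^\qa)=d(J^\qa_\mu dv^\mu)$. The $g\,\qp J$ term above, multiplied by $g^{-1}$, produces precisely $\qp_\qg J$, which matches the term arising from the Leibniz rule $\nabla(J\,dv)=dJ\otimes dv+J\,\nabla dv$.

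\emph{Main obstacle.} The delicate part is the index bookkeeping. The matrix $g$ is not symmetric, so the order of indices in $g^{\qb\mu}$ versus $g^{\mu\qb}$ must be kept consistent with the definition of $g^\sharp$ and of ${}^{\vee}A$. One must also check that the term $\qp_\qg J^\mu_\ql$ sits on the correct index, the second index of $g$. This is what forces the contraction ${}^{\vee}A^\qa_{\ql\mu}g^{\qb\mu}=\Gamma^{\qb\qa}_\ql$ rather than the other ordering, and I will verify it carefully. The $\qth$-equation, with coefficients $V$ and $Q$, plays no role here: the lemma only concerns $g$ and $\Gamma$, and the action on $v^\qa$ already determines them.
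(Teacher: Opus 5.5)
Your proposal is correct and is essentially the paper's proof. You derive the transformation laws \eqref{AE} by applying the chain rule to $\partial\bar v^\alpha/\partial\tau$ with $\theta_\alpha=\bar J^\beta_\alpha\bar\theta_\beta$, and read off that $g$ is a $(2,0)$-tensor. You then check that, after solving ${}^{\vee}A^\alpha_{\lambda\mu}g^{\beta\mu}=\Gamma^{\beta\alpha}_\lambda$, the inhomogeneous $g\,\partial\bar J$ term (with $\partial\bar J$ on the second index of $g$, as you note) becomes exactly the Leibniz term of a connection on $T^*M$; this is the ``straightforward computation'' the paper leaves implicit. One small slip: with the lemma's convention ${}^{\vee}\nabla_{\partial_\lambda}dv^\alpha={}^{\vee}A^\alpha_{\lambda\mu}dv^\mu$, the inhomogeneous term is $+\frac{\partial v^\gamma}{\partial\bar v^\delta}\frac{\partial v^\nu}{\partial\bar v^\epsilon}\partial_\gamma\partial_\nu\bar v^\alpha$, with a plus sign. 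It comes from $\nabla(d\bar v^\alpha)=\nabla(\bar J^\alpha_\nu dv^\nu)$, as in your Leibniz remark, not from $d(d\bar v^\alpha)$.
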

\begin{proof}
  We need to show that under a coordinate transformation on $M$, the quantities $g^{\qa\qb}$ transform as a $(2,0)$-tensor and $\prescript{\vee}{}A^\qa_{\ql\mu}$ transform as the Christoffel symbols of a connection on the cotangent bundle.

  Let us choose two local coordinates $(v^\qa)$ and $(\bar v^\qa)$ on $M$ whose Jacobian matrices are denoted by 
  \[
\bar J^\qa_\qb =\diff{\bar v^\qa}{v^\qb},\quad \xb J^\qa_\qb =\diff{v^\qa}{\bar v^\qb}.
\]
Then we see that the corresponding (canonical) odd variables $\qth_\qa$ and $\bar\qth_\qa$ are related by 
\[
\qth_\qa = \bar J^\qb_\qa\bar\qth_\qb.
\]
Denote by 
\[
\diff{\bar v^\qa}{\qt} = \bar g^{\qa\qb}\bar \qth_\qb^1+\bar \Qg^{\qa\qb}_\qg \bar v^\qg_x\bar \qth_\qb,
\]
then it follows that
\begin{align*}
    \diff{\bar v^\qa}{\qt} =&\,\bar J^\qa_\qb\kk{g^{\qb\ql}\qth_\ql^1+\Qg^{\qb\ql}_\qg v^\qg_x\qth_\ql} \\
    =&\,\bar J^\qa_\qb g^{\qb\ql}\qp_x\kk{\bar J^\mu_\ql\bar\qth_\mu}+ \bar J^\qa_\qb\Qg^{\qb\ql}_\qg \xb J^\qg_\qd \bar v^\qd_x \bar J^\mu_\ql\bar\qth_\mu\\
    =&\, \bar J^\qa_\qb g^{\qb\ql}\bar J^\mu_\ql\bar\qth_\mu^1+\bar J^\qa_\qb g^{\qb\ql}\qp_\rho \bar J^\mu_\ql v^\rho_x\bar\qth_\mu+ \bar J^\qa_\qb\Qg^{\qb\ql}_\qg \xb J^\qg_\qd \bar v^\qd_x \bar J^\mu_\ql\bar\qth_\mu\\
    =&\, \bar J^\qa_\qb g^{\qb\ql}\bar J^\mu_\ql\bar\qth_\mu^1+\bar J^\qa_\qb g^{\qb\ql}\qp_\rho \bar J^\mu_\ql \xb J^\rho_\qd \bar v^\qd_x\bar\qth_\mu+ \bar J^\qa_\qb\Qg^{\qb\ql}_\qg \xb J^\qg_\qd \bar v^\qd_x \bar J^\mu_\ql\bar\qth_\mu,
\end{align*}
from which we conclude that 
\begin{equation}
    \label{AE}
\bar g^{\qa\mu} = \bar J^\qa_\qb g^{\qb\ql}\bar J^\mu_\ql,\quad \bar\Qg^{\qa\mu}_\qd = \bar J^\qa_\qb\bar J^\mu_\ql \xb J^\qg_\qd   \Qg^{\qb\ql}_\qg+\bar J^\qa_\qb \qp_\rho \bar J^\mu_\ql \xb J^\rho_\qd g^{\qb\ql}.
\end{equation}
The first identity of \eqref{AE} implies that $g^{\qa\qb}$ defines a tensor field of type $(2,0)$, and thus can be interpreted as a bundle map $T^*M\to TM$, which is an isomorphism due to non-degeneracy assumption. A straightforward computation using the second identity of \eqref{AE} shows that the quantities $\prescript{\vee}{}A^\qa_{\ql\mu}$ determined by 
\[
\prescript{\vee}{}A^\qa_{\ql\mu}g^{\qb\mu} = \Qg^{\qb\qa}_\ql
\]
transform as Christoffel symbols of a connection on the cotangent bundle. The lemma is proved.
\end{proof}

For an odd derivation of hydrodynamic type of the form \eqref{AD}, we can define the transported affine connection $\nabla$ by 
\[
\nabla_X(Y) = g^\sharp \prescript{\vee}{}\nabla_X(g^\flat (Y)),\quad\forall\, X,Y\in\Qg(TM),
\]
here we denote by $g^\flat\colon TM\to T^*M$ the inverse of $g^\sharp$. If we set 
\[
g^\flat(\qp_\qa) = g_{\qa\qb}dv^\qb,
\]
then it follows that 
\begin{equation}
  \label{AN}
g^{\mu\qa}g_{\mu\qb} = \qd^\qa_\qb,\quad g_{\qa\mu}g^{\qb\mu} = \qd^\qb_\qa,
\end{equation}
and the Christoffel symbols of the transported connection read 
\begin{equation}
  \label{AM}
\nabla_{\qp_\qa}\qp_\qb = A^\ql_{\qa\qb}\qp_\ql,\quad A^\ql_{\qa\qb} = -g_{\qb\mu}\qp_\qa g^{\ql\mu}+g_{\qb\qg}\Qg^{\ql\qg}_\qa,
\end{equation}
which is equivalent to 
\begin{equation}
  \label{BC}
  \Qg^{\qa\qb}_\qg = \qp_\qg g^{\qa\qb}+g^{\mu\qb}A^\qa_{\qg\mu}.
\end{equation}

\begin{definition}
  We call $(g^\sharp,\nabla)$ the geometric data associated to an odd derivation of hydrodynamic type of the form \eqref{AD}.
\end{definition}

\begin{lemma}
  \label{AF}
  A generalised Hamiltonian structure of hydrodynamic type is uniquely determined by its associated geometric data.
\end{lemma}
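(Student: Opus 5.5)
Given the geometric data $(g^\sharp,\nabla)$, the functions $g^{\qa\qb}$ are the components of $g^\sharp$, and by \eqref{BC} the functions $\Qg^{\qa\qb}_\qg$ are recovered from $g$ and the Christoffel symbols $A^\ql_{\qa\qb}$ of $\nabla$. Hence the component $\diff{v^\qa}{\qt}$ of \eqref{AD} is fixed by the geometric data. What has to be proved is that the remaining coefficients $V^{\mu\ql}_\qa$ and $Q^{\mu\ql}_{\qa\qg}$ of $\diff{\qth_\qa}{\qt}$ are forced by the condition $\fk{\diff{}{\qt}}{\diff{}{\qt}}=0$. Since $\diff{}{\qt}$ is odd, $\fk{\diff{}{\qt}}{\diff{}{\qt}}=2\kk{\diff{}{\qt}}^2$. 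The plan is to compute $\kk{\diff{}{\qt}}^2(v^\qa)=0$ and read off $V$ and $Q$ from the coefficients.

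\textbf{Step 1: expand $\kk{\diff{}{\qt}}^2 v^\qa$.} Write $\diff{}{\qt}$ applied to $g^{\qa\qb}\qth^1_\qb+\Qg^{\qa\qb}_\qg v^\qg_x\qth_\qb$ as follows:
\begin{itemize}
\item differentiate the coefficients $g^{\qa\qb}$ and $\Qg^{\qa\qb}_\qg$ via $\diff{v^\rho}{\qt}$;
\item use $\diff{}{\qt}v^\qg_x=\qp_x\diff{v^\qg}{\qt}$;
\item use $\diff{}{\qt}\qth_\qb^1=\qp_x\diff{\qth_\qb}{\qt}$ (the derivation commutes with $\qp_x$);
\item keep track of signs, since $\diff{}{\qt}$ is odd and passes through $\qth$'s.
\end{itemize}
The result is a homogeneous element of $\jja^2_2$. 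It is a combination of the independent monomials $\qth^2_\ql\qth_\mu$, $\qth^1_\ql\qth^1_\mu$, $v^\qg_x\qth^1_\ql\qth_\mu$, $v^\qg_{xx}\qth_\ql\qth_\mu$ and $v^\qg_xv^\qd_x\qth_\ql\qth_\mu$.

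\textbf{Step 2: solve for $V$ from the top-order terms.} The only terms containing $\qth^2$ come from $g^{\qa\qb}\qp_x\kk{V^{\mu\ql}_\qb\qth_\ql\qth^1_\mu}$ and from $\Qg\cdots$ contributions. Collecting the $\qth^2_\ql\qth_\mu$ coefficient, together with the antisymmetrised $\qth^1\qth^1$ coefficient, should give a linear equation of the form $g^{\qa\qb}V^{\mu\ql}_\qb=(\text{expression in }g,\Qg,\qp g)$. Because $g$ is invertible, this determines $V^{\mu\ql}_\qb$. I expect the answer to be roughly $V^{\mu\ql}_\qb$ proportional to $g_{\qb\qa}\kk{\Qg^{\qa\mu}\text{-type terms}}$, that is, $V$ is essentially $-A$ contracted with $g$.

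\textbf{Step 3: solve for $Q$ from the $v^\qg_{xx}$ and $v_x\qth^1$ terms.} Do the same with the $v^\qg_{xx}\qth_\ql\qth_\mu$ and $v^\qg_x\qth^1_\ql\qth_\mu$ coefficients. These contain $g^{\qa\qb}Q^{\mu\ql}_{\qb\qg}$ and $g^{\qa\qb}\qp_x(V\cdots)$. Since $Q$ is antisymmetric in its upper indices, the antisymmetric parts of the coefficients pin $Q$ down once $V$ is known, again by inverting $g$.

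\textbf{Main obstacle.} The delicate point is that a monomial like $\qth_\ql\qth_\mu$ only sees the antisymmetric part of its coefficient in $(\ql,\mu)$. So I must check that enough independent monomials occur to capture all components of $V^{\mu\ql}_\qa$, which has no symmetry assumed.
\begin{itemize}
\item In $\qth^1_\ql\qth_\mu$ the two factors are distinguishable, so all components of the coefficient survive.
\item The coefficient of $\qth^1_\ql\qth_\mu$ in $\kk{\diff{}{\qt}}^2 v^\qa$ should therefore be $g^{\qa\qb}\cdot(\text{linear in }V)$ plus known terms.
\end{itemize}
I would try first to isolate the $\qth^1_\ql\qth_\mu$ coefficient in the $v_x$-free part, showing it equals $g^{\qa\qb}V^{\mu\ql}_\qb$ plus terms built from $g$ and $\Qg$. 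The $Q$ terms enter only with $v_x$, through $g^{\qa\qb}\qp_x\diff{\qth_\qb}{\qt}$. This gives $V$ explicitly. The $v_x\qth_\ql\qth_\mu$ coefficient then gives $Q$ up to its known antisymmetry.
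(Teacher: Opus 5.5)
Your plan is the paper's own: expand $\bigl(\diff{}{\tau}\bigr)^2 v^\alpha=0$, read off \eqref{AL}, i.e.\ $g^{\alpha\beta}V^{\mu\lambda}_\beta=\Gamma^{\alpha\lambda}_\gamma g^{\gamma\mu}$, from the $\theta_\lambda\theta^2_\mu$ coefficient, read off \eqref{AZ} from the $v^\delta_{xx}\theta_\lambda\theta_\mu$ coefficient, and then invert $g$. These two coefficients alone suffice, so the $\theta^1\theta^1$ and $v_x\theta^1\theta$ inputs you also mention are unnecessary. Your ``main obstacle'' paragraph is off by one derivative order: this differential-degree-2 expression contains no $v_x$-free $\theta^1_\lambda\theta_\mu$ term and no $v_x\theta_\lambda\theta_\mu$ term. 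Those are the monomials of $\diff{\theta_\alpha}{\tau}$ itself, and they are shifted up one order by the $\partial_x$ in $g^{\alpha\beta}\partial_x\diff{\theta_\beta}{\tau}$. The worry that paragraph raises is already settled by your Step 2: $\theta_\lambda$ and $\theta^2_\mu$ are distinct generators, so the $\theta_\lambda\theta^2_\mu$ coefficient captures every component of $V$.
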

\begin{proof}
  We fix a generalised Hamiltonian structure of hydrodynamic type of the form \eqref{AD}. We only need to show that the quantities $V^{\qa\qb}_\qg$ and $Q^{\mu\ql}_{\qa\qb}$ are determined uniquely from the associated geometric data.

Let us denote by 
\[
W^\qa:=\diff{}{\qt}\diff{v^\qa}{\qt},\quad Z_\qa:=\diff{}{\qt}\diff{\qth_\qa}{\qt}.
\]
Then it follows from a straightforward computation that 
  \begin{align*}
    W^\qa=&\,\diff{}{\qt}\kk{g^{\qa\qb}\qth_\qb^1+\Qg^{\qa\qb}_\qg v^\qg_x\qth_\qb}\\
    =&\,\qp_\qe g^{\qa\qb}\kk{g^{\qe\rho}\qth_\rho^1+\Qg^{\qe\rho}_\qd v^\qd_x\qth_\rho}\qth_\qb^1+g^{\qa\qb}\qp_x\kk{ V_\qb^{\mu\ql}\qth_\ql\qth_\mu^1+Q^{\ql\mu}_{\qb\qg}v^\qg_x\qth_\ql\qth_\mu}\\
    &+\qp_\qe\Qg^{\qa\qb}_\qg\kk{g^{\qe\rho}\qth_\rho^1+\Qg^{\qe\rho}_\qd v^\qd_x\qth_\rho}v^\qg_x\qth_\qb+\Qg^{\qa\qb}_\qg\qp_x\kk{g^{\qg\rho}\qth_\rho^1+\Qg^{\qg\rho}_\qd v^\qd_x\qth_\rho}\qth_\qb\\
    &+\Qg^{\qa\qb}_\qg v^\qg_x\kk{V_\qb^{\mu\ql}\qth_\ql\qth_\mu^1+Q^{\ql\mu}_{\qb\qg}v^\qg_x\qth_\ql\qth_\mu},\\
    Z_\qa =&\,\diff{}{\qt}\kk{V_\qa^{\mu\ql}\qth_\ql\qth_\mu^1+Q^{\ql\mu}_{\qa\qg}v^\qg_x\qth_\ql\qth_\mu}\\
    =&\,\qp_\qe V^{\mu\ql}_\qa\kk{g^{\qe\rho}\qth_\rho^1+\Qg^{\qe\rho}_\qd v^\qd_x\qth_\rho}\qth_\ql\qth_\mu^1+V^{\mu\ql}_\qa\kk{V_\ql^{\qb\qg}\qth_\qg\qth_\qb^1+Q_{\ql\rho}^{\qb\qg}v^\rho_x\qth_\qb\qth_\qg}\qth_\mu^1\\
    &-V^{\mu\ql}_\qa\qth_\ql\qp_x\kk{V_\mu^{\qb\qg}\qth_\qg\qth_\qb^1+Q_{\mu\rho}^{\qb\qg}v^\rho_x\qth_\qb\qth_\qg}+\diff{}{\qt}\kk{Q^{\ql\mu}_{\qa\qg}v^\qg_x\qth_\ql\qth_\mu}.
\end{align*}
We have $W^\qa = 0$ and $Z_\qa = 0$ by assumption, and the identity 
\[
\frac{\qp^2W^\qa}{\qp\qth_\mu^2\qp\qth_\ql} = 0
\]
implies that 
\begin{equation}
  \label{AL}
g^{\qa\qb}V_\qb^{\mu\ql} = \Qg^{\qa\ql}_\qg g^{\qg\mu},
\end{equation}
while the identity
\[
\frac{\qp^3 W^\qa}{\qp\qth_\ql\qp\qth_\mu\qp v^\qd_{xx}} = 0
\]
implies that 
\begin{equation}
  \label{AZ}
g^{\qa\qb}Q^{\ql\mu}_{\qb\qd} = \frac12\kk{\Qg^{\qa\ql}_\qg\Qg^{\qg\mu}_\qd-\Qg^{\qa\mu}_\qg\Qg^{\qg\ql}_\qd}.
\end{equation}
Indeed, the quantities $V^{\qa\qb}_\qg$ and $Q^{\mu\ql}_{\qa\qb}$ are determined uniquely from the geometric data since we assume that $g^{\qa\qb}$ is invertible. The lemma is proved.
\end{proof}

\begin{theorem}
  \label{AQ}
  Fix an odd derivation of hydrodynamic type with the associated geometric data $(g^\sharp,\nabla)$ of the form \eqref{AD} satisfying the identities \eqref{AL} and \eqref{AZ}. Then it defines a generalised Hamiltonian structure if and only if $\nabla$ is torsionless and flat. 
\end{theorem}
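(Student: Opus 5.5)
The plan is to reduce the condition $\left[\frac{\partial}{\partial\tau},\frac{\partial}{\partial\tau}\right]=0$ to the two conditions $W^\alpha=0$ and $Z_\alpha=0$ of Lemma \ref{AF}. Since $\frac{\partial}{\partial\tau}$ is odd, we have $\left[\frac{\partial}{\partial\tau},\frac{\partial}{\partial\tau}\right]=2\left(\frac{\partial}{\partial\tau}\right)^2$. Its square commutes with $\partial_x$, so it is determined by its values on the generators $v^\alpha$ and $\theta_\alpha$. The statement is therefore equivalent to: $W^\alpha=0$ and $Z_\alpha=0$ if and only if $\nabla$ is torsionless and flat.

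The argument is tensorial, so I will work in arbitrary coordinates. Using \eqref{BC}, I write $\Gamma^{\alpha\beta}_\gamma=\partial_\gamma g^{\alpha\beta}+g^{\mu\beta}A^\alpha_{\gamma\mu}$. Relation \eqref{AL} then expresses $V$ through $g$ and $A$, and \eqref{AZ} expresses $Q$ through $\Gamma$. I expand $W^\alpha$ by monomials in the odd variables. Because $\deg_x W^\alpha=2$ and $W^\alpha$ is quadratic in the $\theta$'s, the possible monomials are
\[
\theta^1_\lambda\theta^1_\mu,\qquad \theta_\lambda\theta^2_\mu,\qquad v^\gamma_x\theta_\lambda\theta^1_\mu,\qquad v^\gamma_{xx}\theta_\lambda\theta_\mu,\qquad v^\gamma_xv^\delta_x\theta_\lambda\theta_\mu .
\]
The $\theta\theta^2$ and $v_{xx}$ coefficients vanish identically; these are exactly the identities \eqref{AL} and \eqref{AZ}, which hold by hypothesis.

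Next comes the $\theta^1_\lambda\theta^1_\mu$ coefficient. It is the antisymmetrisation in $\lambda,\mu$ of $g^{\epsilon\lambda}\partial_\epsilon g^{\alpha\mu}+g^{\alpha\beta}V^{\mu\lambda}_\beta$. After substituting \eqref{AL} and \eqref{BC}, I expect it to become
\[
g^{\epsilon\lambda}g^{\rho\mu}\left(A^\alpha_{\epsilon\rho}-A^\alpha_{\rho\epsilon}\right)=0 .
\]
By invertibility of $g$, this is torsion-freeness of $\nabla$. I would check this first, because it fixes the normalisation conventions.

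Then I treat the $v^\gamma_x\theta_\lambda\theta^1_\mu$ coefficient. Here I expect to find the curvature $R^\alpha_{\gamma\epsilon\rho}$ contracted with $g^{\cdot\lambda}g^{\cdot\mu}$, plus terms proportional to the torsion and its derivative. Once the torsion vanishes, this coefficient gives flatness. The remaining $v_xv_x\theta\theta$ coefficient of $W^\alpha$ should follow from these: either it turns out to be a combination of $R$, $T$ and $\nabla R$, or it follows from the ones already treated by the consistency identity $\left[\frac{\partial}{\partial\tau},\left(\frac{\partial}{\partial\tau}\right)^2\right]=0$.

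For $Z_\alpha$, I would avoid a direct computation. Instead I would argue that if $W^\alpha=0$, then $\left(\frac{\partial}{\partial\tau}\right)^2$ is a derivation commuting with $\partial_x$ that kills all $v^{\alpha,s}$. Applying $\frac{\partial}{\partial\tau}$ to $\frac{\partial v^\alpha}{\partial\tau}$ and using that $\frac{\partial}{\partial\tau}$ commutes with its square gives
\[
0=\left(\frac{\partial}{\partial\tau}\right)^2\left(g^{\alpha\beta}\theta^1_\beta+\Gamma^{\alpha\beta}_\gamma v^\gamma_x\theta_\beta\right)=g^{\alpha\beta}\partial_x Z_\beta+\Gamma^{\alpha\beta}_\gamma v^\gamma_xZ_\beta .
\]
This is a linear differential relation on $Z$ with invertible leading coefficient $g$. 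Comparing the top $x$-degree parts forces $Z=0$, for instance by looking at the highest $\theta^{s}$ occurring. Conversely, if $\nabla$ is torsionless and flat, I would choose flat coordinates and the tensor $g$. Then $A=0$, $\Gamma^{\alpha\beta}_\gamma=\partial_\gamma g^{\alpha\beta}$, and the coefficient computations show $W=0$, hence $Z=0$ by the same argument.

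The main obstacle is the bookkeeping in the $v_x\theta\theta^1$ and $v_xv_x\theta\theta$ coefficients. One has to recognise the curvature after using \eqref{AZ} and \eqref{BC}, and make sure no extra obstruction survives. Flat coordinates cannot be used for the \emph{only if} direction, so I would first simplify in normal-type coordinates at a point where $A$ vanishes, although its derivatives do not.
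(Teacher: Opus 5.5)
Your proof is correct, but it is organised differently from the paper's.

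\textbf{The paper's route.} It gets torsion-freeness from the $\theta^1\theta^1$ part of $W^\alpha$, as you do. It then reads flatness off the $\theta_\lambda\theta^1_\mu\theta^1_\gamma$ coefficient of $Z_\alpha$, first for ${}^{\vee}\nabla$ and then transported to $\nabla$. For the converse it works in coordinates with ${}^{\vee}\nabla dv^\alpha=0$; there $\Gamma=V=Q=0$, and only the torsion condition is left.

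\textbf{Your route.} You show that $W^\alpha=0$ already forces $Z_\alpha=0$, via $[\partial_\tau,\partial_\tau^2]=0$ and a top-order argument using $\det g\neq 0$. So you never compute $Z$. You then find the curvature in the $v_x\theta\theta^1$ coefficient of $W$, and this is right: in odd variables with $g^{\alpha\beta}=\delta^{\alpha\beta}$, that coefficient is $R(\partial_\delta,\partial_\mu)\partial_\lambda$ plus a term linear in the torsion.

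\textbf{What each buys.} Your lemma is a genuine structural gain: it shows the $\theta$-components of $[\partial_\tau,\partial_\tau]=0$ are redundant. The paper's converse is shorter.

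\textbf{Your converse.} Keeping a general $g$ in $\nabla$-flat coordinates, you still owe the $v_x\theta\theta^1$ and $v_xv_x\theta\theta$ coefficients. Both of your options for the latter go through. For the Jacobi option:
\begin{itemize}
\item order counting gives $Z$ jet order at most one;
\item the $v_{xx}\theta\theta\theta^1$ terms then kill the $v_x\theta\theta\theta^1$ part of $Z$;
\item the $v_x\theta\theta\theta^2$ terms then force the coefficient to vanish.
\end{itemize}
You can bypass all of this. Change odd variables, $\theta_\alpha=T^\beta_\alpha\sigma_\beta$, so that $g$ becomes constant in $\nabla$-flat coordinates. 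The identities \eqref{AL} and \eqref{AZ} persist under this change, since they are the $\theta\theta^2$ and $v_{xx}\theta\theta$ parts of $W^\alpha$. Then \eqref{BC} gives $\Gamma=V=Q=0$, and $W=Z=0$ trivially.

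Incidentally, the paper's converse tacitly needs the same change of odd frame. For a fixed $g^\sharp$, a ${}^{\vee}\nabla$-flat coframe need not be closed, so coordinates with ${}^{\vee}\nabla dv^\alpha=0$ need not exist without it.
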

\begin{proof}
  For the "only if" part, we continue to use the same notations as those in the proof of Lemma \ref{AF}. 
  The identity
\[
\frac{\qp^2W^\qa}{\qp\qth_\mu^1\qp\qth_\ql^1} = 0
\]
implies that 
\begin{equation}
    \label{AK}
\qp_\qe g^{\qa\mu}g^{\qe\ql}-\qp_\qe g^{\qa\ql}g^{\qe\mu}+g^{\qa\qb}V^{\mu\ql}_\qb -g^{\qa\qb}V^{\ql\mu}_\qb = 0. 
\end{equation}
Using the identity \eqref{AL} and the definition \eqref{AM} of the transported connection, it is easy to see that \eqref{AK} is equivalent to 
\[
A^\ql_{\qa\qb} = A^\ql_{\qb\qa},
\]
hence the connection $\nabla$ is torsionless. Next we consider the identity
\[
\frac{\qp^3Z_\qa}{\qp\qth_\ql\qp\qth_\mu^1\qp\qth_\qg^1} = 0,
\]
from which it follows that 
\begin{equation}
\label{AO}
-\qp_\qe V^{\mu\ql}_\qa g^{\qe\qg}+\qp_\qe V^{\qg\ql}_\qa g^{\qe\mu}+V_\qa^{\mu\rho}V_\rho^{\qg\ql}-V_\qa^{\qg\rho}V_\rho^{\mu\ql}-V_\qa^{\rho\ql}V_\rho^{\mu\qg}+V_\qa^{\rho\ql}V_\rho^{\qg\mu} = 0.
\end{equation}
Note that the identity \eqref{AL} is equivalent to 
\begin{equation}
  \label{AP}
V^{\qa\ql}_\qb = g^{\qg\qa}\prescript{\vee}{}A^\ql_{\qg\qb},
\end{equation}
hence we conclude from the identity \eqref{AO} that 
\begin{align*}
    &-\qp_\qe V^{\mu\ql}_\qa g^{\qe\qg} +\qp_\qe V^{\qg\ql}_\qa g^{\qe\mu}\\
    =&\,  -\qp_\qe\kk{g^{\rho\mu}\prescript{\vee}{}A^\ql_{\rho\qa}}g^{\qe\qg} +\qp_\qe \kk{g^{\rho\qg}\prescript{\vee}{}A^\ql_{\rho\qa}} g^{\qe\mu}\\
=&\, -\qp_\qe\kk{g^{\rho\mu}}\prescript{\vee}{}A^\ql_{\rho\qa}g^{\qe\qg} +\qp_\qe \kk{g^{\rho\qg}}\prescript{\vee}{}A^\ql_{\rho\qa} g^{\qe\mu}\\
&-g^{\rho\mu}\qp_\qe \prescript{\vee}{}A^\ql_{\rho\qa}g^{\qe\qg} + g^{\rho\qg}\qp_\qe \prescript{\vee}{}A^\ql_{\rho\qa} g^{\qe\mu}\\
=&\,\prescript{\vee}{}A^\ql_{\rho\qa}\kk{g^{\rho\qb}V^{\mu\qg}_\qb -g^{\rho\qb}V^{\qg\mu}_\qb}-g^{\rho\mu}\qp_\qe \prescript{\vee}{}A^\ql_{\rho\qa}g^{\qe\qg} + g^{\rho\qg}\qp_\qe \prescript{\vee}{}A^\ql_{\rho\qa} g^{\qe\mu}\\
=&\,V^{\rho\ql}_\qa V^{\mu\qg}_\rho-V^{\rho\ql}_\qa V^{\qg\mu}_\rho-g^{\rho\mu}\qp_\qe \prescript{\vee}{}A^\ql_{\rho\qa}g^{\qe\qg} + g^{\rho\qg}\qp_\qe \prescript{\vee}{}A^\ql_{\rho\qa} g^{\qe\mu},
\end{align*}
where we have used \eqref{AK} for the third equality. Using \eqref{AP} again, it is easy to see that the connection $\prescript{\vee}{}\nabla$ is flat, namely we have 
\[
\prescript{\vee}{}\nabla_{\qp_\qa}\prescript{\vee}{}\nabla_{\qp_\qb} dv^\ql = \prescript{\vee}{}\nabla_{\qp_\qb}\prescript{\vee}{}\nabla_{\qp_\qa} dv^\ql.
\] 
This implies that $\nabla$ is also flat, indeed, the Riemann curvature tensor of $\prescript{\vee}{}\nabla$ is related to that of $\nabla$ by
\[
R_{\nabla} = g^\sharp\circ R_{\prescript{\vee}{}\nabla}\circ g^\flat.
\]
This finishes the proof of the ``only if'' part.

For the ``if'' part, we can define a flat connection $\prescript{\vee}{}\nabla$ on $T^*M$ by
\[
\prescript{\vee}{}\nabla_X(\qo) = g^\flat\,\nabla_X\kk{g^\sharp(\qo)},\quad \forall\,X\in\Qg(TM),\quad \qo\in\Qg(T^*M).
\]
We choose locally a flat coordinate system $(v^\qa)$ of $\prescript{\vee}{}\nabla$ on $M$ such that 
    \[
    \prescript{\vee}{}\nabla dv^\qa = 0,\quad g^\sharp(dv^\qa) = g^{\qb\qa}\qp_\qb.
    \]
    In terms of this coordinate system, we see that 
    \[
\nabla_{\qp_\qa}\qp_\qb = -g_{\qb\mu}\qp_\qa g^{\ql\mu}\qp_\ql.
    \]
    By using the identities \eqref{AL} and \eqref{AZ}, we arrive at an odd derivation of hydrodynamic type given by 
    \[
    \diff{v^\qa}{\qt} = g^{\qa\qb}\qth_\qb^1,\quad \diff{\qth_\qa}{\qt} = 0,
    \]
    whose associated geometric data is exactly $(g^\sharp,\nabla)$. Then the condition of it being a generalised Hamiltonian structure reads 
    \[
    \qp_\qg g^{\qa\qb}g^{\qg\mu} = \qp_\qg g^{\qa\mu}g^{\qg\qb},
    \]
    which is equivalent to the torsionless condition 
    \[
   \nabla_{\qp_\qa}\qp_\qb = \nabla_{\qp_\qb}\qp_\qa.
    \]
    The theorem is proved.
\end{proof}

By comparing Theorem \ref{AR} and Theorem \ref{AQ}, for a generalised Hamiltonian structure of hydrodynamic type whose associated geometric data is $(g^\sharp,\nabla)$, we conclude that it arises from a Hamiltonian structure if and only if $g^\sharp$ is symmetric and $\nabla g^\sharp = 0$. However, the data $g^\sharp$ and $\nabla$ are generally not related, and it seems unnatural that there are no constraints on the choice of $g^\sharp$. This issue can be explained by considering different choices of frames of fibers of $\hat M$. Indeed, for local coordinates $(v^\qa)$ of $M$, we can freely choose any (local) frames for the fibers of $\hat M$, and different choices are related by (local) isomorphisms of $T^*M$. 

Let us fix an arbitrary choice $(v^\qa;\qth_\qa)$ of local coordinates of $\hat M$ and fix any odd derivation of hydrodynamic type of the form \eqref{AD} with the associated geometric data $(g^\sharp,\nabla)$. For an arbitrary isomorphism $T$ of $T^*M$ given by 
\[
T(dv^\qa) = T^\qa_\qb dv^\qb,
\]
we can define new local coordinates $(v^\qa;\qs_\qa)$ on $\hat M$ by setting 
\begin{equation}
  \label{AS}
\qth_\qa = T_\qa^\qb\qs_\qb.
\end{equation}
Then it follows that 
\begin{align*}
  \diff{v^\qa}{\qt} = g^{\qa\qb}\qth_\qb^1+\Qg^{\qa\qb}_\qg v^\qg_x\qth_\qb=\tilde g^{\qa\qb}\qs_\qb^1+\tilde \Qg^{\qa\qb}_\qg v^\qg_x\qs_\qb,
\end{align*}
with
\[
\tilde g^{\qa\qb} = g^{\qa\rho}T^\qb_\rho,\quad \tilde\Qg^{\qa\qb}_\qg = g^{\qa\rho}\qp_\qg T^\qb_\rho+\Qg^{\qa\rho}_\qg T^\qb_\rho.
\]
A straightforward computation shows that 
\[
-g_{\qb\mu}\qp_\qa g^{\ql\mu}+g_{\qb\qg}\Qg^{\ql\qg}_\qa = -\tilde g_{\qb\mu}\qp_\qa \tilde g^{\ql\mu}+\tilde g_{\qb\qg}\tilde \Qg^{\ql\qg}_\qa,
\]
therefore, after the transformation \eqref{AS}, the odd derivation 
\[
\diff{v^\qa}{\qt}=\tilde g^{\qa\qb}\qs_\qb^1+\tilde \Qg^{\qa\qb}_\qg v^\qg_x\qs_\qb
\]
has the associated geometric data $(g^\sharp\circ T,\nabla)$. 

The discussions above immediately imply the following two statements.
\begin{proposition}
  \label{AW}
Assume that  we are given two generalised Hamiltonian structures whose associated geometric data are $(g^\sharp,\nabla)$ and $(\tilde g^\sharp,\nabla)$ respectively. Then these two generalised Hamiltonian structures are equivalent in the sense that they can be related by a certain change of odd variables of the form \eqref{AS}.
\end{proposition}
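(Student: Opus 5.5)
The idea is to use the frame-change computation carried out just before the statement, together with the uniqueness result of Lemma \ref{AF}.

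\emph{Step 1: choose the frame change.} Let the two structures be $\diff{}{\qt}$ and $\diff{}{\tilde\qt}$, written in the same coordinates $(v^\qa;\qth_\qa)$. Since $g^\sharp$ and $\tilde g^\sharp$ are both bundle isomorphisms, the composite
\[
T:=(g^\sharp)^{-1}\circ\tilde g^\sharp
\]
is an isomorphism of $T^*M$, and $\tilde g^\sharp=g^\sharp\circ T$. In components, $T^\qb_\rho$ is determined by $\tilde g^{\qa\qb}=g^{\qa\rho}T^\qb_\rho$, and invertibility follows from $\det g\neq0$ and $\det\tilde g\neq0$.

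\emph{Step 2: rewrite the first structure in the new odd variables.} Introduce $\qs_\qa$ by \eqref{AS}, $\qth_\qa=T^\qb_\qa\qs_\qb$. Then rewrite $\diff{}{\qt}$ in the coordinates $(v^\qa;\qs_\qa)$. Note that $\diff{\qs_\qa}{\qt}$ is obtained by applying the derivation to $\qs=T^{-1}\qth$; it stays of the form \eqref{AD} after reordering and antisymmetrising the $Q$-term. The computation preceding the proposition shows that the rewritten derivation has geometric data $(g^\sharp\circ T,\nabla)=(\tilde g^\sharp,\nabla)$.

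\emph{Step 3: apply uniqueness.} Being a generalised Hamiltonian structure, i.e. $[\diff{}{\qt},\diff{}{\qt}]=0$, is a coordinate-free property of the derivation. Hence the rewritten derivation is still a generalised Hamiltonian structure of hydrodynamic type, now with data $(\tilde g^\sharp,\nabla)$. By Lemma \ref{AF} it is uniquely determined by these data. Under the identification $\qs_\qa\leftrightarrow\qth_\qa$ its coefficients therefore coincide with those of $\diff{}{\tilde\qt}$, so the substitution \eqref{AS} carries one structure to the other.

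The main obstacle is Step 2. One has to confirm that the change \eqref{AS} preserves the hydrodynamic form \eqref{AD}, in particular that $\diff{\qs_\qa}{\qt}$ has exactly the shape $V\qs\qs^1+Qv_x\qs\qs$. It also has to be checked that the identification of the $\nabla$ component was done correctly; here I would recompute $A^\ql_{\qa\qb}$ via \eqref{AM} with $\tilde g$ and $\tilde\Qg$. The first point is automatic, since the substitution \eqref{AS} is linear in odd variables with coefficients depending only on $v$.
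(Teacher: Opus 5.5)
Your proposal is correct and matches the paper's own argument. The paper deduces the proposition from the frame-change computation showing that \eqref{AS} turns the data $(g^\sharp,\nabla)$ into $(g^\sharp\circ T,\nabla)$, together with the uniqueness statement of Lemma \ref{AF}; this is exactly your choice $T=(g^\sharp)^{-1}\circ\tilde g^\sharp$ followed by your Steps 2--3, and your remark that the substitution is a $v$-dependent, grading-preserving automorphism of $\jja$ (so the shape \eqref{AD} is preserved) supplies a point the paper leaves implicit.
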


\subsection{Generalised Hamiltonian hydrodynamic system and flat F-manifolds}
A hydrodynamic evolutionary PDE is a system of the form 
\begin{equation}  \label{AJ}
\diff{v^\qa}{t} = X^\qa_\qb v_x^\qb.
\end{equation}
It is easy to check that the quantities $X^\qa_\qb$ define a tensor field $X$ of type $(1,1)$ and hence we view it as a differential 1-form with vector values. It is called to be compatible with a generalised Hamiltonian structure $\diff{}{\qt}$ of hydrodynamic type if there exists  an extension of $\diff{}{t}$ to an element in $\derx^0_1$ by  
\begin{equation}
  \label{AV}
\diff{v^\qa}{t} = X^\qa_\qb v_x^\qb,\quad \diff{\qth_\qa}{t} = Y^\qb_\qa\qth_\qb^1+M^\qb_{\qa\qg}v^\qg_x\qth_\qb,
\end{equation}
such that 
\[
\fk{\diff{}{\qt}}{\diff{}{t}} = 0.
\]

\begin{theorem}
  \label{AY}
  Fix a generalised Hamiltonian structure $\diff{}{\qt}$ with the associated geometric data $(g^\sharp,\nabla)$. The hydrodynamic system \eqref{AJ} is compatible with $\diff{}{\qt}$ if and only if $d_\nabla(X) = 0$. 
\end{theorem}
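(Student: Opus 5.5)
The plan is to use the freedom in the choice of odd variables (Proposition \ref{AW}) together with flat coordinates to reduce $\diff{}{\qt}$ to a constant normal form, and then to compute the commutator $\fk{\diff{}{\qt}}{\diff{}{t}}$ directly.

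\textbf{Step 1: invariance of the setup.} Both sides of the equivalence are unchanged by a change of odd variables \eqref{AS} with $T$ depending only on $v$. For the right-hand side this holds because $d_\nabla(X)$ is tensorial and $\nabla$ is unchanged by \eqref{AS}, as computed before Proposition \ref{AW}. For the left-hand side I will check that such a change preserves the ansatz \eqref{AV}. Substituting $\qs=T^{-1}\qth$ gives
\[
\qs_t=\qp(T^{-1})\,X v_x\,\qth+T^{-1}\kk{Y\qth^1+M v_x\qth},\qquad \qth^1=T\qs^1+\qp T\, v_x\,\qs .
\]
Hence $\diff{\qs_\qa}{t}$ again has the form $\tilde Y\qs^1+\tilde M v_x\qs$. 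The vanishing of the commutator does not depend on the coordinates used.

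\textbf{Step 2: normal form.} By Theorem \ref{AQ}, $\nabla$ is flat and torsionless, so I take flat coordinates $(v^\qa)$ for $\nabla$. Using Proposition \ref{AW}, I replace $g^\sharp$ by the tensor whose components in these coordinates are $\qd^{\qa\qb}$; this is allowed because the geometric data only involve $\nabla$ up to this equivalence. In these coordinates $A=0$, and \eqref{BC} gives $\Qg^{\qa\qb}_\qg=0$. Then \eqref{AL} and \eqref{AZ} give $V=0$ and $Q=0$. The generalised Hamiltonian structure therefore becomes
\[
\diff{v^\qa}{\qt}=\qth^1_\qa,\qquad \diff{\qth_\qa}{\qt}=0 .
\]

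\textbf{Step 3: the commutator.} I compute $\fk{\diff{}{\qt}}{\diff{}{t}}$ on the generators.

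On $v^\qa$ it equals
\[
\qp_\qg X^\qa_\qb\,\qth^1_\qg v^\qb_x+X^\qa_\qb\qth^2_\qb-\qp_x\kk{Y^\qb_\qa\qth^1_\qb+M^\qb_{\qa\qg}v^\qg_x\qth_\qb}.
\]
Setting the coefficients of the independent monomials to zero gives:
\begin{itemize}
\item from $\qth^2$: $Y^\qb_\qa=X^\qa_\qb$;
\item from $v_{xx}\qth$: $M=0$;
\item from $v_x\qth^1$: $\qp_\qg X^\qa_\qb=\qp_\qb X^\qa_\qg$.
\end{itemize}

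On $\qth_\qa$ it equals $\qp_\qg Y^\qb_\qa\,\qth^1_\qg\qth^1_\qb$. Since $\qth^1_\qg\qth^1_\qb$ is antisymmetric in $(\qg,\qb)$, this vanishes iff $\qp_\qg X^\qa_\qb$ is symmetric in $(\qg,\qb)$, which is the same condition as the one already found.

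\textbf{Step 4: conclusion.} In flat coordinates of $\nabla$ the condition $\qp_\qg X^\qa_\qb=\qp_\qb X^\qa_\qg$ is exactly $d_\nabla(X)=0$. This gives necessity. For sufficiency, the choice $Y=X^T$, $M=0$ provides the required extension \eqref{AV}.

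The main obstacle is Step 1: verifying carefully that the change of odd frame preserves the shape \eqref{AV}, with the $v$-dependence of $T$ producing only $v_x\qs$ terms. This is what lets me work in the normal form without loss of generality.
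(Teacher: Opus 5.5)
Your proof is correct and follows essentially the paper's route. It is a direct extraction of the coefficients of $\qth^2$, $v_{xx}\qth$ and $v_x\qth^1$ in $\fk{\diff{}{\qt}}{\diff{}{t}}(v^\qa)$, which fixes $Y$ and $M$ and leaves precisely $d_\nabla X=0$. The one difference is that the paper does the ``only if'' direction in arbitrary coordinates, obtaining \eqref{AT}, \eqref{BB} and the condition written with the Christoffel symbols $A^\qa_{\qb\qg}$, and uses a normal form only for the ``if'' direction. You instead work in the normal form $\diff{v^\qa}{\qt}=\qth^1_\qa$, $\diff{\qth_\qa}{\qt}=0$ throughout, which your Step~1, together with $\nabla$-flat coordinates and Proposition~\ref{AW}, legitimately justifies.
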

\begin{proof}
  The proof is similar to that of Lemma \ref{AF} and Theorem \ref{AQ}, and is given by a straightforward computation. For the ``only if'' part, we set 
  \[
\diff{v^\qa}{\qt} = g^{\qa\qb}\qth_\qb^1+\Qg^{\qa\qb}_\qg v^\qg_x\qth_\qb,\quad \diff{\qth_\qa}{\qt} = V_\qa^{\mu\ql}\qth_\ql\qth_\mu^1+Q^{\mu\ql}_{\qa\qg}v^\qg_x\qth_\ql\qth_\mu,
\]
and we know there exists an extension of the form \eqref{AV}
such that 
\[
\fk{\diff{}{\qt}}{\diff{}{t}} = 0.
\]
Let us denote by 
\[
Z^\qa = \diff{}{\qt}\diff{v^\qa}{t},\quad W^\qa = \diff{}{t}\diff{v^\qa}{\qt},
\]
then they are given by 
\begin{align*}
    Z^\qa =&\,\diff{}{\qt}\kk{X^\qa_\qb v^\qb_x}\\
    =&\,\qp_\mu X^\qa_\qb\kk{g^{\mu\ql}\qth_\ql^1+\Qg^{\mu\ql}_\qd v^\qd_x\qth_\ql}v^\qb_x+X^\qa_\qb\qp_x\kk{g^{\qb\ql}\qth_\ql^1+\Qg^{\qb\ql}_\qd v^\qd_x\qth_\ql},\\
    W^\qa =&\, \diff{}{t}\kk{g^{\qa\qb}\qth_\qb^1+\Qg^{\qa\qb}_\qg v^\qg_x\qth_\qb}\\
    =&\,\qp_\mu g^{\qa\qb}X^\mu_\ql v^\ql_x\qth_\qb^1+g^{\qa\qb}\qp_x\kk{Y^\ql_\qb\qth_\ql^1+M^\ql_{\qb\qg}v^\qg_x\qth_\ql}\\
    &+\qp_\mu\Qg^{\qa\qb}_\qg X^\mu_\ql v^\ql_x v^\qg_x\qth_\qb+\Qg^{\qa\qb}_\qg\qp_x\kk{X^\qg_\mu v^\mu_x}\qth_\qb+\Qg^{\qa\qb}_\qg v^\qg_x\kk{Y^\ql_\qb \qth_\ql^1+M^\ql_{\qb\mu}v^\mu_x\qth_\ql}.
\end{align*}
The identity 
\[
\diff{Z^\qa}{\qth_\ql^2} = \diff{W^\qa}{\qth_\ql^2}
\]
implies that 
\begin{equation}
  \label{AT}
g^{\qa\qb}Y^\ql_\qb = X^\qa_\qb g^{\qb\ql},
\end{equation}
and the identity 
\[
\frac{\qp^2 Z^\qa}{\qp v^\qg_{xx}\qp\qth_\ql} = \frac{\qp^2 W^\qa}{\qp v^\qg_{xx}\qp\qth_\ql}
\]
implies that 
\begin{equation}
  \label{BB}
X^\qa_\qb\Qg^{\qb\ql}_\qg = g^{\qa\qb}M^\ql_{\qb\qg}+\Qg^{\qa\ql}_{\qb}X^\qb_\qg.
\end{equation}
This shows that a hydrodynamic system is uniquely determined from the quantities $X^\qa_\qb$.
We then consider the equation 
\[
\frac{\qp^2 Z^\qa}{\qp v^\qb_x\qp \qth_\ql^1} = \frac{\qp^2 W^\qa}{\qp v^\qb_x\qp \qth_\ql^1}
\]
which, after a straightforward computation using \eqref{AT} and \eqref{AM}, is equivalent to 
\[
\qp_\qd X^\qa_\qb+ A^\qa_{\mu\qd} X^\mu_\qb = \qp_\qb X^\qa_\qd+A^\qa_{\qb\mu}X^\mu_\qd.
\]
This can be written in an invariant way, using the definition \eqref{AU} of exterior covariant differential, as 
\[
d_{\nabla} X = 0.
\]

The ``if'' part is proved also similarly to that of Theorem \ref{AQ}. By choosing  suitable local coordinates, we may assume 
    \[
    \diff{v^\qa}{\qt} = g^{\qa\qb}\qth_\qb^1,\quad \diff{\qth_\qa}{\qt} = 0,\quad X = X^\qa_\qb \qp_\qa\otimes dv^\qb.
    \]
    We can uniquely extend the system \eqref{AJ} by using the identities \eqref{AT} and \eqref{BB}, and arrive at   
    \[
    \diff{v^\qa}{t} = X^\qa_\qb v_x^\qb,\quad \diff{\qth_\qa}{t} = Y^\qb_\qa\qth_\qb^1,\quad g^{\qa\qb}Y^\ql_\qb = X^\qa_\qb g^{\qb\ql},
    \]
    then a straightforward computation shows that the condition 
    \[
\fk{\diff{}{\qt}}{\diff{}{t}} = 0
\]
is equivalent to $d_\nabla(X) = 0$. The theorem is proved. 
\end{proof}

\begin{corollary}
  \label{BV}
Any system of hydrodynamic type that can be written as a system of conservation laws admits a generalised Hamiltonian structure of hydrodynamic type.
\end{corollary}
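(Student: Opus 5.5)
The plan is to read off the geometric data from the conservation-law form and then invoke Theorem \ref{AQ} and Theorem \ref{AY}. A system of hydrodynamic type written as conservation laws means that there are local coordinates $(v^1,\dots,v^n)$ on $M$ in which
\[
\diff{v^\qa}{t} = \qp_x\bigl(F^\qa(v)\bigr) = \qp_\qb F^\qa\, v^\qb_x,
\]
so that $X^\qa_\qb = \qp_\qb F^\qa$. I will take $\nabla$ to be the flat torsionless connection whose Christoffel symbols vanish in these coordinates, i.e.\ $A^\ql_{\qa\qb}=0$. In these coordinates the condition $d_\nabla(X)=0$ reduces to
\[
\qp_\qd X^\qa_\qb = \qp_\qb X^\qa_\qd,
\]
which holds automatically because $X^\qa_\qb=\qp_\qb F^\qa$ and mixed partial derivatives commute.

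Next I construct the generalised Hamiltonian structure. Fix any constant invertible matrix $g^{\qa\qb}$; it need not be symmetric, and one could simply take $g^{\qa\qb}=\qd^{\qa\qb}$. Then define
\[
\diff{v^\qa}{\qt} = g^{\qa\qb}\qth_\qb^1,\quad \diff{\qth_\qa}{\qt}=0.
\]
This corresponds to \eqref{AD} with $\Qg^{\qa\qb}_\qg=0$ and $V=Q=0$. With these choices, \eqref{AL} and \eqref{AZ} hold trivially, and by \eqref{AM} the associated connection has $A^\ql_{\qa\qb}=0$, which is exactly $\nabla$ chosen above. Theorem \ref{AQ} then applies and shows that $\diff{}{\qt}$ is a generalised Hamiltonian structure of hydrodynamic type, since $\nabla$ is flat and torsionless. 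One can also verify directly that the condition $\qp_\qg g^{\qa\qb}g^{\qg\mu}=\qp_\qg g^{\qa\mu}g^{\qg\qb}$ is trivially satisfied for constant $g$.

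Finally, since $d_\nabla(X)=0$, Theorem \ref{AY} gives an extension \eqref{AV} of $\diff{}{t}$ with $[\diff{}{\qt},\diff{}{t}]=0$. Explicitly this extension is $\diff{\qth_\qa}{t}=Y^\qb_\qa\qth^1_\qb$ with $Y$ determined by \eqref{AT}, so the system is compatible with $\diff{}{\qt}$.

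The argument is essentially immediate, and the only point that needs care is interpretive: pinning down what ``admits a generalised Hamiltonian structure'' should mean, namely compatibility in the sense of \eqref{AV}. If one instead wants it in the sense of Definition \ref{BL}, I would additionally exhibit a variational 1-form $\qo$ with $\diff{}{t}=[\diff{}{\qt},\qo]$. The natural candidate is $\qo$ with $\qo(\qth_\qa)=-g_{\qa\qb}F^\qb$ up to sign and normalisation. The resulting check on $v^\qa$ is direct; on $\qth_\qa$ it requires $\diff{}{\qt}(g_{\qa\qb}F^\qb)$ to match $Y^\qb_\qa\qth^1_\qb$, and $\diff{}{\qt}(g_{\qa\qb}F^\qb)$ produces $g_{\qa\qb}\qp_\mu F^\qb g^{\mu\nu}\qth^1_\nu$, which is precisely \eqref{AT}. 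Proposition \ref{AW} guarantees that the choice of $g$ is inessential.
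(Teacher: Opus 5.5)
Your proposal is correct and is essentially the paper's argument: the paper also takes $\nabla$ to be the connection whose flat coordinates are the conservation-law coordinates, writes $X=d_\nabla V$ so that $d_\nabla X=d_\nabla^2V=0$ (your commuting mixed partials), and concludes by Theorem \ref{AY}. Your explicit constant $g^{\qa\qb}$ and the optional variational 1-form are harmless additions that the paper leaves implicit, the latter being the content of the subsequent theorem; the correct choice there is $\qo(\qth_\qa)=g_{\qb\qa}F^\qb$, with a plus sign and with $g_{\qa\qb}$ in the index convention \eqref{AN}.
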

\begin{proof} 
  Recall that the system \eqref{AJ} is called a system of conservation laws if in some local coordinates $(u^\qa)$, the flows \eqref{AJ} are of the form 
  \[
  \diff{u^\qa}{t} = \qp_x(V^\qa) = \qp_\qb V^\qa v^\qb_x.
  \]
  Then in these local coordinates, the components of the tensor $X$ read 
  \[
    X^\qa_\qb = \qp_\qb V^\qa.
  \]
  We denote by $\nabla$ the affine flat torsionless connection uniquely defined by the condition $X=d_{\nabla}V$. Clearly, the connection $\nabla$ is well-defined and, by definition, coordinates $(u^\alpha)$ serve as 
  flat coordinates of $\nabla$. The result follows from the identity $d^2_{\nabla}=0$.
\end{proof}

\begin{rmk}
  In the case of Corollary \ref{BV}, we can further choose a contravariant flat metric $\eta$ compatible with $\nabla$. With this choice the geometric data $(\eta^\sharp,\nabla)$ defines a Poisson bracket of hydrodynamic type on 1-forms, and the corresponding system is then of the form \eqref{BW} (see \cite{AL12}).
\end{rmk}

Clearly any generalised Hamiltonian system, as defined by \eqref{BX}, is compatible with the corresponding generalised Hamiltonian structure. The converse is true for hydrodynamic case.
\begin{theorem}
    Assume we have a flow $\diff{}{t}\in\derx^0_1$ and a generalised Hamiltonian structure $\diff{}{\qt}\in\derx^1_1$ of hydrodynamic type. Then 
    \begin{equation}
        \label{BM}
    \fk{\diff{}{t}}{\diff{}{\qt}} = 0
    \end{equation}
    if and only if there exists a variational 1-form $H\in\derx^{-1}_0$ such that 
    \begin{equation}
        \label{BN}
    \diff{}{t} = \fk{\diff{}{\qt}}{H}.
    \end{equation}
\end{theorem}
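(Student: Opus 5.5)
The ``if'' direction uses only the graded Jacobi identity on $\derx$. Suppose $\diff{}{t}=\fk{\diff{}{\qt}}{H}$ with $H\in\derx^{-1}_0$. Since $\diff{}{\qt}$ is odd, the Jacobi identity gives
\[
\fk{\diff{}{\qt}}{\fk{\diff{}{\qt}}{H}}=\frac12\fk{\fk{\diff{}{\qt}}{\diff{}{\qt}}}{H}=0 .
\]
Hence $\fk{\diff{}{t}}{\diff{}{\qt}}=0$, which is \eqref{BM}. This is the same mechanism as in the proof that \eqref{BH} is a Lie bracket.

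For the ``only if'' direction I will reduce to a normal form and then construct $H$ explicitly. Let $(g^\sharp,\nabla)$ be the geometric data of $\diff{}{\qt}$; by Theorem \ref{AQ}, $\nabla$ is flat and torsionless. Choose local flat coordinates $(v^\qa)$ of $\nabla$. By Proposition \ref{AW} (a change of odd variables \eqref{AS}), I may replace $g^\sharp$ by any other isomorphism. I take $g^{\qa\qb}=\qd^{\qa\qb}$ in these coordinates. Then \eqref{BC} gives $\Qg^{\qa\qb}_\qg=0$, and \eqref{AL}, \eqref{AZ} give $V=Q=0$. So the structure reads
\[
\diff{v^\qa}{\qt}=\qth^1_\qa,\qquad \diff{\qth_\qa}{\qt}=0 .
\]
Both \eqref{BM} and \eqref{BN} are invariant under such changes of odd variables, because these act as automorphisms of $\hm A$ preserving brackets and degrees. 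So it suffices to work in this normal form.

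A general element of $\derx^0_1$ has the form \eqref{AV}. By Theorem \ref{AY}, \eqref{BM} forces $d_\nabla X=0$, and in flat coordinates this reads $\qp_\qd X^\qa_\qb=\qp_\qb X^\qa_\qd$. By the Poincaré lemma we get, locally, $X^\qa_\qb=\qp_\qb h^\qa$ for functions $h^\qa$ on $M$. Relations \eqref{AT} and \eqref{BB} then determine the rest of the flow: with $g=\qd$ and $\Qg=0$ they give $Y^\ql_\qa=X^\qa_\ql$ and $M=0$.

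Now define $H\in\derx^{-1}_0$ by $H(v^\qa)=0$ and $H(\qth_\qa)=h^\qa$, with the sign adjusted to match the convention in \eqref{BG}. Evaluating on generators, using $H\partial_x=\partial_x H$, gives
\[
\fk{\diff{}{\qt}}{H}(v^\qa)=H(\qth^1_\qa)=\qp_x h^\qa=X^\qa_\qb v^\qb_x,
\]
\[
\fk{\diff{}{\qt}}{H}(\qth_\qa)=\diff{h^\qa}{\qt}=\qp_\ql h^\qa\,\qth^1_\ql=X^\qa_\ql\qth^1_\ql=Y^\ql_\qa\qth^1_\ql .
\]
Both sides of \eqref{BN} are derivations commuting with $\qp_x$ that agree on the generators $v^\qa,\qth_\qa$, so they are equal.

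I expect the main obstacle to be bookkeeping rather than anything conceptual. First, I must check carefully that the normal form (constant $g$ in flat coordinates of $\nabla$) really forces $\Qg=V=Q=0$ through \eqref{BC}, \eqref{AL}, \eqref{AZ}. Second, the index placements in \eqref{AT} must match the ones produced by $\fk{\diff{}{\qt}}{H}$, including signs from the odd variables. Third, I must verify that transporting $H$ back through \eqref{AS} stays in $\derx^{-1}_0$. Finally, the potentials $h^\qa$ exist only locally, so the statement is proved locally, or globally on simply connected domains.
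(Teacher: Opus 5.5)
Your proof is correct and follows essentially the same route as the paper: normalise $\partial/\partial\tau$ by a choice of coordinates on $\hat M$ so that $\partial\theta_\alpha/\partial\tau=0$, reduce the commutation condition to closedness of a family of $1$-forms, and integrate locally to get $H$ with $H(v^\alpha)=0$. The only difference is that you normalise further ($\nabla$-flat coordinates and $g=\delta$ via Proposition~\ref{AW}), so the paper's condition $\partial_\rho(X^\lambda_\alpha g_{\lambda\beta})=\partial_\alpha(X^\lambda_\rho g_{\lambda\beta})$ becomes $\partial_\rho X^\beta_\alpha=\partial_\alpha X^\beta_\rho$, your $h^\alpha$ play the role of the paper's $H_\alpha$, and no sign adjustment is needed since $H(\theta_\alpha)=h^\alpha$ is exactly what your computation verifies.
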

\begin{proof}
The ``if'' part is obvious, and we only need to verify the ``only if'' part.

By choosing suitable local coordinates on $\hat M$, we may assume 
\begin{align*}
 \diff{v^\qa}{t} &= X^\qa_\qb v_x^\qb,\quad \diff{\qth_\qa}{t} = Y^\qb_\qa\qth_\qb^1,\quad g^{\qa\qb}Y^\ql_\qb = X^\qa_\qb g^{\qb\ql},\\
 \diff{v^\qa}{\qt} &=g^{\qa\qb}\qth_\qb^1,\quad \diff{\qth_\qa}{\qt} = 0.
\end{align*}
Then the condition \eqref{BM} reads
\[
\qp_\rho X^\qa_\qb-g_{\mu\ql}\qp_\rho g^{\qa\ql} X^\mu_\qb = \qp_\qb X^\qa_\rho-g_{\mu\ql}\qp_\qb g^{\qa\ql} X^\mu_\rho,
\]
which can be equivalently written as 
\begin{equation}
    \label{BO}
    \qp_\rho\kk{X^\ql_\qa g_{\ql\qb}} = \qp_\qa\kk{X^\ql_\rho g_{\ql\qb}}. 
\end{equation}
We need to find a variational 1-form $H\in\derx^{-1}_0$ satisfying \eqref{BN}. Set
\[
H(v^\qa) = 0,\quad H(\qth_\qa) = H_\qa,
\]
then the equation \eqref{BN} is equivalent to the following system of PDEs:
\[
\qp_\qa H_\qb = X^\ql_\qa g_{\ql\qb}.
\]
The conditions \eqref{BO} guarantee the existence of solutions $H_\qa$ to the above PDE system. The theorem is proved.
\end{proof}

Now we can apply the above results to the principal hierarchy of a flat F-manifold.

\begin{theorem}
  \label{BD}
  Let $(M, \circ, \nabla, e)$ be a flat F-manifold and fix an arbitrary bundle isomorphism $g^\sharp\colon T^*M\to TM$. Then the geometric data $(g^\sharp,\nabla)$ defines a generalised Hamiltonian structure $\diff{}{\qt}$ of hydrodynamic type. Moreover, the principal hierarchy associated to $M$ is a generalised Hamiltonian system with respect to  $\diff{}{\qt}$.
\end{theorem}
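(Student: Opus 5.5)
The plan is to split the statement into its two assertions and reduce each to a result already proved in this section.

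For the first assertion, note that the connection $\nabla$ of a flat F-manifold is flat and torsionless: setting $\lambda=0$ in the family $\nabla-\lambda\circ$ gives exactly this. By Lemma \ref{AF} and Theorem \ref{AQ}, any pair $(g^\sharp,\nabla)$ with $\nabla$ flat and torsionless determines an odd derivation of hydrodynamic type of the form \eqref{AD}. Concretely, its coefficients are obtained as follows:
\begin{itemize}
\item $\Qg^{\qa\qb}_\qg$ is given by \eqref{BC};
\item $V$ and $Q$ are given by \eqref{AL} and \eqref{AZ}.
\end{itemize}
This derivation squares to zero by the ``if'' part of Theorem \ref{AQ}. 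The isomorphism $g^\sharp$ is arbitrary, and by Proposition \ref{AW} the resulting structure is independent of this choice up to a change of odd variables \eqref{AS}. So the first assertion is essentially a citation.

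For the second assertion, the flows of the principal hierarchy \eqref{AX} are hydrodynamic systems $\diff{v^\qa}{t} = X^\qa_\qb v^\qb_x$. Here the $(1,1)$-tensor is $X = X_{(\qb,p-1)}\circ$, with components $c^\qa_{\ql\qg}X^\ql_{(\qb,p-1)}$. The vector fields $X_{(\qb,p)}$ satisfy \eqref{symmetries}, that is $d_\nabla(X_{(\qb,p-1)}\circ)=0$. For $p-1=-1$ this needs a check, which I expect to be the only genuinely computational point.
\begin{itemize}
\item When $X_{(\qb,-1)}$ is $\nabla$-flat, $d_\nabla(X\circ)=0$ reduces to the symmetry of $\nabla_\qa c^\qg_{\qb\mu}$ in $\qa,\qb$. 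This holds because flatness of $\nabla-\lambda\circ$ at order $\lambda$ gives $d_\nabla(\circ)=0$.
\item For higher $p$, I would prove $d_\nabla(X_{(\qb,p)}\circ)=0$ by induction using \eqref{recurrenceeq1}. The components of $d_\nabla(Y\circ)$ are
\[
Y^\ql\kk{\nabla_\qa c^\qg_{\ql\qb}-\nabla_\qb c^\qg_{\ql\qa}}+\nabla_\qa Y^\ql c^\qg_{\ql\qb}-\nabla_\qb Y^\ql c^\qg_{\ql\qa},
\]
where the first bracket vanishes by the symmetry of $\nabla c$ (itself a consequence of flatness of $\nabla-\lambda\circ$).
\item With $\nabla Y = X_{(\qb,p-1)}\circ$, the remaining terms become $X^\nu c^\ql_{\nu\qa}c^\qg_{\ql\qb}-X^\nu c^\ql_{\nu\qb}c^\qg_{\ql\qa}$. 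These vanish by commutativity and associativity of $\circ$.
\end{itemize}
This reproduces the standard fact cited from \cite{LPR} that the recursion is consistent, so I may alternatively just invoke it.

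Given $d_\nabla(X)=0$, Theorem \ref{AY} shows that each flow admits an extension \eqref{AV} commuting with $\diff{}{\qt}$. The preceding theorem then produces a variational 1-form $H\in\derx^{-1}_0$ with $\diff{}{t}=\fk{\diff{}{\qt}}{H}$. This means that every flow of the principal hierarchy is a generalised Hamiltonian system in the sense of Definition \ref{BL}.

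The main obstacle is bookkeeping: making sure the extension in Theorem \ref{AY} is the one built in the chosen coordinates, which are flat coordinates of ${}^{\vee}\nabla$. I would work throughout in the coordinates used in the ``if'' parts of those proofs, so that \eqref{BO} is directly available for solving $\qp_\qa H_\qb = X^\ql_\qa g_{\ql\qb}$.
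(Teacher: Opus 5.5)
Your proposal is correct and follows the paper's proof: the first assertion is obtained from flatness and torsionlessness of $\nabla$ via Theorem \ref{AQ}, and the second from $d_\nabla(X\circ)=0$ together with Theorem \ref{AY}. You are somewhat more careful than the paper in two places: it treats $d_\nabla(X_{(\qb,p)}\circ)=0$ as given by the construction of the hierarchy, and it leaves implicit the theorem (compatibility $\Leftrightarrow$ existence of $H$ with \eqref{BN}) that is needed to match Definition \ref{BL}, which you invoke explicitly.
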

\begin{proof}
  The geometric data $(g^\sharp,\nabla)$ defines a generalised Hamiltonian structure $\diff{}{\qt}$ is obvious from the fact that $\nabla$ is torsionless and flat.

  Let us fix any local coordinates $(v^\qa)$ of $M$, then it follows from definition \eqref{AX} that any flow of the principal hierarchy associated to $M$ is of the form 
  \begin{equation}
    \label{BA}
  \diff{v^\qa}{t} = c^\qa_{\qb\qg}X^\qg v^\qb_x,
  \end{equation}
  here $c^\qa_{\qb\qg}$ are the structure constants of $\circ$ given by $\qp_\qb\circ \qp_\qg = c^\qa_{\qb\qg}\qp_\qa$, and $X = X^\qa \qp_\qa$ is a vector field satisfying
  \[
  d_\nabla(X\circ) = 0.
  \]
  Since the $(1,1)$-tensor corresponding to the system \eqref{BA} is exactly $X\circ$, hence it is a generalised Hamiltonian system due to Theorem \ref{AY}. The theorem is proved.
\end{proof}

\subsection{Generalised bi-Hamiltonian structures and bi-flat F-manifolds} Now we focus on the differential geometry of generalised bi-Hamiltonian structure.
Consider a generalised bi-Hamiltonian structure $(\diff{}{\qt_0},\diff{}{\qt_1})$ of hydrodynamic type, the geometric data associated to $\diff{}{\qt_0}$ is denoted by $(\eta^\sharp,\nabla)$ and by $(g^\sharp,\nabla^*)$ for that associated to $\diff{}{\qt_1}$.
\begin{definition}
We call $(L,\nabla,\nabla^*)$ the geometric data associated to the generalised bi-Hamiltonian structure $(\diff{}{\qt_0},\diff{}{\qt_1})$, where $L$ is the tensor field of $(1,1)$ type defined by
\[
L = g^\sharp(\eta^\sharp)^{-1}.
\]
\end{definition}
Due to Proposition \ref{AW}, though the geometric data associated to a hydrodynamic generalised Hamiltonian structure depends on the choice of odd variables, that of a generalised bi-Hamiltonian structure is independent of such choices.

\begin{proposition}
  \label{BY}
  Let $(L,\nabla,\nabla^*)$ be the geometric data associated to a generalised bi-Hamiltonian structure $(\diff{}{\qt_0},\diff{}{\qt_1})$. Then the following properties hold true:
\begin{itemize}
  \item[(1)]$d_\nabla(L) = d_{\nabla^*}(L)$;
  \item[(2)] The Nijenhuis torsion $N$ of $L$ vanishes;
  \item[(3)] $(\nabla_X (L\Qd))(Y,Z) = (\nabla_Y (L\Qd))(X,Z)$ for any (local) vector fields $X$,$Y$,$Z$, here $\Qd$ is the tensor field of type $(1,2)$ given by $\Qd(Y,Z) = \nabla^*_YZ-\nabla_YZ$. In terms of  local coordinates, this is equivalent to \eqref{flatnessGM}.
\end{itemize}
\end{proposition}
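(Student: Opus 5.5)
The plan is to reduce everything to a computation in a convenient gauge. By Proposition \ref{AW}, we may change odd variables via \eqref{AS} without changing $(L,\nabla,\nabla^*)$. We choose local coordinates $(v^\qa)$ that are flat for $\nabla$, and we choose the odd variables so that $\diff{}{\qt_0}$ takes the normal form used in the proof of Theorem \ref{AQ}: $\diff{v^\qa}{\qt_0}=\eta^{\qa\qb}\qth^1_\qb$ and $\diff{\qth_\qa}{\qt_0}=0$, with $\eta^{\qa\qb}$ constant. This is possible because the transformation $T$ can be taken to make $\prescript{\vee}{}\nabla$ have the same flat coordinates as $\nabla$. Then $\diff{}{\qt_1}$ has the general form \eqref{AD} with data $g^{\qa\qb}$, $\Qg^{\qa\qb}_\qg$, $V$, $Q$. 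By \eqref{BC}, $\Qg^{\qa\qb}_\qg=\qp_\qg g^{\qa\qb}+g^{\mu\qb}B^\qa_{\qg\mu}$, where the $B$'s are the Christoffel symbols of $\nabla^*$, while $A=0$ in these coordinates. In particular $L^\qa_\qb=g^{\qa\mu}\eta_{\mu\qb}$ and $\Qd=B$.

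Since each of $\diff{}{\qt_0}$ and $\diff{}{\qt_1}$ is already a generalised Hamiltonian structure, the pencil condition reduces to the single mixed identity
\[
\fk{\diff{}{\qt_0}}{\diff{}{\qt_1}}=0 .
\]
Following the proof of Lemma \ref{AF} and Theorem \ref{AQ}, we evaluate this on $v^\qa$ and on $\qth_\qa$ and extract coefficients of independent monomials. On $v^\qa$, the coefficient of $\qth^1_\mu\qth^1_\ql$ gives
\[
\eta^{\qe\ql}\qp_\qe g^{\qa\mu}-\eta^{\qe\mu}\qp_\qe g^{\qa\ql}+\eta^{\qa\qb}\kk{V^{\mu\ql}_\qb-V^{\ql\mu}_\qb}=0 ,
\]
and, using \eqref{AL} for $V$, this rewrites as a symmetry condition. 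The coefficient of $v^\qd_x\qth_\ql\qth^2_\mu$ gives a first-order relation linking $\qp g$, $B$ and $\eta$. Expressed in terms of $L=g\eta^{-1}$, these first-order relations should be exactly $\qp_\qa L^\ql_\qb-\qp_\qb L^\ql_\qa=B^\ql_{\qa\mu}L^\mu_\qb-B^\ql_{\qb\mu}L^\mu_\qa$. Since $A=0$, this is the coordinate form of item (1), $d_\nabla L=d_{\nabla^*}L$. Because $\nabla^*$ is torsionless, the symmetric parts of $B$ cancel as needed.

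For item (2), we look at the second-order data: the coefficients of $v_x v_x\qth\qth^1$ in the $v$-component and of $\qth\qth^1\qth^1$ in the $\qth$-component, using \eqref{AZ} for $Q$. We expect the totally antisymmetrised combination to give the Nijenhuis torsion, after using (1) to replace $B$-terms by derivatives of $L$. Alternatively, and more cleanly, the pencil $\diff{}{\qt_1}-z\diff{}{\qt_0}$ is itself a generalised Hamiltonian structure with geometric data $\big((L-zI)\eta^\sharp,\nabla_z\big)$. Here $\nabla_z$ is, by \eqref{AM}, exactly the Gauss--Manin connection \eqref{GMnoindex} (one checks $A_z = B + z L_z^{-1}(B-A)$ from $\Qg_z=\Qg_1-z\Qg_0$). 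Theorem \ref{AQ} then says that $\nabla^{GM}$ is torsionless and flat for all $z$. We plan to use this route: torsion-freeness of $\nabla^{GM}$ for all $z$ is equivalent to $d_\nabla L=d_{\nabla^*}L$ (expand $L_z^{-1}$ in $1/z$), giving (1). Flatness, through the curvature identity in the proof of Theorem \ref{BZ}, gives a polynomial identity in $z$ after multiplying by $L_z$; since $R_\nabla=R_{\nabla^*}=0$, it leaves
\[
z\kk{\nabla_\qb(L^\ql_\rho\Qd^\qg_{\ql\mu})-\nabla_\rho(L^\ql_\qb\Qd^\qg_{\ql\mu})}-z(L_z^{-1})^\qa_\nu N^\nu_{\qb\rho}\Qd^\qg_{\qa\mu}=0 .
\]
Separating powers of $z$ (the second term has a pole structure in $z$ through $L_z^{-1}$, while the first is linear) yields (3) and $N\cdot\Qd=0$.

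The main obstacle is that $N\cdot\Qd=0$ alone does not give $N=0$ when $\Qd$ is degenerate. For (2) I would therefore return to the direct jet computation. The $\qth^1\qth^1$-coefficient of the pencil condition applied to $\qth_\qa$, with the normal form of $\diff{}{\qt_0}$ and \eqref{AL}, should give $N$ contracted with the invertible $\eta$ (equivalently, the classical Nijenhuis-tensor condition appearing in the compatibility of the leading terms $g\,\qp_x$ and $\eta\,\qp_x$, as in Mokhov's theory of compatible brackets, which does not use symmetry). This is the step I expect to require the most careful bookkeeping.
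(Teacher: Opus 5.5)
Your gauge (constant $\eta$ in $\nabla$-flat coordinates, so $A=0$) is legitimate. Your derivation of (1) and (3) is a valid alternative to the paper's. The paper reads (1) off the $\theta^1\theta^1$ coefficient and (3) off the $v_x\theta\theta^1$ coefficient of $[\partial_{\tau_0},\partial_{\tau_1}]v^\alpha$. You instead get (1) from the $z^{-1}$ term of the torsion of the pencil connection, and (3) from the curvature identity in the proof of Theorem \ref{BZ} as $z\to\infty$.

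The gap is item (2). As you say, the pencil argument yields only $\Delta(N(X,Y),Z)=0$, because it keeps only torsion-freeness and flatness of $\nabla(z)$. Your fallback, however, looks in the wrong place. The component $[\partial_{\tau_0},\partial_{\tau_1}]\theta_\alpha$ has super degree $3$, so it has no $\theta^1\theta^1$ term. In your gauge its content is $Q=0$ plus the antisymmetry of $\eta^{\epsilon\rho}\partial_\epsilon V^{\mu\lambda}_\alpha$ in $(\rho,\mu)$. Combined with \eqref{BP}, this says $\partial_\epsilon\Gamma^{\sigma\lambda}_\gamma=\partial_\gamma\Gamma^{\sigma\lambda}_\epsilon$. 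That is a second-order condition, equivalent (given (1)) to (3), and it cannot produce the first-order identity $N=0$.

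What is missing is first-order information from the $v$-component. The paper uses the $\theta_\lambda\theta^2_\mu$ coefficient, namely \eqref{BP}: $\eta^{\alpha\beta}V^{\mu\lambda}_\beta=\Gamma^{\alpha\lambda}_\gamma\eta^{\gamma\mu}$. (The monomial $v_x\theta\theta^2$ you mention does not occur, since it has differential degree $3$.) This identity is exactly the $z$-linear part of \eqref{AL} for $\partial_{\tau_1}-z\partial_{\tau_0}$, which your pencil argument discards. Eliminating $V$ between \eqref{BP} and \eqref{AL} for $\partial_{\tau_1}$ gives $L^\gamma_\zeta\Gamma^{\rho\lambda}_\gamma=L^\rho_\gamma\Gamma^{\gamma\lambda}_\zeta$. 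Inserting \eqref{DB} and antisymmetrising gives $N^\rho_{\zeta\alpha}=L^\rho_\lambda\bigl(L^\gamma_\alpha\Delta^\lambda_{\gamma\zeta}-L^\gamma_\zeta\Delta^\lambda_{\gamma\alpha}\bigr)$, which vanishes by (1).

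Alternatively, your Mokhov-type idea does work if you carry it out on $v^\alpha$ rather than on $\theta_\alpha$. In your gauge, substitute \eqref{AK} for $\partial_{\tau_1}$ into your own mixed $\theta^1\theta^1$ identity, multiply by $L^\nu_\alpha$, and contract with $\eta_{a\lambda}\eta_{b\mu}$. The result is exactly $-N^\nu_{ab}=0$. Conversely, that same $\theta^1\theta^1$ identity yields (1) only when combined with \eqref{BP}, not with \eqref{AL} as you wrote.
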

\begin{proof}
  All the desired properties are tensorial, hence it suffices to check them in arbitrary local coordinates. Hence, we may choose suitable local coordinates and assume the generalised bi-Hamiltonian structure reads
  \begin{align*}
&\diff{v^\qa}{\qt_0} = \eta^{\qa\qb}\qth_\qb^1,\quad \diff{\qth_\qa}{\qt_0} = 0,\\
&\diff{v^\qa}{\qt_1} = g^{\qa\qb}\qth_\qb^1+\Qg^{\qa\qb}_\qg v^\qg_x\qth_\qb,\quad \diff{\qth_\qa}{\qt_1} = V_\qa^{\mu\ql}\qth_\ql\qth_\mu^1+Q^{\mu\ql}_{\qa\qg}v^\qg_x\qth_\ql\qth_\mu,\quad Q^{\mu\ql}_{\qa\qg}=-Q^{\ql\mu}_{\qa\qg}.
  \end{align*}
  Denote by $\eta_{\qa\qb}$ the inverse of $\eta^{\qa\qb}$ satisfying
  \[
  \eta^{\qa\mu}\eta_{\qb\mu} = \eta^{\mu\qa}\eta_{\mu\qb} = \qd^\qa_\qb,
  \]
  and hence the tensor $L$ is given by 
  \[
  L(\qp_\qa) = L^\qb_\qa\qp_\qb,\quad L^\qb_\qa = g^{\qb\mu}\eta_{\qa\mu}.
  \]
  If we denote by $A^\qg_{\qa\qb}$ and $B^\qg_{\qa\qb}$ the Christoffel symbols of $\nabla$ and $\nabla^*$ respectively, we have 
 \[
    A^\qg_{\qa\qb} = -\eta_{\qa\mu}\qp_\qb\eta^{\qg\mu},\quad \Qg^{\qa\qb}_\qg = \qp_\qg g^{\qa\qb}+g^{\mu\qb}B^\qa_{\mu\qg}.
\]
  By a straightforward computation, it follows that  
  \begin{align*}
    W^\qa =&\, \fk{\diff{}{\qt_0}}{\diff{}{\qt_1}}v^\qa\\
    =&\,\diff{}{\qt_0}\kk{g^{\qa\qb}\qth_\qb^1+\Qg^{\qa\qb}_\qg v^\qg_x\qth_\qb}+\diff{}{\qt_1}\kk{\eta^{\qa\qb}\qth_\qb^1}\\
    =&\,\qp_\mu g^{\qa\qb}\eta^{\mu\ql}\qth_\ql^1\qth_\qb^1+\qp_\mu\Qg^{\qa\qb}_\qg\eta^{\mu\ql}v^\qg_x\qth_\ql^1\qth_\qb+\Qg^{\qa\qb}_\qg\qp_x\kk{\eta^{\qg\ql}\qth_\ql^1}\qth_\qb\\
    &+\qp_\mu\eta^{\qa\qb}\kk{g^{\mu\ql}\qth_\ql^1+\Qg^{\mu\ql}_\qg v^\qg_x\qth_\ql}\qth_\qb^1\\
    &+\eta^{\qa\qb}\qp_x\kk{V_\qb^{\mu\ql}\qth_\ql\qth_\mu^1+Q^{\ql\mu}_{\qb\qg}v^\qg_x\qth_\ql\qth_\mu},
  \end{align*}
  and we know $W^\qa = 0$ by assumption. Then the identity 
  \[
  \diff{W^\qa}{v^\qg_{xx}} = 0
  \]
  implies 
  \[
  Q^{\mu\ql}_{\qa\qg} = 0,
  \]
  and the identity
  \[
  \frac{\qp^2 W^\qa}{\qp \qth_\ql\qp \qth_\mu^2} = 0
  \]
  implies 
  \begin{equation}
    \label{BP}
    \eta^{\qa\qb}V_\qb^{\mu\ql} = \Qg^{\qa\ql}_\qg \eta^{\qg\mu},
  \end{equation}
and the identity 
\[
\frac{\qp^2 W^\qa}{\qp\qth_\ql^1\qp\qth_\mu^1} = 0
\]
implies 
\begin{equation*}
\qp_\rho g^{\qa\mu}\eta^{\rho\ql}+\qp_\rho\eta^{\qa\mu}g^{\rho\ql}+\eta^{\qa\qb}V^{\mu\ql}_\qb = 
\qp_\rho g^{\qa\ql}\eta^{\rho\mu}+\qp_\rho \eta^{\qa\ql}g^{\rho\mu}+\eta^{\qa\qb}V^{\ql\mu}_\qb,
\end{equation*}
which is equivalent, after multiplying the both sides with $\eta_{\qz\ql}\eta_{\qd\mu}$ and using the identity \eqref{BP}, to 
\begin{equation}
    \label{BQ}
\eta_{\qd\mu}\qp_\qz g^{\qa\mu}-L^\rho_\qz A^\qa_{\rho\qd}+\Qg^{\qa\ql}_\qd\eta_{\qz\ql} = \eta_{\qz\mu}\qp_\qd g^{\qa\mu}-L^\rho_\qd A^\qa_{\rho\qz}+\Qg^{\qa\ql}_\qz\eta_{\qd\ql}. 
\end{equation}
In what follows, we will frequently use the following two identities:
\begin{align}
  \label{DA}
\eta_{\qd\mu}\qp_\qz g^{\qa\mu}&=\qp_\qz L^\qa_\qd - L^\qa_\mu A^\mu_{\qz\qd},\\
\label{DB}
\Qg^{\qa\ql}_\qd\eta_{\qz\ql}&=\qp_\qd L^\qa_\qz-L^\qa_\ql A^\ql_{\qd\qz}+L^\rho_\qz B^\qa_{\rho\qd},
\end{align}
both of which can be verified straightforwardly. It follows that \eqref{BQ} is equivalent to 
\begin{equation}
  \label{BR}
  L^\rho_\qz \Qd^\qa_{\rho\qd} = L^\rho_\qd \Qd^\qa_{\rho\qz}, 
\end{equation}
where we denote by $\Qd^\qa_{\rho\qd} = B^\qa_{\rho\qd}-A^\qa_{\rho\qd}$. Therefore, it follows that 
\begin{align*}
  d_{\nabla^*}(L)(\qp_\qa,\qp_\qb)-d_{\nabla}(L)(\qp_\qa,\qp_\qb) = L^\ql_\qb \Qd^\rho_{\qa\ql}-L^\ql_\qa \Qd^\rho_{\qb\ql} = 0,
\end{align*}
and hence the first property holds true.

Next, we verify that $L$ has vanishing Nijenhuis torsion. Recall that the Nijenhuis torsion $N$ of $L$ is a vector valued 2-form whose components read 
\[
N^\qg_{\qa\qb} = L^\mu_\qa\qp_\mu L^\qg_\qb-L^\mu_\qb\qp_\mu L^\qg_\qa-L^\qg_\mu\qp_\qa L^\mu_\qb+L^\qg_\mu\qp_\qb L^\mu_\qa.
\]
By comparing the two identities \eqref{BP} and \eqref{AL}, we arrive at 
\[
\Qg^{\qa\ql}_\qg g^{\qg\mu}g_{\qa\qb} = \Qg^{\qa\ql}_\qg \eta^{\qg\mu}\eta_{\qa\qb},
\]
and after multiplying both sides with $g^{\rho\qb}\eta_{\qz\mu}$ and taking the summation over repeated indices, this is equivalent to 
\[
L^\qg_\qz \Qg^{\rho\ql}_\qg = L^\rho_\qg\Qg^{\qg\ql}_\qz.
\]
Multiply both sides with $\eta_{\qa\ql}$ and use identities \eqref{DA} and \eqref{DB}, we arrive at 
\begin{equation*}
  \label{BS}
L^\qg_\qz\qp_\qg L^\rho_\qa-L^\rho_\qg\qp_\qz L^\qg_\qa = L^\qg_\qz L^\rho_\ql A^\ql_{\qg\qa}-L^\rho_\qg L^\qg_\ql A^\ql_{\qa\qz}+L^\rho_\qg L^\mu_\qa B^\qg_{\mu\qz}-L^\qg_\qz L^\mu_\qa B^\rho_{\mu\qg}.
\end{equation*}
Taking into account the definition of the Nijenhuis tensor, we arrive at 
\[
N^\rho_{\qz\qa} = \kk{L^\qg_\qz\qp_\qg L^\rho_\qa-L^\rho_\qg\qp_\qz L^\qg_\qa}-\kk{L^\qg_\qa\qp_\qg L^\rho_\qz-L^\rho_\qg\qp_\qa L^\qg_\qz}:=J+K,
\]
where we denote 
\begin{align*}
  J &= L^\qg_\qz L^\rho_\ql A^\ql_{\qg\qa}-\cancel{L^\rho_\qg L^\qg_\ql A^\ql_{\qa\qz}}-L^\qg_\qa L^\rho_\ql A^\ql_{\qg\qz}+\cancel{L^\rho_\qg L^\qg_\ql A^\ql_{\qz\qa}}\\
  &=L^\rho_\ql\kk{L^\qg_\qz  A^\ql_{\qg\qa}-L^\qg_\qa  A^\ql_{\qg\qz}},
\end{align*}
and 
\begin{align*}
  K&=L^\rho_\qg L^\mu_\qa B^\qg_{\mu\qz}-\cancel{L^\qg_\qz L^\mu_\qa B^\rho_{\mu\qg}}-L^\rho_\qg L^\mu_\qz B^\qg_{\mu\qa}+\cancel{L^\qg_\qa L^\mu_\qz B^\rho_{\mu\qg}}\\
  &=L^\rho_\ql \kk{L^\qg_\qa B^\ql_{\qg\qz}- L^\qg_\qz B^\ql_{\qg\qa}}.
\end{align*}
By using \eqref{BR}, we have 
\[
N^\rho_{\qz\qa} = J+K = L^\rho_\ql\kk{L^\qg_\qa \Qd^\ql_{\qg\qz}-L^\qg_\qz\Qd^\ql_{\qg\qa}} = 0.
\]
Therefore, we verify that the Nijenhuis torsion $N$ of $L$ vanishes.

Finally, we prove the third identity. We consider the identity 
\[
\frac{\qp^3 W^\qa}{\qp v^\qg_x\qp\qth_\ql\qp\qth_\mu^1} = 0,
\]
and it implies that 
\[
\eta^{\qa\qb}\qp_\qg V^{\mu\ql}_\qb+\qp_\rho\eta^{\qa\mu}\Qg^{\rho\ql}_\qg = \qp_\qg\eta^{\rho\mu}\Qg^{\qa\ql}_\rho+\qp_\rho\Qg^{\qa\ql}_\qg \eta^{\rho\mu}.
\]
Multiplying both sides with $\eta_{\qz\ql}\eta_{\qe\mu}$ and using 
\[
\eta^{\qa\qb}\qp_\qg V^{\mu\ql}_\qb = \qp_\qg\kk{\eta^{\rho\mu}\Qg^{\qa\ql}_\rho}-\qp_\qg\eta^{\qa\qb}\eta_{\rho\qb}\eta^{\qd\mu}\Qg^{\rho\ql}_\qd = \qp_\qg\kk{\eta^{\rho\mu}\Qg^{\qa\ql}_\rho}+A^\qa_{\qg\rho}\eta^{\qd\mu}\Qg^{\rho\ql}_\qd
\]
which follows from \eqref{BP}, we arrive at 
\begin{equation*}
\eta_{\qz\ql}\eta_{\qe\mu}\qp_\qg\kk{\eta^{\rho\mu}\Qg^{\qa\ql}_\rho}+\eta_{\qz\ql}A^\qa_{\qg\rho}\Qg^{\rho\ql}_\qe-\eta_{\qz\ql}A^\qa_{\rho\qe}\Qg^{\rho\ql}_\qg = -\eta_{\qz\ql}A^\rho_{\qe\qg}\Qg^{\qa\ql}_\rho+\eta_{\qz\ql}\qp_\qe\Qg^{\qa\ql}_\qg.
\end{equation*}
Using identities \eqref{DA} and \eqref{DB}, after a lengthy computation, we arrive at 
\begin{align*}
  0=&\, \qp_\qg\kk{L^\rho_\qz B^\qa_{\rho\qe}}-\qp_\qe\kk{L^\rho_\qz B^\qa_{\rho\qg}}+\qp_\qe L^\rho_\qz A^\qa_{\qg\rho}-\qp_\qg L^\rho_\qz A^\qa_{\rho\qe}\\
  &+L^\rho_\ql\Qd^\qa_{\rho\qg}A^\ql_{\qe\qz}-L^\rho_\ql \Qd^\qa_{\rho\qe}A^\ql_{\qg\qz}+L^\rho_\qz\Qd^\ql_{\rho\qe}A^\qa_{\qg\ql}+L^\rho_\qz A^\ql_{\rho\qe}A^\qa_{\qg\ql}\\
  &-L^\rho_\qz\Qd^\ql_{\rho\qg}A^\qa_{\ql\qe}-L^\rho_\qz A^\ql_{\rho\qg}A^\qa_{\ql\qe}.
\end{align*}
Furthermore, it follows that
\begin{align*}
  \qp_\qe L^\rho_\qz A^\qa_{\qg\rho}-\qp_\qg L^\rho_\qz A^\qa_{\rho\qe} &=\qp_\qe\kk{ L^\rho_\qz A^\qa_{\qg\rho}}-\qp_\qg \kk{L^\rho_\qz A^\qa_{\rho\qe}}- L^\rho_\qz\kk{\qp_\qe A^\qa_{\qg\rho}-\qp_\qg A^\qa_{\rho\qe}}\\
  &=\qp_\qe\kk{ L^\rho_\qz A^\qa_{\qg\rho}}-\qp_\qg \kk{L^\rho_\qz A^\qa_{\rho\qe}}- L^\rho_\qz\kk{A^\ql_{\qe\rho}A^\qa_{\qg\ql}-A^\ql_{\rho\qg}A^\qa_{\qe\ql}},
\end{align*}
where for the second equality we use the flatness of $\nabla$. Hence, using \eqref{BR} it follows that
\begin{align*}
  0=&\, \qp_\qg\kk{L^\rho_\qz \Qd^\qa_{\rho\qe}}-\qp_\qe\kk{L^\rho_\qz \Qd^\qa_{\rho\qg}}\\
  &+L^\rho_\ql\Qd^\qa_{\rho\qg}A^\ql_{\qe\qz}-L^\rho_\ql \Qd^\qa_{\rho\qe}A^\ql_{\qg\qz}+L^\rho_\qz\Qd^\ql_{\rho\qe}A^\qa_{\qg\ql}-L^\rho_\qz\Qd^\ql_{\rho\qg}A^\qa_{\ql\qe}\\
  =&\,\nabla_\qg\kk{L^\rho_\qe \Qd^\qa_{\rho\qz}}-\nabla_\qe\kk{L^\rho_\qg \Qd^\qa_{\rho\qz}}.
\end{align*} 
The theorem is proved.
\end{proof}

\begin{theorem}
Given the geometric data $(L,\nabla,\nabla^*)$ where $\nabla$ and $\nabla^*$ are flat torsionless affine connections and $L$ is a $(1,1)$-tensor field with vanishing Nijenhuis torsion, the  following facts are equivalent:
\begin{enumerate}
\item for any bundle isomorphism $\eta^\sharp$, the geometric data $(\eta^\sharp,\nabla)$ and $(L\eta^\sharp,\nabla^*)$ defines a generalised bi-Hamiltonian structure $(\diff{}{\qt_0},\diff{}{\qt_1})$  of hydrodynamic types on the open set where the operator 
  \[
  L_z = L-z I\colon TM\to TM 
  \]
  is invertible for any $z$; 
  \item the Gauss-Manin connection as defined by \eqref{GMnoindex} associated with  $(L,\nabla,\nabla^*)$ is flat;
  \item the pair of  differentials $(d_{\nabla},d_{L\nabla^*})$ define a differential bicomplex on the space of vector valued differential  forms;
  \item the conditions (2) and (3) in Proposition \ref{BY} are satisfied.
\end{enumerate}
\end{theorem}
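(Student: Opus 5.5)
The plan is to prove the cycle of implications by routing everything through condition (4), which is the most concrete one.

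\emph{Step 1: (1) implies (4).} This is Proposition \ref{BY}. If $(\eta^\sharp,\nabla)$ and $(L\eta^\sharp,\nabla^*)$ define a generalised bi-Hamiltonian structure, its associated geometric data is exactly $(L,\nabla,\nabla^*)$. Items (2) and (3) of Proposition \ref{BY} are then precisely the statements required in (4). Item (1) of that proposition, $d_\nabla L=d_{\nabla^*}L$, also holds; we keep it in reserve, since it is condition \eqref{AHE}.

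\emph{Step 2: (4) is equivalent to (2), and (2) is equivalent to (3).} Assuming \eqref{AHE} and $N_L=0$, Theorem \ref{BZ} says that flatness of the Gauss--Manin connection is equivalent to \eqref{flatnessGM}, which is item (3) of Proposition \ref{BY}. The equivalence of (2) and (3) is the result of \cite{AL_GM} quoted before Theorem \ref{BidifferentialTh}. By hypothesis $d_{\nabla}^2=0$ and $d_{L\nabla^*}^2=0$ already hold, since $\nabla,\nabla^*$ are flat and $N_L=0$. So the bicomplex condition reduces to anticommutation of the two differentials, and this is equivalent to vanishing of the Gauss--Manin curvature.

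The subtle point is that Theorem \ref{BZ} needs \eqref{AHE}, which is not listed in (4). I would show that \eqref{AHE} is itself equivalent to the symmetry $L^\rho_\qz\Qd^\qa_{\rho\qd}=L^\rho_\qd\Qd^\qa_{\rho\qz}$, exactly as at the end of the first part of the proof of Proposition \ref{BY}. I would then check that this symmetry is forced by (2) or (3). The cleanest way is the $z\to\infty$ expansion of the Gauss--Manin connection \eqref{GM}, where $z L_z^{-1}=-I-z^{-1}L-\dots$. The torsion of $\nabla^{GM}$ at order $z^{-1}$ is $L^\rho_\qa\Qd^\qg_{\rho\qb}-L^\rho_\qb\Qd^\qg_{\rho\qa}$. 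So if ``flat'' in (2) is read as flat and torsionless, as in Theorem \ref{GMth}, symmetry follows. For (3), I expect the same symmetry to appear as the degree-zero component of $d_\nabla d_{L\nabla^*}+d_{L\nabla^*}d_\nabla=0$ applied to functions.

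\emph{Step 3: (4) together with \eqref{AHE} implies (1).} This is the converse computation. Choose odd variables as in the proof of Proposition \ref{BY}: fix $\eta^\sharp$, take flat coordinates of the cotangent connection induced by $\nabla$ so that $\partial/\partial\qt_0$ has the form $\eta^{\qa\qb}\qth^1_\qb$, and write $\partial/\partial\qt_1$ from $(L\eta^\sharp,\nabla^*)$. By Theorem \ref{AQ}, each structure is separately Hamiltonian, because $\nabla$ and $\nabla^*$ are flat and torsionless. It remains to show $[\partial_{\qt_0},\partial_{\qt_1}]=0$. The proof of Proposition \ref{BY} showed that this bracket vanishes iff a finite list of coefficient identities holds:
\begin{enumerate}
\item $Q=0$ together with \eqref{BP};
\item \eqref{BQ}, equivalently \eqref{BR};
\item the identity from $\partial^3W^\qa/\partial v^\qg_x\partial\qth_\ql\partial\qth^1_\mu$, which reduces to \eqref{flatnessGM}.
\end{enumerate}
I would run that proof backwards. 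First, \eqref{BR} holds by \eqref{AHE}. Next, \eqref{BP} and \eqref{AL} together require the commutation $L^\qg_\qz\Qg^{\rho\ql}_\qg=L^\rho_\qg\Qg^{\qg\ql}_\qz$, which the proof showed is equivalent to $N_L=0$ given \eqref{BR}. Then $Q=0$ agrees with \eqref{AZ}, using that $L$ commutes with the $\Qg$'s. Finally, the last identity is \eqref{flatnessGM}.

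The main obstacle is twofold. First, one must verify that these components exhaust all the coefficients of $W^\qa$ and of the $\qth$-component of the bracket, including the $\qth\qth^2$ and $v_x v_x\qth\qth$ terms. Second, one must verify that the identities are reversible in this direction rather than merely necessary. I would handle the first by listing all monomials of super degree 2 and differential degree 2 and checking each against the identities above. Since the choice of $\eta^\sharp$ is inessential by Proposition \ref{AW}, the result holds for any $\eta^\sharp$.
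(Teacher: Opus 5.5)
\paragraph{Where the proposal breaks.} Your Step 1 is the paper's argument. Your remark that Theorem~\ref{BZ} needs \eqref{AHE}, which (4) does not contain, is a fair point that the paper passes over.

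The genuine gap is in Step 3. There you assert that the identity $L^\gamma_\zeta\Gamma^{\rho\lambda}_\gamma=L^\rho_\gamma\Gamma^{\gamma\lambda}_\zeta$ ``is equivalent to $N_L=0$ given \eqref{BR}''. Given \eqref{AL} for $\partial/\partial\tau_1$, this identity is exactly \eqref{BP}. The proof of Proposition~\ref{BY} only gives one direction. The identity has two free lower indices, and $N_L$ is obtained from it by antisymmetrising in them. Its symmetric part is controlled by none of $N_L=0$, \eqref{BR} or \eqref{flatnessGM}. For example, when $\nabla=\nabla^*$ the identity reads $\nabla_{LX}L=L\,\nabla_XL$ for all $X$. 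By contrast, for torsion-free $\nabla$ one has
\[
N_L(X,Y)=(\nabla_{LX}L-L\nabla_XL)Y-(\nabla_{LY}L-L\nabla_YL)X .
\]
So ``running the proof of Proposition~\ref{BY} backwards'' fails precisely at the reversibility you flagged as the main obstacle.

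\paragraph{A counterexample.} This cannot be patched, because (2), (3), (4) do not imply (1) as stated. Take $M=\mathbb{R}^2$, let $\nabla=\nabla^*$ be the standard flat connection, and set $L=cI+v^1\,\partial_1\otimes dv^2$ with $c\neq0$ constant.
\begin{enumerate}
\item Then $N_L=0$ and $\Delta=0$, so the Gauss--Manin connection is $\nabla$ itself.
\item \eqref{AHE}, \eqref{BR} and \eqref{flatnessGM} hold trivially.
\item In these coordinates $(d_\nabla,d_{L\nabla})$ is componentwise $(d,d_L)$, hence a bicomplex.
\end{enumerate}
With $\eta^{\alpha\beta}=\delta^{\alpha\beta}$ one gets
\[
\frac{\partial v^\alpha}{\partial\tau_0}=\theta^1_\alpha,\qquad \frac{\partial\theta_\alpha}{\partial\tau_0}=0,\qquad \frac{\partial v^\alpha}{\partial\tau_1}=\partial_x(L^\alpha_\beta\theta_\beta).
\]
The condition $(\partial/\partial\tau_1)^2=0$ then forces
\[
\frac{\partial\theta_1}{\partial\tau_1}=\theta_2\theta^1_1+\frac{v^1}{c}\,\theta_2\theta^1_2,\qquad \frac{\partial\theta_2}{\partial\tau_1}=0,\qquad \Bigl[\frac{\partial}{\partial\tau_0},\frac{\partial}{\partial\tau_1}\Bigr]v^1=\partial_x\Bigl(\frac{v^1}{c}\,\theta_2\theta^1_2\Bigr)\neq0 .
\]
Hence $(\partial/\partial\tau_1-z\,\partial/\partial\tau_0)^2\neq0$ for every $z\neq0$, although $L_z$ is invertible everywhere for $z\neq c$. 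Your commutation identity fails here: $L^\gamma_2\Gamma^{12}_\gamma=v^1\neq0=L^1_\gamma\Gamma^{\gamma2}_2$.

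\paragraph{The paper's argument has the same hole.} For (2)$\Rightarrow$(1), the paper reads off that the $v$-part of the pencil has geometric data $(L_z\eta^\sharp,\nabla^{GM})$ and invokes Theorem~\ref{AQ}. That theorem presupposes that the pencil's $\theta$-components satisfy \eqref{AL} and \eqref{AZ} for the pencil's own data. These components are the $z$-linear combinations of those of $\partial/\partial\tau_0$ and $\partial/\partial\tau_1$, and nothing guarantees this; the $z$-linear part of \eqref{AL} is exactly the commutation identity above. In the example, the structure that Theorem~\ref{AQ} attaches to $(L_z,\nabla)$ has coefficient $v^1/(c-z)$ where the pencil has $v^1/c$, so it is not the pencil.

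A correct version of (4)$\Rightarrow$(1) must add the full commutation identity, and the $z$-linear part of \eqref{AZ}, as hypotheses. The commutation identity does hold, for instance, for semisimple bi-flat F-manifolds (check it in canonical coordinates). It is not, however, a consequence of (2)--(4).
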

\begin{proof}
  The equivalence between (2) and (3) is given in \cite{AL_GM}. It follows from Theorem \ref{BZ} that (2) and (4) are equivalent, and from Proposition \ref{BY} that (1) implies (4).
  
  We need only to prove that (2) implies (1). Let us fix any local coordinates $(v^\qa;\qth_\qa)$ of $\hat M$. Denote by $A^\qa_{\qb\qg}$ and $B^\qa_{\qb\qg}$ the Christoffel symbols of connections $\nabla$ and $\nabla^*$ respectively. Let us construct the odd derivatives 
\[
\diff{v^\qa}{\qt_0} = \eta^{\qa\qb}\qth_\qb^1+\Qg^{\qa\qb}_\qg v^\qg_x\qth_\qb,\quad \diff{v^\qa}{\qt_1} = g^{\qa\qb}\qth_\qb^1+\tilde \Qg^{\qa\qb}_\qg v^\qg_x\qth_\qb,
\]
where we have (see \eqref{BC})
\[
\Qg^{\qa\qb}_\qg = \qp_\qg\eta^{\qa\qb}+\eta^{\mu\qb}A^\qa_{\qg\mu},\quad \tilde \Qg^{\qa\qb}_\qg = \qp_\qg g^{\qa\qb}+g^{\mu\qb}B^\qa_{\qg\mu},\quad L^\qa_\ql\eta^{\ql\qb} = g^{\qa\qb}.
\]
We omit the definitions of $\diff{\qth_\qa}{\qt_0}$ and $\diff{\qth_\qa}{\qt_1}$ which are uniquely determined by using the identities \eqref{AL} and \eqref{AZ}. It follows from the definition \eqref{AM} that the associated geometric data of $\diff{}{\qt_0}$ and $\diff{}{\qt_1}$ is $(\eta^\sharp,\nabla)$ and $(L\eta^\sharp,\nabla^*)$.
In order to prove that they define a generalised bi-Hamiltonian structure, we consider 
\begin{align*}
\diff{v^\qa}{\qt_1}-z\diff{v^\qa}{\qt_0} =&\,\left(g^{\qa\qb}-z\eta^{\qa\qb}\right)\qth_\qb^1+ \kk{\tilde\Qg^{\qa\qb}_\qg-z \Qg^{\qa\qb}_\qg} v^\qg_x\qth_\qb.
\end{align*}
Denote by $g_z^\sharp = g^\sharp-z\eta^\sharp = L_z\,\eta^\sharp$, then it is invertible on the open set where $L_z$ is invertible, and its inverse is denoted by $(g_{z\,\qa\qb})$ satisfying
\[
g_z^{\mu\qa}g_{z\,\mu\qb} = \qd^\qa_\qb,\quad g_{z\,\qa\mu}g_z^{\qb\mu} = \qd^\qb_\qa.
\]
It follows that 
\begin{align*}
  \tilde\Qg^{\qa\qb}_\qg-z \Qg^{\qa\qb}_\qg = &\,\qp_\qg g^{\qa\qb}_z+g^{\mu\qb}B^\qa_{\qg\mu}-z\eta^{\mu\qb}A^\qa_{\qg\mu}\\
  =&\,\qp_\qg g^{\qa\qb}_z+g^{\ql\qb}_zB^\qa_{\qg\ql}+z\eta^{\mu\qb}\kk{B^\qa_{\qg\mu}-A^\qa_{\qg\mu}}\\
  =&\,\qp_\qg g^{\qa\qb}_z+g^{\ql\qb}_z\kk{B^\qa_{\qg\ql}+z g_{z\,\ql\rho}\eta^{\mu\rho}\kk{B^\qa_{\qg\mu}-A^\qa_{\qg\mu}}}.
\end{align*}
Using the relation 
\[
g_z^{\qa\qb} = (L_z)^\qa_\rho \eta^{\rho\qb},
\]
we find 
\[
(L_z^{-1})^\mu_\ql =  g_{z\,\ql\rho}\eta^{\mu\rho},
\]
and it follows from the identity \eqref{BC} that the geometric data associated to $\diff{}{\qt_1}-z\diff{}{\qt_0}$ is given by $(g_z^\sharp,\nabla(z))$, where the Christoffel symbols of $\nabla(z)$ read 
\[
\Qg(z)^{\qa}_{\qg\ql} = B^\qa_{\qg\ql}+z (L_z^{-1})^\mu_\ql\kk{B^\qa_{\qg\mu}-A^\qa_{\qg\mu}}.
\]
By comparing this to the expression \eqref{GM}, we conclude that $\nabla(z)$ is exactly the Gauss-Manin connections defined by the geometric data $(L,\nabla,\nabla^*)$. We know from Theorem \ref{GMth} that $\nabla(z)$ is torsionless and flat, and therefore $(\diff{}{\qt_0},\diff{}{\qt_1})$ is a generalised bi-Hamiltonian structure of hydrodynamic type.
\end{proof}

The above theorem gives a clear description of the differential geometry of generalised bi-Hamiltonian structures of hydrodynamic type. Finally, we can relate them to bi-flat F-manifolds.
\begin{theorem}
  Let $(M,\nabla,\circ,e,\nabla^*,*,E)$ be a bi-flat F-manifold and fix an arbitrary bundle isomorphism $\eta^\sharp\colon T^*M\to TM$. Then the geometric data $(\eta^\sharp,\nabla)$ and $(g^\sharp,\nabla^*+\nu *)$ defines a generalised bi-Hamiltonian structure $(\diff{}{\qt_0},\diff{}{\qt_1})$  of hydrodynamic types on the open set where the operator 
  \[
  L_z = E\circ-z I\colon TM\to TM 
  \]
  is invertible for any $z$. Here the bundle morphism $g^\sharp$ is defined by 
  \[
  g^\sharp(\qo) = \kk{E\circ} \eta^\sharp(\qo),\quad\forall\,\qo\in\Qg(TM),
  \]
  and $\nu$ is an arbitrary constant. Moreover, the principal hierarchy is a generalised bi-Hamiltonian system with respect to $(\diff{}{\qt_0},\diff{}{\qt_1})$.
\end{theorem}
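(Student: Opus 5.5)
The proof splits into two parts. First we show that the data define a generalised bi-Hamiltonian structure, using the equivalence theorem just proved. Then we show that the principal hierarchy is bi-Hamiltonian, via Theorem \ref{AY} and the theorem characterising compatible flows as $[\partial/\partial\qt,H]$.

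\emph{Setting up the geometric data.} Put $L=E\circ$ and $\tilde\nabla^*=\nabla^*+\nu *$.

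For $\nu=0$, $\tilde\nabla^*=\nabla^*$ is flat and torsionless by definition. For general $\nu$, flatness and torsionlessness follow from the family property of the dual flat structure $(M,*,\nabla^*,E)$: since $*$ is commutative, $\nabla^*+\nu*$ is torsionless, and since the dual structure is a flat F-manifold, it is flat for every $\nu$. The connection $\nabla$ is flat and torsionless because $(M,\circ,\nabla,e)$ is a flat F-manifold.

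$L=E\circ$ has vanishing Nijenhuis torsion. This is the standard fact that $E\circ$ is Nijenhuis on an F-manifold with Euler field $\mathcal L_E\circ=\circ$; it is implicitly used in Theorem \ref{GMth}, which I will take as given from \cite{AL_GM}.

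\emph{First part: the bi-Hamiltonian structure.} Theorem \ref{GMth} states that the Gauss--Manin connections built from $(L,\nabla,\nabla^*+\nu*)$ are flat and torsionless wherever $L_z$ is invertible. This is condition (2) of the preceding theorem, and that theorem gives (2)$\Rightarrow$(1).

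Concretely, I would redo the construction in its proof rather than just cite it, because the equivalence theorem was stated for a Nijenhuis $L$. Take the geometric data $(\eta^\sharp,\nabla)$ and $(L\eta^\sharp,\tilde\nabla^*)=(g^\sharp,\tilde\nabla^*)$. The pencil $\partial/\partial\qt_1-z\,\partial/\partial\qt_0$ then has geometric data $(L_z\eta^\sharp,\nabla(z))$. By the computation of Christoffel symbols already done there, $\nabla(z)$ is exactly the Gauss--Manin connection \eqref{GM} with $B$ replaced by the symbols of $\nabla^*+\nu*$. Theorem \ref{GMth} (the extended family) makes $\nabla(z)$ flat and torsionless. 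Theorem \ref{AQ} then gives the generalised Hamiltonian property for every $z$.

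\emph{Second part: the principal hierarchy.} Each flow has tensor $X\circ$ with $d_\nabla(X\circ)=0$. Theorem \ref{BD} already gives Hamiltonianity with respect to $\partial/\partial\qt_0$, whose data are $(\eta^\sharp,\nabla)$.

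For $\partial/\partial\qt_1$, whose data are $(g^\sharp,\nabla^*+\nu*)$, Theorem \ref{AY} requires
\[
d_{\nabla^*+\nu*}(X\circ)=0 .
\]
Condition \eqref{comp_conn} gives $d_{\nabla^*}(X\circ)=d_\nabla(X\circ)=0$. It remains to handle the extra term, which is $\nu\,(Y*(X\circ Z)-Z*(X\circ Y))$. This vanishes because $*$ and $\circ$ are commutative and associative and $*$ is defined through $\circ$:
\[
Y*(X\circ Z)=(E\circ)^{-1}X\circ Y\circ Z,
\]
which is symmetric in $Y$ and $Z$.

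So each flow is compatible with both structures. By the theorem preceding Theorem \ref{BD}, each flow equals $[\partial/\partial\qt_0,\qo]=[\partial/\partial\qt_1,\qz]$ for suitable variational 1-forms $\qo$ and $\qz$.

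\emph{Main obstacle.} The delicate step is confirming that the $\nu$-deformed dual connection still satisfies the hypotheses: flatness of $\nabla^*+\nu*$, and flatness of the Gauss--Manin family built with it. I would rely directly on the second assertion of Theorem \ref{GMth} and on the standard flatness of the dual F-manifold pencil. If needed, I would double-check the compatibility \eqref{AHE} for the deformed connection using the symmetry computation above.
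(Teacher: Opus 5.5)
Your proposal is correct and follows essentially the same route as the paper. The first part is the implication (2)$\Rightarrow$(1) of the preceding equivalence theorem, fed with the $\nu$-extended flatness statement of Theorem~\ref{GMth}. The second part is the same computation: the extra term $\nu(E\circ)^{-1}\bigl(Y\circ X\circ Z-Z\circ X\circ Y\bigr)$ vanishes, so $d_{\nabla^*+\nu *}(X\circ)=d_{\nabla^*}(X\circ)=d_\nabla(X\circ)=0$ and Theorem~\ref{AY} applies. Your explicit appeal to the converse theorem (compatible flows are of the form $[\diff{}{\qt},H]$) only spells out a step the paper leaves implicit.
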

\begin{proof}
The first part of the theorem follows from the previous theorem taking into account that
 the curvature of Gauss-Manin connections associated with bi-flat F-manifolds vanishes.

For the principal hierarchy associated with $M$, we already know from the proof of Theorem \ref{BD} that a flow in the principal hierarchy corresponds to a vector field $X$ such that $d_\nabla(X\circ) = 0$, and it only remains to show that $d_{\nabla^*+\nu *}(X\circ) = 0$. Pick two vector fields $Y$ and $Z$, we conclude from the definition \eqref{AU} of the exterior covariant differential that  
\begin{align*}
  d_{\nabla^*+\nu *}(X\circ)(Y,Z) =&\, \nabla^*_Y(X\circ Z)+\nu Y*(X\circ Z)\\
  &-\nabla^*_Z(X\circ Y)-\nu Z*(X\circ Y)-X\circ([Y,Z])\\
  =&\, d_{\nabla^*}(X\circ)(Y,Z)+\nu (E\circ)^{-1}\kk{Y\circ(X\circ Z)}\\
  &-\nu (E\circ)^{-1}\kk{Z\circ(X\circ Y)}\\
  =&\,0,
\end{align*}
here we use the fact \eqref{comp_conn} that $d_{\nabla^*}(X\circ) =d_{\nabla}(X\circ) = 0 $ and use the definition \eqref{dual} of the dual product. The theorem is proved.
\end{proof}

This theorem generalises the well-known result of the bi-Hamiltonian property of the principal hierarchies of Dubrovin-Frobenius manifolds.

\subsection{Canonical form of geometric data for semisimple bi-flat F-manifolds}   
It is known (see \cite{ALmulti}) that semisimple bi-flat F-manifolds $(M,\nabla,\circ,e,\nabla^*,*,E)$, 
 in canonical  coordinates $(u^1,...,u^n)$,  are parametrized by the solutions of the system
\begin{eqnarray*}
&&\d_kA^i_{ij}=-A^i_{ij}A^i_{ik}+A^i_{ij}A^j_{jk}
+A^i_{ik}A^k_{kj}, \quad i\ne k\ne j\ne i,\\
&&e(A^i_{ij})=0,\qquad i\ne j\\
&&E(A^i_{ij})=-A^i_{ij},\qquad i\ne  j
\end{eqnarray*}
Given a solution of the above system the bi-flat structure is determined by the following formulas
\begin{itemize}
\item the connection $\nabla$ is defined by the Christoffel symbols $A^i_{ij}$ and
\begin{equation*}\label{naturalc}
\begin{split}
A^i_{jk}&=0\qquad\forall i\ne j\ne k \ne i\\
A^i_{jj}&:=-A^i_{ij}\qquad i\ne j\\
A^i_{ii}&:=-\sum_{l\ne i}A^i_{li},
\end{split}
\end{equation*} 
\item the dual connection $\nabla^*$ defined by is defined by the Christoffel symbols $B^i_{ij}=A^i_{ij}$ for $i\neq j$ and
\begin{equation*}\label{dualnabla}
\begin{split}
B^i_{jk}&:=0\qquad\forall i\ne j\ne k \ne i\\
B^i_{jj}&:=-\f{u^i}{u^j}A^i_{ij}\qquad i\ne j\\
B^i_{ii}&:=-\sum_{l\ne i}\f{u^l}{u^i}A^i_{li}-\f{1}{u^i},
\end{split}
\end{equation*}
\item the structure constants of  $\circ$ defined by $c^i_{jk}=\delta^i_j\delta^i_k$,
\item the structure constants  of $*$ defined by $c^{*i}_{jk}=\f{1}{u^i}\delta^i_j\delta^i_k$, 
\item the vector fields $e=\sum_{i=1}^n\frac{\partial}{\partial u^i}$ and $E=\sum_{i=1}^nu^i\frac{\partial}{\partial u^i}$. 
\end{itemize} 
In these coordinates, it is easy to see that we have $L=E\circ=\text{diag}(u^1,...,u^n)$.
\newline
\newline
We call the previous formulas for $(L,\nabla,\nabla^*)$ the canonical form of geometric data for generalised bi-Hamiltonian structures associated with bi-flat F-manifolds.

\section{Conclusions}
In this paper, we propose the notions of generalised (bi)-Hamiltonian structures as well as generalised (bi-)Hamiltonian integrable hierarchies. They can be naturally associated to (bi-)flat F-manifolds and enable a further study towards understanding the corresponding principal hierarchies. These notions admit clear differential geometric descriptions, and provide deeper insight to geometry of (bi-)flat F-manifolds.
We think that the results of this paper might have significant consequences in the theory of integrable systems, in particular in the  Dubrovin-Zhang axiomatic framework \cite{DZ01}.

We propose here some of the important problems regarding the generalised Hamiltonian structure.
\begin{itemize} 
\item The classification of generalised (bi-)Hamiltonian structures as well as generalised (bi-)Hamiltonian integrable hierarchies. The classification should be done by constructing and computing certain cohomology groups. The key difference from the classification of usual (bi-)Hamiltonian structure is that we have a much larger automorphism group in the generalised setting. Indeed, as we already see in this paper, we allow general transformations of odd variables which do not exist in the usual setting of Hamiltonian structures.
 \item The classification of generalised bi-Hamiltonian structure enables us to study bi-flat F-manifolds in higher genus. We expect that there exist axioms for ``generalised bi-Hamiltonian integrable hierarchies of topological type'', such that these hierarchies coincide with those considered in \cite{ABLR1}. This will provide a generalisation of Dubrovin-Zhang-Givental-Teleman equivalence for semisimple CohFT.
\item The double ramification hierarchy associated with a homogeneous CohFT is bi-Hamiltonian. An explicit formula for the bi-Hamiltonian structure (and thus the bi-Hamiltonian recursion) has been conjectured in \cite{BRS} and proved in \cite{BR}. In \cite{ABLR2} it was conjectured that a similar recursion holds true also in the case (which in general is not even Hamiltonian) of homogeneous F-CohFT. The results of the present paper suggest that this recursion might admit an interpretation in terms of the generalised bi-Hamiltonian structure obtained by deforming the generalised bi-Hamiltonian structure of hydrodynamic type associated with the underlying bi-flat F-manifold. 

\item The study of generalised Hamiltonian structure also has application in study of usual Hamiltonian integrable hierarchies. Indeed, in \cite{LWZ4}, we already see that generalised bi-Hamiltonian structures arise naturally when we consider linear reciprocal transformation of bi-Hamiltonian structures. We expect that the tools provided in this paper can be used to study the complete classification of bi-Hamiltonian integrable hierarchies under Miura-reciprocal transformations.
 \end{itemize}

\end{document}